\documentclass{article}
\usepackage{amsthm, amsmath, amssymb, amsfonts}
\usepackage{mathtools}

\mathtoolsset{showonlyrefs}

\usepackage{dsfont}
\usepackage{enumitem}
\usepackage{booktabs}
\usepackage{tikz}
\usetikzlibrary{calc,arrows.meta}
\usepackage{pgf-pie}
\usepackage{comment}

\usepackage{subcaption} 

\usepackage{graphicx} 
\usepackage[dvipsnames]{xcolor}
\usepackage{url}

\usepackage{hyperref}

\newtheorem{theorem}{Theorem}[section]
\newtheorem{definition}[theorem]{Definition}
\newtheorem{proposition}[theorem]{Proposition}
\newtheorem{lemma}[theorem]{Lemma}

\theoremstyle{definition}
\newtheorem{example}[theorem]{Example}
\newtheorem{remark}[theorem]{Remark}

\newtheorem{assumption}[theorem]{Assumption}

\newtheorem{problem}[theorem]{Problem}

\newtheorem*{proposition*}{Proposition}
\newtheorem*{theorem*}{Theorem}
\newtheorem*{criterion*}{Criterion}
\newtheorem{remark*}{Remark}

\DeclareMathOperator{\Id}{Id} 
\DeclareMathOperator{\supp}{supp}


\def\bx{\mathbf{x}}
\def\bz{\mathbf{z}}
\def\by{\mathbf{y}}

\def\bq{\mathbf{q}}
\def\bP{\mathbf{P}}

\def\R{\mathbb{R}}
\def\cC{\mathcal{C}}
\def\cS{\mathcal{S}}

\newcommand{\eqdef}{\mathrel{\overset{\mathrm{def}}{=}}}

\usepackage{authblk}

\title{Non-conservative optimal transport}
\date{\today}

\author{
  Gabriela Kov\'a\v{c}ov\'a\,\thanks{
    Department of Mathematics, University of California Los Angeles, CA, USA
    \newline \hspace*{1.45em} Email: \url{kovacova@ucla.edu} and \url{gmenz@math.ucla.edu}
  \vspace{0.5em} },
  \quad
  Georg Menz\,\protect\footnotemark[1],  
  \quad 
  Niket Patel\,\protect\footnotemark[1] \hspace{1pt}\thanks{Center for Data Science, New York University, NY, USA \newline \hspace*{1.45em} Email: \url{nnp5656@nyu.edu}}
}

\begin{document}

\maketitle

\begin{abstract}

    Motivated by optimal re-balancing of a portfolio, we formalize an optimal transport problem in which the transported mass is scaled by a  mass-change factor depending on the source and destination. This allows direct modeling of the creation or destruction of mass. We discuss applications and position the framework alongside unbalanced, entropic, and unnormalized optimal transport. The existence of optimal transport plans and strong duality are established. The existence of optimal maps are deduced in two central regimes, i.e.~perturbative mass-change and quadratic mass-loss. For $\ell_p$ costs we derive the analogue of the Benamou–Brenier dynamic formulation.
    \end{abstract}

\section{Introduction}

Optimal transport originates with Monge’s 1781 sand-moving problem~\cite{monge1781memoire} and Kantorovich’s 1942 relaxation~\cite{kantorovich1942translation}, and has since blossomed via the Benamou–Brenier dynamic formulation~\cite{benamou2000computational} into the geometry of Wasserstein spaces. Over the last two decades, advances in duality, regularity, and scalable algorithms have made OT a central tool across analysis, probability, economics, imaging, and machine learning. For a thorough introduction to optimal transport we refer to the standard references~\cite{ambrosio2008gradient,villani2008optimal,santambrogio2015otam}. In the classical Kantorovich problem of optimal transport the goal is to find a coupling~$\pi$ of two given probability measures~$\mu$ and~$\nu$ that minimizes a given cost functional
\begin{align}
\inf_{\pi}  \ \int_{\mathcal{X} \times \mathcal{Y} } c(x,y) \pi(dx,dy) .   \tag{CKP}   \label{equ:classic_katorovich_problem}
\end{align}
Coupling means that the first marginal of~$\pi$ is given by~$\mu$ and the second marginal by~$\nu$, i.e.~for all test functions~$\xi$ and~$\zeta$
\begin{align}\label{equ:marginal_condition_CKP}
\hspace{-0.2cm}
    \int \xi(x) \pi(dx, dy) = \int \xi(x)\mu(dx) \quad \mbox{and} \quad  \int \zeta(y) \pi(dx, dy) = \int \xi(y)\nu(dy).
\end{align}

One of the limitations of the original optimal transport formulation is its rigidity in the marginal constraint~\eqref{equ:marginal_condition_CKP}. In particular, this constraint entails that the transport conserves the mass. Many applications require more flexibility and, therefore, variants of optimal transport allowing flexibility in the marginal constraints were developed and successfully applied in recent years, see Section~\ref{sec:unbalanced_optimal_transport} for a more detailed discussion including references. 
Building on this trajectory, we develop the framework of non-conservative optimal transport. \\

The framework originated when the authors studied a problem from financial mathematics, namely how to rebalance a portfolio of assets in an optimal way. In Section~\ref{sec:rebalance} we introduce the rebalancing problem and show existence of optimal rebalancing trade via tools of financial mathematics. However, studying this problem revealed a deeper structure. When exchanging one asset for another, a small percentage of its value is lost to trading costs (e.g.~bid-ask spread and trading fees). Mass is lost as the value of the portfolio after rebalancing is smaller than the value before. The shape of the target distribution is still fixed as one strives for obtaining the correct distribution of wealth across assets.\\

Motivated by this example, non-conservative optimal transport generalizes the classical optimal transport problem by two tweaks. First, one introduces a mass-change factor~$m(x,y)$ that models whether a mass transported from~$x$ to~$y$ is gained ($m(x,y) >1$), conserved ($m(x,y) =1$) or destroyed ($m(x,y) <1$).  Second, by weakening the second marginal condition such that the shape of the distribution of it is given by~$\nu$, i.e.
\begin{align}\label{equ:marginal_condition_KP} \hspace{-0.2cm}
    \int \xi(x) \pi(dx, dy) = \int \xi(x)\mu(dx) \quad \mbox{and} \quad  \frac{\int \zeta(y) m(x,y)\pi(dx, dy)}{\int m(x,y) \pi(dx,dy)} = \int \zeta(y)\nu(dy)
\end{align}
for all test functions~$\xi$ and~$\zeta$. Classical optimal transport is recovered  when $m \equiv 1$. \\

This framework allows for simple and direct modeling of transports that do not conserve mass. In Section~\ref{sec:non_conservative_optimal_transport} we discuss more examples like logistics with spoilage/yield, biochemical conversions, as well as economic scenarios like the influence of tariffs on the optimal distribution of goods, or the influence of retention rates onto the optimal distribution of jobs in an organization. Up to the knowledge of the authors, this simple framework to describe non-conservative optimal transport is new and not contained in the existing literature, though it can be described as an optimization problem over a family of unbalanced optimal transport problems. We refer to Section~\ref{sec:literature} for more details.\\

As the framework of non-conservative optimal transport stays close to the original Kantorovitch optimal transport problem, we are able to  study fundamental questions (like duality, existence of optimal maps, dynamic formulations) with similar methods, and come to satisfying answers. Summarizing, the article provides:
    \begin{itemize}
     \item A \emph{detailed motivating example} from financial mathematics (see~Section~\ref{sec:rebalancing_general_graphs}).
      \item The \emph{formulation} of \emph{non-conservative optimal transport} with cost \(c(x,y)\) and efficiency \(m(x,y)\), reducing to classical OT when \(m\equiv 1\) (see~Section~\ref{sec:nckp}).  
      \item The \emph{existence} of optimal plans (see Section~\ref{sec:nckp}).
      \item The \emph{dual problem} and \emph{strong duality} under mild continuity assumptions (see Section~\ref{sec:ncot_dual_problem} and~\ref{sec:strong_duality}).  
      \item The \emph{existence} of \emph{optimal maps} in perturbative (\(m \approx 1\)) and quadratic-cost regimes (see Section~\ref{sec:existence_optimal_transport_map}).
      \item A \emph{generalized Benamou--Brenier formula}, that is well-posed without \emph{a-priori} vector-field regularity (see Section~\ref{sec:dynamic_formulation}).  
      \item The relation to other optimal transport framworks allowing for mass creation or destruction (see Section~\ref{sec:literature}).  
    \end{itemize}
The regularity of optimal transport maps in the non-conservative framework is not discussed in this article and remains an open problem.

\section{Motivating example: Optimal rebalancing of portfolios.} \label{sec:rebalancing_general_graphs}
We open this paper with a motivating example from financial mathematics. Consider an investor whose investment strategy (or a target) is to holds $100 \cdot \nu_i \%$ of their portfolio value in asset $i$. If their currently held position does not correspond to this target proportion, then they need to buy and sell assets to achieve their target. We refer to this as rebalancing the investor's portfolio. Since trading incurs costs, we aim to find optimal rebalancing trade.\\

We do not address the question what is the best target portfolio, when to rebalance, or how far to jump in presence of fixed fees. Constructing optimal portfolios is a classical topic in financial mathematics and many different approaches are used. For an overview article on how optimal transport is used for portfolio construction we refer to~\cite{wong2019information}, but want to clearly separate our motivating example from it as it discusses a different aspect of portfolio theory. The question of when to rebalance leads to tracking problems, and it is known that including transaction costs leads to a no-trade region in which the marginal benefit of re-balancing further is smaller than the additional trading costs (see for example~\cite{davis1990portfolio,holden2013optimal} and references therein). Many of the tracking problems can be formulated as an entropic optimal transport problem, where the second marginal condition is relaxed to allow for a trade-off between hitting the target precisely vs.~the cost involved (see e.g.~Section~\ref{sec:entropic_optimal_transport}). \\

The rebalancing problem only becomes non-trivial in the presence of fees or transaction costs. It is discussed in~\cite{kabanov2009markets} in the case of separable trading costs (e.g.~markets with a single num\'eraire). This manuscript extends optimal rebalancing to the case of general markets with multiple num\'eraires. This is a non-trivial tasks as trading costs are not separable. \\

In Subsection \ref{sec:rebalance} we introduce the mathematical framework and state the optimal rebalancing problem. Then, in Subsection~\ref{sec:rebalanceOT} we interpret the problem in the framework of optimal transport. This will motivate us to introduce a notion of non-conservative optimal transport.

\subsection{Optimal rebalancing problem} \label{sec:rebalance}
The framework in which we state the problem is a general market model represented as a graph. We take a static, snap-shot view here.
\begin{definition}
A \emph{financial market} is modeled by a weighted directed graph~$\mathcal{G} = (V, E, P)$, where
\begin{itemize}
    \item vertices $V = \{1, \dots, N\}$ represent assets on the market,
    \item a directed edge $(i, j) \in E$ indicates that it is possible to directly trade asset $i$ into asset $j$,
    \item and the weight $P: E \to (0,\infty)$ represents the price quoted in units of asset received per unit of asset sold. For an edge $(i, j) \in E$, one unit of asset $i$ can be exchanged for $P_{(i,j)}$ units of asset $j$ on the market.
\end{itemize}
\end{definition}
For convenience, we extend the weight to $P: V\times V \to [0,\infty)$ by setting $P_{(i,i)} := 1$ and $P_{(i,j)} := 0$ for $(i,j) \not\in E$ with $i \neq j$. A market with single num\'eraire, referred through-out as \emph{star-shaped market} due to the shape of the corresponding graph, will be considered as a special case. However, we do not assume existence of a single num\'eraire in general. This allows to consider portfolios held in assets traded across different international markets and currencies or various crypto currencies. We refer to Figure~\ref{fig:general_financial_market} for an illustration.

\begin{assumption}
We make the following assumptions about the market.
\begin{itemize}
    \item[A1] \textit{Connectedness:}  $\mathcal{G}$ is a connected graph. 
    \item[A2] \textit{No arbitrage:} For every directed cycle $\Gamma$ in the graph $\mathcal{G}$ it holds
    $$\prod_{e \in \Gamma} P_e \leq 1.$$
    \item[A3] \textit{Unlimited liquidity and no price impact:} Assets are liquid and there is no limit on number of units that can be traded. Trading does not impact prices.
    \item[A4] \textit{No transaction costs:} There are no fees for trading.
\end{itemize}
\end{assumption}

\begin{remark} 
\begin{itemize}
    \item[(i)] The market model we introduce is static. We are taking a snap-shot view of considering the market at a single time point with the aim of implementing the rebalancing in a single trade executed immediately. This, alongside the assumption of no price impact, is a reasonable approximation when considering small portfolios. To consider large portfolios, price impact needs to be taken into account and a dynamic market model will be needed.
    \item[(ii)] We assume that all fees are incorporated in the quoted prices. It is possible to model linear (per share) fees as part of the bid-ask spread, but not flat fees.
    \item[(iii)] If we start with one unit of asset $i$ and trade along a cycle $\Gamma$, we hold $\prod_{e \in \Gamma} P_e$ units of asset $i$ after the trade. In particular, for any pair of directly tradable assets $i$ and $j$ no arbitrage implies
    $$P_{(i,j)}  \leq \frac{1}{P_{(j,i)}}.$$
\end{itemize}
\end{remark}

\begin{example}\label{ex:star1}
Consider a market with a single num\'eraire (w.l.o.g.~asset $1$), where assets $=2, \dots, N$ can only be traded for the num\'eraire. We refer to this as a \emph{star-shaped market}. It is usual to quote prices in units of the num\'eraire as the bid and ask prices,
\[
P^b_i  = P_{(i, 1)} \quad \text{ and } \quad P^a_i = \frac{1}{P_{(1, i)}} \quad \text{ for assets } i = 2, \dots, N.
\] 
For completeness we set $P^b_1 = P^a_1 := 1$.
\end{example}

\begin{figure}
    \centering
    \begin{tikzpicture}[scale =0.8, >=stealth,
    hub/.style   ={circle,draw,thick,minimum size=14mm,font=\small},
    asset/.style ={circle,draw,thick,minimum size=4.5mm,inner sep=0pt},
    enum/.style  ={circle,draw,thick,fill=white,font=\tiny,inner sep=0pt,minimum size=3.6mm},
    edge/.style  ={thick}]

\coordinate (USD) at (-4, 0);
\coordinate (EUR) at ( 4, 0);
\coordinate (JPY) at ( 0,-4);

\coordinate (TOP)  at ( 0, 1.5);
\coordinate (MID)  at ( 0,-1.5);
\coordinate (LLOW) at (-3,-3);
\coordinate (RLOW) at ( 3,-3);

\foreach \name/\ang in {u1/135, u2/180,u3/225}
  \coordinate (\name) at ($(USD)+({1.6*cos(\ang)},{1.6*sin(\ang)})$);

\foreach \name/\ang in {e1/ 45,e2/  0,e3/-45}
  \coordinate (\name) at ($(EUR)+({1.6*cos(\ang)},{1.6*sin(\ang)})$);

\foreach \name/\ang in {j1/-150,j2/-90,j3/-30}
  \coordinate (\name) at ($(JPY)+({1.6*cos(\ang)},{1.6*sin(\ang)})$);

\node[hub] (U) at (USD) {USD};
\node[hub] (E) at (EUR) {EUR};
\node[hub] (J) at (JPY) {JPY};

\foreach \n in {TOP,MID,LLOW,RLOW,u1,u2,u3,e1,e2,e3,j1,j2,j3}
  \node[asset] (\n) at (\n) {};

\draw[edge] (U)--(J);
\draw[edge] (U)--(E);
\draw[edge] (E)--(J);
\draw[edge] (U)--(LLOW);
\draw[edge] (J)--(LLOW);
\draw[edge] (E)--(RLOW);
\draw[edge] (J)--(RLOW);
\draw[edge] (U)--(TOP);
\draw[edge] (E)--(TOP);

\foreach \cross in {MID}
 \foreach \hub in {U,E,J} \draw[edge] (\hub)--(\cross);

\foreach \p in {u1,u2,u3} \draw[edge] (U)--(\p);
\foreach \p in {e1,e2,e3} \draw[edge] (E)--(\p);
\foreach \p in {j1,j2,j3} \draw[edge] (J)--(\p);

\end{tikzpicture}

    \caption{Illustration of a general financial market without a single num\'eraire. The vertices USD, JPY and EUR denote currencies used a num\'eraires at national markets. Vertices connected to several num\'eraires denote cross-traded assets, i.e.~assets that are traded on several national markets.}
    \label{fig:general_financial_market}
\end{figure}
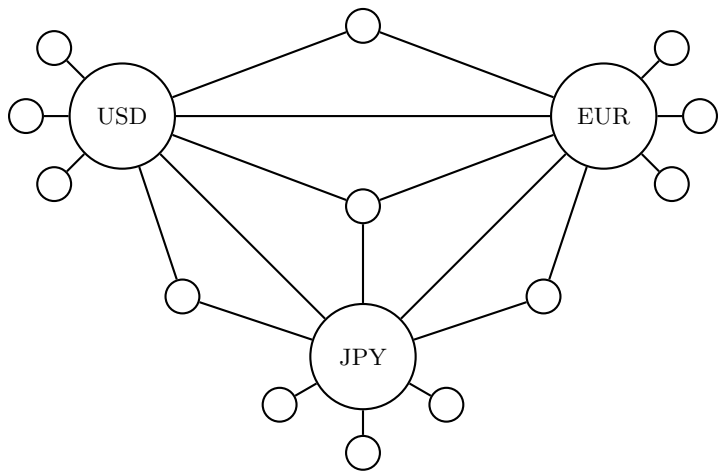

\begin{definition}
A \emph{portfolio} is given by a vector $\bx = (x_1, \dots, x_N)$, where $x_i$ denotes the number of units of the asset $i$ held in the portfolio.

An \emph{admissible trade} is a map $\xi: V \times V \to [0, \infty)$ satisfying $\xi_{(i, j)} = 0$ for $(i,j) \not\in E$. Here $\xi_{(i, j)}$ denotes the number of units of asset $i$ that are (directly) traded for asset $j$ on the market. $\Xi$ denotes the set of all admissible trades.
\end{definition}
An admissible trade $\xi$ changes the number of units of asset $i$ held by
\begin{align}
\label{eq:Deltax}
\Delta x_i (\xi) = \sum_{j=1}^N P_{(j, i)} \xi_{(j, i)} - \xi_{(i,j)}, \quad \quad i = 1, \dots, N.    
\end{align}

\begin{example}\label{ex:star2}
Consider an admissible trade on the \emph{star-shaped market}. A positive bid-ask spread ensures that a round-trip trade (buying and selling the same asset) results in a net loss, effectively 'burning money'. By imposing 
\begin{align}
\label{eq:star1}
    \xi_{(1,i)} \cdot \xi_{(i,1)} = 0 \quad \quad  i = 2, \cdots, N
\end{align}
we exclude round-trip trades. Then, a trade can be represented by change in the portfolio position $\Delta \bx$. Since under \eqref{eq:star1} it holds $(\Delta x_i)^+ = \frac{1}{P^a_i} \xi_{(1,i)}$ and $(\Delta x_i)^- = \xi_{(i,1)}$ for $i > 1$, \eqref{eq:Deltax} implies the \emph{self-financing condition}
\begin{align*}
\Delta x_1  = \sum_{i=2}^N P^b_i (\Delta x_i)^- - P^a_i (\Delta x_i)^+.  
\end{align*}
\end{example}

The aim is to rebalance the current portfolio $\bx$ in such a way that the \emph{target proportions} $\nu = (\nu_1, \dots, \nu_N)$ are achieved. That is, we want $100 \cdot \nu_i \%$ of portfolio value is held in asset $i$ after the rebalancing trade.\\

In order to measure (percentage of) portfolio value, the investor needs to decide on what they believe to be the (`true' or `fair') value of each asset. We assume that these values are consistent with the prices quoted on the market. Therefore, implicitly we are assuming the market prices are not misquoted. 
\begin{definition}
A relative price vector $\bq: V \to (0, \infty)$ is \emph{consistent} with the financial market $\mathcal{G} = (V, E, P)$ if it holds
\[
P_{(i,j)} \leq \frac{q_i}{q_j} \quad \quad \text{ for all } (i, j) \in E.
\]
\end{definition}

\begin{remark} 
\begin{itemize}
    \item[(i)] Consistent price vector is scale invariant\footnote{If $\bq$ is a consistent price vector, then $\lambda \cdot \bq$ is consistent price vector for all $\lambda > 0$.}. We can, therefore, turn any asset $i$ into a \emph{hypothetical num\'eraire} by imposing $q_i = 1$. In the sequel, asset $1$ will be designated as the hypothetical num\'eraire. 

    \item[(ii)] Consistent price vector $\bq$ defines a consistent pricing measure through $Q_{(i,j)} := \frac{q_i}{q_j}$ for all $i, j \in V$. \emph{Consistent pricing measure} is $Q: V \times V \to [0,\infty)$ satisfying
    \begin{itemize}
    \item[(p1)] no bid-ask spread: $Q_{(i,j)} = \frac{1}{Q_{(j, i)}}$  for all $(i,j)\in E$, 
    \item[(p2)] consistency: $P_{(i,j)} \leq Q_{(i,j)} \leq \frac{1}{P_{(j, i)}}$  for all $(i,j)\in E$,
    \item[(p3)] path independence: for all $i, j \in V$ and all directed walks $\Gamma_1, \Gamma_2$ connecting vertex $i$ to vertex $j$ it holds 
    \[
     \prod_{e \in \Gamma_1} Q_e = \prod_{e \in \Gamma_2} Q_e.   
    \]
    \end{itemize}
    \item[(iii)] Conversely, existence of a consistent pricing measure satisfying properties (p1)-(p3) implies existence of a consistence price vector.

    \item[(iv)] The decision maker on a market with a bid-ask spread could instead of a single (point) relative price of two asset consider a distribution over an interval to be a fair relative price.
\end{itemize}    
\end{remark}

\begin{example}\label{ex:star3}
Once more, consider the \emph{star-shaped market}, where the no arbitrage condition (A2) simplifies to
\[
P^b_i \leq P^a_i \quad \quad \quad  i = 2, \dots N.
\]
Any choice of $P^b_i \leq q_i \leq P^a_i$ leads to a consistent price vector, one such example is the mid price. Observe that in presence of a non-trivial bid-ask spread, the consistent price vector is not unique (despite fixed num\'eraire).
\end{example}

Before addressing the rebalancing problem, we note that no arbitrage is both necessary and sufficient for existence of consistent price vector. Proof of Proposition~\ref{prop:consistent-price} is given in Appendix~\ref{append:consistent-price}.
\begin{proposition}\label{prop:consistent-price}
    Let the financial market $\mathcal{G} = (V, E, P)$ satisfy Assumption A1 (connectedness). A consistent price vector exists if and only if the the market satisfies Assumption A2 of no arbitrage.
\end{proposition}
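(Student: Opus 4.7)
The statement is an if-and-only-if, so I would split the proof into two directions, with the forward direction (``consistent price vector $\Rightarrow$ no arbitrage'') being a short telescoping computation and the reverse direction (``no arbitrage $\Rightarrow$ existence'') being the substantive part, solved by a shortest-path construction. Throughout, I would work in logarithmic variables to turn the multiplicative conditions into additive ones.

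For the easy direction, suppose a consistent price vector $\bq$ exists and let $\Gamma = (i_1, i_2), (i_2, i_3), \ldots, (i_k, i_1)$ be a directed cycle. Applying the defining inequality $P_{(i_\ell, i_{\ell+1})} \le q_{i_\ell}/q_{i_{\ell+1}}$ on each edge and taking the product, the right-hand side telescopes to $1$, which gives exactly (A2).

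For the hard direction, set $w_{(i,j)} \eqdef -\log P_{(i,j)}$ on each edge and look for a ``potential'' $\phi: V \to \R$ satisfying $\phi_j - \phi_i \le w_{(i,j)}$ for every $(i,j) \in E$; then $q_i \eqdef e^{\phi_i}$ is the desired consistent price vector. In these logarithmic variables, (A2) reads $\sum_{e \in \Gamma} w_e \ge 0$ for every directed cycle, i.e.\ $\mathcal{G}$ has no negative-weight directed cycles. Fix a base vertex (take $v_0 = 1$, the hypothetical num\'eraire) and define
\begin{equation}
    \phi_j \eqdef \inf \Bigl\{ \sum_{e \in \gamma} w_e : \gamma \text{ is a directed walk from } v_0 \text{ to } j \Bigr\}.
\end{equation}
I would then verify two points: (i) $\phi_j$ is finite for every $j \in V$, and (ii) the inequality $\phi_j - \phi_i \le w_{(i,j)}$ holds for every $(i,j) \in E$, where the latter follows by concatenating any walk from $v_0$ to $i$ with the edge $(i,j)$ to obtain an admissible walk from $v_0$ to $j$ and then taking the infimum.

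The real work is in point (i). The infimum is over a non-empty set because of connectedness (assumption A1), so $\phi_j < +\infty$. For $\phi_j > -\infty$, I would argue by contradiction: if a sequence of walks from $v_0$ to $j$ had total weight tending to $-\infty$, one could extract arbitrarily long walks and, by a pigeonhole argument on the finite vertex set, decompose them into a simple path from $v_0$ to $j$ together with a collection of closed walks; the simple path contributes only finitely many bounded terms, so the total weight of the extracted closed walks must tend to $-\infty$, forcing at least one simple directed cycle of negative weight, contradicting (A2). The main technical subtlety I anticipate lies precisely here in the cycle-extraction step, and in making sure the argument is valid under the form of connectedness intended (if (A1) is only weak connectedness, one should additionally observe that Remark (iii) and the structure of realistic markets make the graph strongly connected, or alternatively restrict the construction to the directed reachable component and combine forward/backward walks using edges in both directions).
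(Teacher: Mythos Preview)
Your proposal is correct and follows essentially the same approach as the paper: the telescoping product for the easy direction, and the logarithmic transformation $w_{(i,j)} = -\log P_{(i,j)}$ followed by the shortest-path potential $\phi_j$ from a fixed base vertex for the hard direction, with $q_i = e^{\phi_i}$. The paper dispatches your point (i) by citing a textbook result (Cormen et al., Theorem~24.4) on well-definedness of shortest paths in the absence of negative cycles, whereas you sketch the pigeonhole cycle-extraction argument explicitly and also flag the strong-versus-weak connectedness issue, which the paper leaves implicit.
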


\begin{remark}
    The Bellman-Ford algorithm, see~\cite[Section 24.1]{CormenETAL09}, can be used to check whether the market $\mathcal{G}$ satisfies the no arbitrage condition and a consistent price vector exists.
\end{remark}

\begin{assumption}
In addition to A1-A4 we assume the following.
\begin{itemize}
    \item[A5] \textit{Long only portfolio:} Current portfolio $\bx$ consists only of long positions, $x_i \geq 0$ for $i=1, \dots, N$.
    \item[A6] \textit{Target proportions:} A vector $\nu$ satisfying $\sum_{i=1}^N \nu_i = 1$ and $\nu_i \geq 0$ for $i=1, \dots, N$ is given.
    \item[A7] \textit{Consistent price vector:} A consistent price vector $\bq$ is fixed. Without loss of generality, we impose $q_1 =1$ and treat asset $1$ as the hypothetical num\'eraire.
\end{itemize}
\end{assumption}

In what follows we use the fixed consistent price vector $\bq$ to compute value of  a portfolio expressed in units of the hypothetical num\'eraire. 
Consider the current portfolio $\bx$ and an admissible trade $\xi$. The value held in asset $i$ before and after the trade, respectively,  are
\begin{align*}
    q_i x_i \quad \text{ and }  \quad q_i \left( x_i +   \Delta x_i (\xi) \right). 
\end{align*}

Note that while we use the consistent prices $\bq$ to evaluate positions, trades are executed at the market prices $P$. Therefore, a trade 
decreases the value of the portfolio.  
We denote the \emph{cost of (rebalancing) trade $\xi$} by 
\begin{align*}
    C(\xi) &:=  - \sum_{i=1}^N q_i \Delta x_i(\xi).
\end{align*}

We aim to find a trade $\xi$ such that the value of portfolio $\bx + \Delta \bx (\xi)$ is distributed according to the target proportions $\nu$ and to do so at the lowest cost possible. For an illustration of the reabalancing problem we refer to Figure~\ref{fig:rebalancing}. This yields the following optimization problem.

\begin{problem}\label{prob:rebalance}
We identify the optimal rebalancing trade as the solution of 
\begin{align*}
\underset{\xi \in \Xi}{\text{minimize }} \quad  &C(\xi) & \\
\text{subject to } \quad &x_i + \Delta x_i (\xi) \geq 0, \quad \quad &i = 1, \dots, N,  \\
&\nu_i = \dfrac{q_i \left( x_i + \Delta x_i (\xi) \right)}{\sum\limits_{k=1}^N q_k \left( x_k + \Delta x_k (\xi) \right)} \quad \quad &i = 1, \dots, N. 
\end{align*}
We denote the feasible set of this problem by $\Xi(\nu)$.
\end{problem}

\begin{remark}
Since it holds
\begin{align*}
\sum\limits_{k=1}^N q_k \left( x_k + \Delta x_k (\xi) \right) = \sum\limits_{k=1}^N q_k  x_k - C(\xi), 
\end{align*}
minimizing the rebalancing cost is equivalent to maximizing the post-rebalancing value of the portfolio,
\begin{align*}
\underset{\xi \in \Xi(\nu)}{\text{arg min }} C(\xi) &= \underset{\xi \in \Xi(\nu)}{\text{arg max }} \sum\limits_{k=1}^N q_k \left( x_k + \Delta x_k (\xi) \right). 
\end{align*}
\end{remark}

\begin{theorem}\label{thm:OptRebalancing}
Let the assumptions A1-A7 hold. Solution to the optimal rebalancing Problem~\ref{prob:rebalance} exists.    
\end{theorem}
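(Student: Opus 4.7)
The plan is to view Problem~\ref{prob:rebalance} as a finite-dimensional linear program and run the direct method. Writing $y_i \eqdef x_i + \Delta x_i(\xi)$ and $V \eqdef \sum_k q_k x_k$, the identity $\sum_k q_k y_k = V - C(\xi)$ turns the ratio condition $\nu_i = q_i y_i / \sum_k q_k y_k$ into the linear equation $q_i y_i = \nu_i(V - C(\xi))$. Together with $\xi_{(i,j)} \geq 0$, $\xi_{(i,j)}=0$ for $(i,j)\notin E$, and $y_i \geq 0$, this exhibits $\Xi(\nu)$ as a polyhedron in $\R^{|E|}$ on which the cost $C(\xi) = \sum_{(i,j) \in E} \xi_{(i,j)} (q_i - q_j P_{(i,j)})$ is linear. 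Existence then reduces to checking non-emptiness of $\Xi(\nu)$, lower boundedness of $C$, and that minimizing sequences can be truncated to a bounded set.

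Lower boundedness of $C$ is immediate from the consistency of $\bq$ (Assumption~A7), which forces each coefficient $q_i - q_j P_{(i,j)}$ to be non-negative; hence $C(\xi) \geq 0$ on all of $\Xi$. Non-emptiness I would establish by an explicit construction using Assumption~A1: route the full position of every asset to the hypothetical num\'eraire along directed paths in $\mathcal{G}$, then redistribute outward to each asset $i$ with $\nu_i > 0$ along directed paths from the num\'eraire. A common rescaling of the outgoing flows places the post-trade vector on the ray $\{y_i = \lambda \nu_i / q_i\}_i$ while maintaining $y_i \geq 0$ throughout, producing a feasible $\xi \in \Xi(\nu)$.

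The delicate step is compactness, since a minimizing sequence $\xi^{(n)} \in \Xi(\nu)$ need not be bounded: trading around any cycle $\Gamma$ on which $q_i = q_j P_{(i,j)}$ holds for every edge is cost-free and lies in the recession cone of both $\Xi(\nu)$ and the kernel of $C$. To handle this I would invoke the classical flow decomposition: every $\xi \in \Xi$ splits as $\xi = \tilde\xi + \eta$ with $\tilde\xi$ acyclic and $\eta$ a non-negative combination of simple cycle flows, so that $\Delta\bx(\xi) = \Delta\bx(\tilde\xi)$ and $C(\tilde\xi) \leq C(\xi)$. Thus $\tilde\xi^{(n)}$ remains a minimizing sequence in $\Xi(\nu)$, and acyclic flows realizing a fixed net displacement are bounded a~priori by a constant multiple of $|\Delta\bx|$ on a graph with $N$ vertices. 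Bolzano--Weierstrass together with closedness of $\Xi(\nu)$ then furnishes a convergent subsequence whose limit is the required minimizer. The main obstacle will be verifying rigorously that the recession cone of $\Xi(\nu)$ is indeed generated by cost-free cycles, so that the truncation $\xi^{(n)} \mapsto \tilde\xi^{(n)}$ preserves both feasibility and the objective value.
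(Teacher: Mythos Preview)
Your overall strategy matches the paper's direct proof (Appendix~\ref{append:existenceRebalancing}): recognize Problem~\ref{prob:rebalance} as a linear program, establish $C(\xi)\ge 0$ from the consistency of $\bq$, and construct a feasible trade by routing everything through the hypothetical num\'eraire. Your feasibility construction is in fact slightly more direct than the paper's Farkas-lemma argument for the star-shaped case; once all value sits at asset~$1$, sending out $\alpha_i=\lambda\nu_i/(q_iP_{1\to i})$ with a single $\lambda>0$ solves the proportion constraints explicitly.

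Where you diverge is the final existence step. The paper simply invokes the standard fact that a feasible linear program whose objective is bounded below attains its infimum, and stops. You instead attempt a hand-rolled compactness argument via flow decomposition, and this contains a genuine gap: the claim that the classical decomposition $\xi=\tilde\xi+\eta$ into an acyclic part plus simple cycle flows satisfies $\Delta\bx(\tilde\xi)=\Delta\bx(\xi)$ is false in this setting. A unit cycle flow on $\Gamma=((v_1,v_2),\dots,(v_M,v_1))$ produces $\Delta x_{v_k}(\eta)=P_{(v_{k-1},v_k)}-1$ at each vertex, generically nonzero because of the price multipliers in~\eqref{eq:Deltax}. Hence stripping off graph-theoretic cycles need not preserve membership in $\Xi(\nu)$, and the truncated sequence $\tilde\xi^{(n)}$ may fail to be feasible. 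You correctly flag the recession-cone description as the crux, and indeed it is not the cycle space of $\mathcal{G}$: recession directions of $\Xi(\nu)$ satisfy $q_i\,\Delta x_i(d)=-\nu_i\,C(d)$ with $d\ge 0$, a different polyhedral cone altogether. The cleanest fix is the paper's---since feasibility and lower boundedness are already in hand, cite the LP existence theorem and drop the bespoke compactness argument.
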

In the next subsection, we show that the rebalancing problem corresponds to a \emph{non-conservative optimal transport}. Theorem~\ref{thm:OptRebalancing} will, therefore, follow from existence of an optimal solution of a non-conservative OT, which will be discussed in Section~\ref{sec:non_conservative_optimal_transport}, see Theorem~\ref{the:solution_KP}. An alternative, direct proof of Theorem~\ref{thm:OptRebalancing} is provided in Appendix~\ref{append:existenceRebalancing} through a series of partial results.

\begin{figure}
    \centering

    \begin{tikzpicture}[scale=0.75, transform shape]
  \node (leftPie) at (0,0) {
    \begin{tikzpicture}
      \pie[
        text=inside,
        radius=2,         
        sum=100,
        color={red!70,blue!70,green!70,yellow!70}
      ]{
        30/GLD,
        30/BTC,
        20/SPY,
        20/QQQ
      }
    \end{tikzpicture}
  };

  \node (rightPie) at (8,0) {
    \begin{tikzpicture}
      \pie[
        text=inside,
        radius=2,         
        sum=100,
        color={red!70,blue!70,green!70,yellow!70}
      ]{
        25/GLD,
        25/BTC,
        25/SPY,
        25/QQQ
      }
    \end{tikzpicture}
    
  };

  \draw[
    ->, thick,
    shorten >=2mm,    
    shorten <=2mm     
  ]
    (leftPie.east) -- node[above, font=\bfseries]{REBALANCING} (rightPie.west);

  \node[anchor=north, yshift=-2mm] at (leftPie.south)  {Source measure \(\mu\)};
  \node[anchor=north, yshift=-2mm] at (rightPie.south) {Target measure \(\nu\)};
\end{tikzpicture}

    \caption{Illustration of the rebalancing process. The goal is to sell and buy assets such that the wealth of the new portfolio is distributed according to the target proportions (or measure)~$\nu$. }
    \label{fig:rebalancing}
\end{figure}
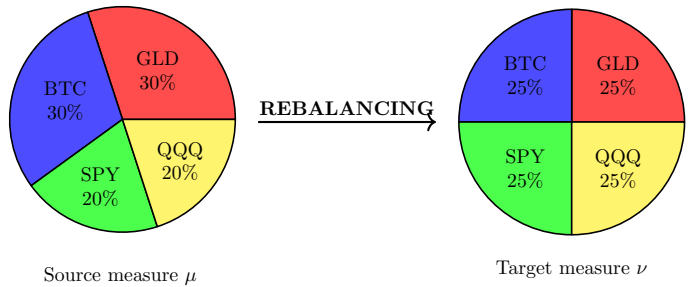

\begin{remark}[Extensions of optimal re-balancing]
Solving the optimal re-balancing Problem~\ref{prob:rebalance} is only a first (theoretical) step. When trading larger quantities, price impact cannot be ignored and leads to a non-linear optimization problem that is either convex or nonconvex depending on how price impact is modeled. 

As known from optimal execution problems, re-balancing a portfolio all at once also might not be optimal as it might be beneficial to spread out trades over time to reduce the price impact. This would require the modeling of dynamic correlated asset prices and price impacts. Optimal rebalancing can then be approached via the dynamic formulation of optimal transport (cf.~Section~\ref{sec:dynamic_formulation}). Here, the portfolio is rebalanced over one unit of time~$0 \leq t \leq 1$ and the optimal transport is encoded via a vector field~$v(t,x)$ describing the infinitesimal rate of trading at time~$t$. Another important question is how often and when one should re-balance (see for example~\cite{morton1995optimal,holden2013optimal}), leading to a tracking problem, where the risk of holding the wrong portfolio needs to be quantified.
\end{remark}

\subsection{Rebalancing and optimal transport} \label{sec:rebalanceOT}

Let us now show that the optimal portfolio rebalancing problem is closely connected to optimal transport (OT) on a discrete probability space $\Omega = V = \{1, \dots, N\}$. In rebalancing, we transport outstanding distribution of portfolio value into target distribution while minimizing trading costs. The main difference to the classical OT framework is that trades decrease overall portfolio value while in classical OT the overall mass is conserved. This will motivate a transport problem where mass is lost during transport. We call this a \emph{non-conservative optimal transport} problem.\\

Given the current portfolio position $\bx$ and investor's consistent price vector $\bq$, the proportion of current portfolio value held in asset $i$ is
\[
\mu_i = \frac{q_i \cdot x_i}{\sum_{k=1}^N q_k \cdot x_k} \quad \quad \text{ for } i = 1, \dots, N.
\]
The current proportions $\mu = (\mu_1, \dots, \mu_N)$ as well as the target proportions $\nu = (\nu_1, \dots, \nu_N)$ are probability distributions on $\Omega$. We can, therefore, consider transport plans $\pi: \Omega \times \Omega \to [0,1]$ between measures $\mu$ and $\nu$. The value $\pi(i,j)$ should represent \emph{proportion of portfolio's value} that should be transported (traded) from asset $i$ to asset $j$. The cost functional $c: \Omega \times \Omega \to [0, \infty]$ capturing the proportional loss of portfolio value when exchanging (trading) assets is given by
\begin{align}
    c(i,j) = \begin{cases}
    0, & \text{ if } i = j, \\
    1 - \frac{q_j}{q_i} P_{(i,j)}, & \text{ if } (i, j) \in E, \\
    \infty, & \text{ else}.
    \end{cases}
\end{align}

\begin{remark}
Let us relate transport plan $\pi$ and cost $c$ to trade $\xi$ and its cost $C(\xi)$. In view of~\eqref{eq:Deltax}, the cost of a trade equals
\begin{align} \label{eq:costC}
\begin{split}
C(\xi) &= \sum\limits_{i=1}^N q_i \sum\limits_{j=1}^N  (\xi_{(i, j)} - P_{(j,i)} \xi_{(j,i)}) = \sum\limits_{(i,j) \in E} (q_i - q_j P_{(i,j)}) \xi_{(i,j)}.
\end{split}
\end{align} 
Since the transport plan $\pi$ replace trade $\xi$ as a variable, (optimal) $\pi(i,j)$ should be proportional to (optimal) $q_i \xi_{(i,j)}$. Consequently, for $(i,j) \in E$ the cost associated with $\pi(i,j)$ shall be $c(i,j) = \frac{1}{q_i} (q_i - q_j P_{(i,j)})$.  Holding (not trading) asset $i$ does not incure cost, so $c(i,i) = 0$ and by imposing infinite cost $c(i,j) = \infty$ on a pair of assets that cannot be directly traded we enforce that optimal $\pi$ shall correspond to an admissible trade.
\end{remark}

Let us consider the (classical) optimal transport problem with cost $c$.
\begin{problem} \label{prob:Rebalance-OT1}
We consider the problem 
\begin{align*}
    \min_{\pi \in \Gamma(\mu, \nu)} \sum_{i,j =1}^N c(i,j) \cdot  \pi(i,j),
\end{align*}
 where 
\begin{align*}
    \Gamma(\mu, \nu) : = \left\{ \pi: \Omega \times \Omega \to [0,1] \ : \ \sum_{j=1}^N \pi(i,j) = \mu_i \quad \mbox{and} \quad \sum_{i=1}^N \pi(i,j) = \nu_j  \right\}.
\end{align*}
\end{problem}

It is known that an optimal solution $\tilde{\pi}$ of the optimal transport Problem~\ref{prob:Rebalance-OT1} exists. Using the optimal transport plan $\tilde{\pi}$, we can construct a trade $\xi^{\tilde{\pi}}$ as
\[
\xi^{\tilde{\pi}}_{(i,j)} := \frac{\sum_{k=1}^N q_k x_k}{q_i} \cdot \tilde{\pi}(i, j) \quad \quad \text{ for } i \neq j.
\]
In light of graph $\mathcal{G}$ being connected (assumption A1), problem admits a feasible solution with a finite objective value. Therefore, the optimal transport plan $\tilde{\pi}$ satisfies $\tilde{\pi}(i, j) = 0$ for $(i,j) \not\in E$ with $i \neq j$ and, consequently, $\xi^{\tilde{\pi}} \in \Xi$ is an admissible trade.

However, one can show that for an asset $i$ it holds
\[
x_i + \Delta x_i (\xi^{\tilde{\pi}}) = \frac{\sum_{k=1}^N q_k x_k}{q_i} \cdot \sum_{k=1}^N \frac{q_k}{q_i} P_{(i,k)} \tilde{\pi} (k, i).
\]
Therefore, in presence of a non-trivial bid-ask spread, $\bx + \Delta \bx(\xi^{\tilde{\pi}})$ does not lead to the desired target proportions $\nu$ and $\xi^{\tilde{\pi}}$ is not a feasible rebalancing trade. This is due to $\Gamma(\mu, \nu)$ not accounting for loss in value resulting from transport (trade) from asset $i$ to asset $j$. 

To formulate a transport problem corresponding to the optimal portfolio rebalancing, we need to account for losses of mass during the transport. For this purpose, we introduce mass-change function $m: \Omega \times \Omega \to [0, \infty]$ given by 
\[
m (i,j) := \frac{q_j}{q_i} P_{(i,j)}.
\]
\begin{remark}
On the market, one unit of asset $i$ (with value $q_i$) can be traded for $P_{(i,j)}$ units of asset $j$ (with value $q_j P_{(i,j)}$). Therefore, when trading along an edge $(i,j) \in E$, a proportion $m (i,j) := \frac{q_j}{q_i} P_{(i,j)}$ of value is preserved.
\end{remark}

\begin{problem} \label{prob:Rebalance-OT2}
Now, we consider problem
\begin{align*}
    \min_{\pi \in \Gamma_m (\mu, \nu)} \sum_{i,j =1}^N c(i,j) \cdot  \pi(i,j),
\end{align*}
 where 
\begin{align*}
    \Gamma_m (\mu, \nu) : = \left\{ \pi: \Omega \times \Omega \to [0,1] \ : \ \sum_{j=1}^N \pi(i,j) = \mu_i \quad \mbox{and} \quad \dfrac{\sum\limits_{i=1}^N m(i,j)\pi(i,j)}{\sum\limits_{k=1}^N\sum\limits_{i=1}^N m(i,k)\pi(i,k)} = \nu_j  \right\}.
\end{align*}
\end{problem}

Problem~\ref{prob:Rebalance-OT2} is a modification of optimal transport problem with mass being lost during the transport, that is a non-conservative optimal transport. Section~\ref{sec:non_conservative_optimal_transport} studies the general case of a non-conservative optimal transport problem. Among other results, existence of an optimal solution will be proven in Theorem~\ref{the:solution_KP}. To conclude this subsection, we illustrate that the optimal rebalancing Problem~\ref{prob:rebalance} and the modified transport Problem~\ref{prob:Rebalance-OT2} are equivalent.

\begin{proposition} \label{prop:rebalance2OT}
Let Assumptions A1-A7 hold and denote $v:= \sum_{k=1}^N q_k x_k$. Trade $\xi \in \Xi(\nu)$ is an optimal solution of Problem~\ref{prob:rebalance} if and only if $\pi^\xi$ given by
\begin{align*}
\pi^\xi (i,j) = \begin{cases}
    \frac{q_i}{v} \xi_{(i,j)} & \text{ if } i \neq j, \\
    \frac{q_i}{v} \left( x_i - \sum_{k \neq i} \xi_{(i,k)} \right) & \text{ if } i = j,
\end{cases}    
\end{align*}
is an optimal solution of Problem~\ref{prob:Rebalance-OT2}. 
\end{proposition}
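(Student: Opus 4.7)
The plan is to exhibit a natural bijection $\xi \leftrightarrow \pi^\xi$ between the feasible set $\Xi(\nu)$ of Problem~\ref{prob:rebalance} and the feasible set $\Gamma_m(\mu,\nu)$ of Problem~\ref{prob:Rebalance-OT2} such that the two objective functionals agree up to the positive scalar factor $v = \sum_k q_k x_k$. Because $v$ does not depend on $\xi$, optimality will then transfer in both directions and the claimed equivalence follows.

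First, I will verify the marginals. A direct summation gives $\sum_{j}\pi^\xi(i,j) = q_i x_i/v = \mu_i$, so the first marginal of $\pi^\xi$ is $\mu$. For the weighted second marginal, using $m(i,i)=1$ and $m(i,j) = q_jP_{(i,j)}/q_i$ for $i\neq j$, the definition of $\Delta x_j(\xi)$ combined with the off-diagonal/diagonal contributions telescopes to
\[
\sum_{i=1}^N m(i,j)\,\pi^\xi(i,j) \;=\; \frac{q_j\bigl(x_j+\Delta x_j(\xi)\bigr)}{v},
\]
so the target-proportion constraint in Problem~\ref{prob:rebalance} is equivalent to the ratio constraint defining $\Gamma_m(\mu,\nu)$. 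For the inverse direction I would set $\xi^\pi_{(i,j)} := (v/q_i)\,\pi(i,j)$ for $i\neq j$; admissibility follows because $c(i,j) = \infty$ for $(i,j)\notin E$ with $i \neq j$ forces any feasible $\pi$ of finite cost to be supported on edges, and the analogous computation yields $x_i + \Delta x_i(\xi^\pi) = (v/q_i)\sum_k m(k,i)\pi(k,i) \geq 0$ together with the correct target proportions. Substituting one map into the other recovers the original trade or plan on off-diagonal entries, establishing the bijection.

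Next, I will identify the objectives. Since $c(i,i)=0$, only off-diagonal terms contribute, and substituting the definitions gives
\[
c(i,j)\,\pi^\xi(i,j) \;=\; \frac{1}{v}\bigl(q_i - q_j P_{(i,j)}\bigr)\,\xi_{(i,j)} \quad\text{for } i\neq j,\; (i,j)\in E.
\]
Summing and invoking the identity~\eqref{eq:costC} for $C(\xi)$ yields $\sum_{i,j}c(i,j)\,\pi^\xi(i,j) = C(\xi)/v$, which combined with the bijection proves the proposition.

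The one technical subtlety — and where I expect the only real obstacle — is that a feasible trade $\xi \in \Xi(\nu)$ might implement a ``round trip'' for which $\pi^\xi(i,i) = (q_i/v)\bigl(x_i - \sum_{k\neq i}\xi_{(i,k)}\bigr)$ becomes negative, so $\pi^\xi$ formally leaves the non-negative cone $\Gamma_m(\mu,\nu)$. Assumption A2 (no arbitrage) rescues the argument: every such round trip decomposes along directed cycles of total weight $\leq 1$, so cancelling it weakly decreases $C(\xi)$ while leaving $\Delta \bx(\xi)$ unchanged. Any optimal trade can therefore be chosen round-trip free, and for such trades the bijection above is well-defined and matches optimal solutions on both sides. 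Cleanly formalizing this reduction is the only non-routine step; everything else is bookkeeping on the formulas for $\pi^\xi$ and $C(\xi)$.
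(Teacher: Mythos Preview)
Your approach is essentially the paper's: it packages the two maps $\xi \mapsto \pi^\xi$ and $\pi \mapsto \xi^\pi$ into a preparatory lemma (Lemma~\ref{lemma:rebalance2OT}), verifies the marginal identities exactly as you outline, and then invokes~\eqref{eq:costC} to match the objectives up to the positive scalar $v$; finiteness of the optima comes from feasibility (Proposition~\ref{prop:existence}) together with the lower bound $C(\xi)\ge 0$.

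The one place you go further than the paper is the non-negativity of the diagonal entry $\pi^\xi(i,i)$. The paper's Lemma~\ref{lemma:rebalance2OT}(a) simply asserts $\pi^\xi \in \Gamma_m(\mu,\nu)$ for every feasible $\xi$ and leaves the details ``as an exercise'', but as you note a feasible trade that sells more of asset $i$ than initially held (compensated by inflows along cycles) produces $\pi^\xi(i,i)<0$. Your no-arbitrage reduction---cancel cycles to obtain a round-trip-free trade with the same $\Delta\bx$ and no larger cost---is the natural patch; note, though, that strictly speaking the proposition as stated is about a \emph{given} optimal $\xi$, so in degenerate zero-spread cases where round-trip optimal trades exist the iff can fail for those particular representatives. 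That caveat aside, your argument and the paper's coincide.
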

Proof of Proposition~\ref{prop:rebalance2OT} is given in Appendix~\ref{append:rebalance2OT} following a partial result.

\section{Non-conservative optimal transport}\label{sec:non_conservative_optimal_transport}

This section develops the framework and theory of non-conservative optimal transport. In Section~\ref{sec:nckp} we propose and solve the non-conservative Kantorovich problem and discuss applications. In Section~\ref{sec:ncot_dual_problem}, we develop and solve the dual-problem. Strong duality is deduced in Section~\ref{sec:strong_duality}. The existence of optimal transport maps is discussed in Section~\ref{sec:existence_optimal_transport_map} and the dynamic formulation is treated in Section~\ref{sec:dynamic_formulation}.

\subsection{The non-conservative Kantorovich Problem.}\label{sec:nckp}

Let us consider two complete, separable metric spaces~$\mathcal{X}$ and~$\mathcal{Y}$.  The mass-change factor~$m: \mathcal{X} \times \mathcal{Y} \to (0, \infty)$ models the percentage of mass that is lost (if $m(x,y) \leq 1$) or gained (if $m(x,y) >1$) when transporting some mass from~$x$ to $y$. The loss or gain of mass is \textit{proportional} to the amount of transported mass~$\pi(x,y)$. The mass-change factor~$m(x,y)$  depends only  on the starting point~$x$ and endpoint~$y$, and not on other factors like the amount of mass transported, the route or the velocity of the transport. 

We consider a source probability measure~$\mu \in \mathcal{P}(\mathcal{X})$ and transport its mass according to a transport plan~$\pi \in \mathcal{P(} \mathcal{X} \times \mathcal{Y})$. As mass is lost or gained during the transport, the transported measure is distributed according to
\begin{align*}
  \mathcal{B}(\mathcal{Y}) \ni B \mapsto  \int_{\mathcal{X} \times B} m(x,y) \pi(dx,dy),
\end{align*}
which generically is not a probability measure. However, we require that the remaining mass after transport will be distributed \textit{proportionally} to a given target probability measure~$\nu \in \mathcal{P}(\mathcal{Y})$. Hence, the set of non-conservative transport plans w.r.t.~source measure~$\mu$ and target measure~$\nu$ is given by
\begin{align}
\begin{split}\label{equ:def_gamma_m}
\Gamma_m(\mu, \nu)  = \Big\{  & \pi \in \mathcal{P} (\mathcal{X} \times \mathcal {Y} ) \ :    \  \forall A \in \mathcal{B}(\mathcal{X}) \int_{ A \times \mathcal{Y}} \pi(dx,dy) = \mu(A),   \\ 
& \quad \quad \quad   \ \forall B \in \mathcal{B}(\mathcal{Y}) \  \frac{\int_{\mathcal{X} \times B} m(x,y) \pi(dx,dy)}{\int_{\mathcal{X} \times \mathcal{Y}} m(x,y)\pi(dx,dy)} =  \nu(B)  \Big\}. 
\end{split}
\end{align}
This means that the first marginal is distributed as~$\mu$, i.e.~$(\mbox{Proj}_1)_{\sharp} \pi = \mu$. The second marginal satisfies the relation~$(\mbox{Proj}_2)_{\sharp} (m \pi) = \left( \int m \ d \pi  \right)  \cdot \nu$.

\begin{definition}[Kantorovich problem in non-conservative optimal transport] \label{lossyOTplans}
 Given a cost function~$c: \mathcal X \times \mathcal{Y} \to \mathbb{R}$ and  a mass-change factor~$m: \mathcal{X} \times \mathcal{Y} \to \mathbb{R}$, optimize the expression
\begin{align}
\inf_{\pi \in \Gamma_m(\mu, \nu)}  \ \int_{\mathcal{X} \times \mathcal{Y} } c(x,y) \pi(dx,dy) .   \tag{KP}   \label{equ:katorovich_problem}
\end{align}
\end{definition}
\begin{remark}[Feasibility of \eqref{equ:katorovich_problem}]
    Under general non-degeneracy assumptions the set~$\Gamma_{m}(\mu, \nu)$ is non-empty. More precisely, if the measure~$\tilde \nu$ given by 
    \begin{align*}
        \tilde \nu(B) := \frac{1}{Z}\int_{B} \frac{1}{\int_{\mathcal{X}}  m(x,y) \mu(dx)} \nu(dy) \quad \mbox{with~$0<Z<\infty$ such that }  \tilde\nu (\mathcal{Y}) =1
    \end{align*}
    is well-defined, it holds~$\mu \otimes \tilde \nu \in \Gamma_{m} (\mu, \nu)$.
\end{remark}
We assume for the rest of this article that~$\Gamma_m(\mu, \nu)$ is non-empty. Similar to standard optimal problem, the non-conservative Kantorovitch problem admits a solution under general assumptions.
\begin{theorem}[Solution of~\eqref{equ:katorovich_problem}]\label{the:solution_KP}
	Suppose that the cost function~$c$ is lower semi-continuous and bounded from below, and that the mass-change factor~$m$ is continuous and bounded. Then the Problem \ref{lossyOTplans} admits a minimizer~$\pi^* \in \Gamma_m(\mu, \nu) $, i.e.
    \begin{align}
        \inf_{\pi \in \Gamma_m(\mu, \nu)}  \ \int_{\mathcal{X} \times \mathcal{Y} } c(x,y) \pi(dx,dy)  =   \int_{\mathcal{X} \times \mathcal{Y} } c(x,y) \pi^*(dx,dy).
    \end{align}
\end{theorem}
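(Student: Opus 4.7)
The plan is to apply the direct method of the calculus of variations: take a minimizing sequence in $\Gamma_m(\mu,\nu)$, establish tightness to extract a weakly convergent subsequence via Prokhorov, verify that the limit still lies in $\Gamma_m(\mu,\nu)$, and conclude by lower semi-continuity of the cost functional. Let $(\pi_n)_{n \in \mathbb{N}} \subset \Gamma_m(\mu,\nu)$ with $\int c\,d\pi_n \to \inf \int c\,d\pi$; this sequence is well defined because $c$ is bounded below. Set $Z_n := \int m\,d\pi_n \in (0, \|m\|_\infty]$.

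For tightness, the first marginals are all equal to $\mu$, which is tight on the Polish space~$\mathcal{X}$. For the second marginals, the constraint from~\eqref{equ:def_gamma_m} rewrites as $\int_{\mathcal{X}\times B} m\,d\pi_n = Z_n\,\nu(B)$, so using positivity of $m$ together with its boundedness yields that $(\mathrm{Proj}_2)_{\sharp}\pi_n$ is uniformly controlled by $\nu$, which is tight on $\mathcal{Y}$. Hence $(\pi_n)$ is tight on $\mathcal{X}\times\mathcal{Y}$, and Prokhorov's theorem gives a subsequence (not relabeled) with $\pi_n \rightharpoonup \pi^* \in \mathcal{P}(\mathcal{X}\times\mathcal{Y})$.

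Next I verify $\pi^* \in \Gamma_m(\mu,\nu)$. The first-marginal condition $(\mathrm{Proj}_1)_{\sharp}\pi^* = \mu$ is preserved under weak convergence of probability measures. For the second-marginal condition, observe that for every $\phi \in C_b(\mathcal{Y})$ the function $(x,y)\mapsto m(x,y)\phi(y)$ lies in $C_b(\mathcal{X}\times\mathcal{Y})$ because $m$ is continuous and bounded; weak convergence of $\pi_n$ therefore gives
\[
\int \phi(y)\,m(x,y)\,d\pi_n(x,y) \longrightarrow \int \phi(y)\,m(x,y)\,d\pi^*(x,y),
\]
and taking $\phi\equiv 1$ shows $Z_n \to Z^* := \int m\,d\pi^*$. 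Passing to the limit in $\int \phi\cdot m\,d\pi_n = Z_n \int \phi\,d\nu$ and dividing by $Z^* > 0$ (positive since $m > 0$ and $\pi^*$ is a probability measure supported where the tightness argument places it) yields the ratio condition defining $\Gamma_m(\mu,\nu)$. Finally, by the standard lower semi-continuity result for integrals of lower semi-continuous functions bounded below against weakly convergent probability measures,
\[
\int c\,d\pi^* \;\leq\; \liminf_{n\to\infty}\int c\,d\pi_n \;=\; \inf_{\pi\in\Gamma_m(\mu,\nu)}\int c\,d\pi,
\]
so $\pi^*$ is a minimizer.

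The main obstacle, compared to the classical Kantorovich setting, is the non-linearity of the second-marginal condition: it is a ratio rather than a linear marginal constraint, so one cannot simply pass to the limit in a continuous linear functional. The decisive observation is that continuity and boundedness of $m$ make both the numerator $\int \phi\cdot m\,d\pi$ and denominator $\int m\,d\pi$ individually weakly continuous in $\pi$, and the uniform lower control $Z_n$ stays bounded away from $0$ along the minimizing sequence (since otherwise the constraint $\int_{\mathcal{X}\times B} m\,d\pi_n = Z_n\,\nu(B)$ would force~$\nu\equiv 0$), so the quotient passes to the limit cleanly.
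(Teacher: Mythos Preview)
Your approach is exactly the paper's: the direct method with Prokhorov compactness, weak closedness of $\Gamma_m(\mu,\nu)$ via continuity and boundedness of $m$, and lower semicontinuity of the cost. The paper's own proof is only a one-paragraph sketch pointing to the standard Kantorovich argument, so on the level of strategy there is nothing to compare.

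There is, however, a genuine gap in your tightness step for the second marginals. You assert that ``positivity of $m$ together with its boundedness'' forces $(\mathrm{Proj}_2)_\sharp\pi_n$ to be uniformly controlled by $\nu$, but that inference needs a \emph{uniform} lower bound $\inf_{x,y} m(x,y)>0$, which the theorem does not assume (only $m>0$ pointwise and $m$ bounded above). Without it, mass of $\pi_n$ can escape to regions where $m$ is small while $m\pi_n$ remains controlled by $\nu$. Concretely, take $\mathcal{X}=\{0,1\}$, $\mathcal{Y}=\mathbb{N}_0$ (discrete), $\mu=\tfrac12\delta_0+\tfrac12\delta_1$, $\nu(\{k\})=(1-p)p^k$ with $p>\tfrac12$, $m(0,\cdot)\equiv 1$, $m(1,k)=2^{-k}$. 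Writing $\pi=\tfrac12\delta_0\otimes\pi_0+\tfrac12\delta_1\otimes\pi_1$, one checks that the choice $\pi_1=\delta_n$ yields an element of $\Gamma_m(\mu,\nu)$ for all large $n$; the second marginals of these plans are not tight, and with cost $c(0,\cdot)\equiv 0$, $c(1,0)=1$, $c(1,k)=1/k$ for $k\ge 1$ the infimum equals $0$ but is not attained. The clean fix is to add the hypothesis $\inf m>0$ (which the paper itself imposes later in Assumption~\ref{asu:existence_optimal_potentials}); under that assumption your bound $(\mathrm{Proj}_2)_\sharp\pi_n(B)\le \|m\|_\infty(\inf m)^{-1}\nu(B)$ is valid and the rest of your argument goes through. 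A minor separate point: your parenthetical justification that $Z_n$ stays bounded away from $0$ (``otherwise $\nu\equiv 0$'') is not correct reasoning, but once you have the weak limit $\pi^*\in\mathcal{P}(\mathcal{X}\times\mathcal{Y})$, the positivity $Z^*=\int m\,d\pi^*>0$ follows simply from $m>0$ and $\pi^*$ being a probability measure, so the division is legitimate.
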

The proof of Theorem~\ref{the:solution_KP} follows the standard argument for the traditional Kantorovich problem in optimal transport, (see e.g.~proof of Theorem 1.7 in~\cite{santambrogio2015otam}). The assumption of the mass-change factor being bounded and continuous is used to show that the subsequential limit~$\pi^*$ of the minimizing sequence is in the admissible set~$\Gamma_m(\mu, \nu)$ by the definition of weak-convergence. Lower-semicontitnuity yields that~$\pi^*$ is optimal.

\begin{remark}[Non-uniqueness of the optimal transport plan]
    In general, the optimal transport plan~$\pi^*$ is not unique. Let us consider a mass-change factor such that~$m(x,y)=0$ whenever~$|x-y| \geq C >0$ (see e.g.~\eqref{equ:loss_function_leaky_monge} below).  If the source measure~$\mu$ that has some mass in a set~$A$ such that~$\inf_{x \in A, y \in \supp \nu} |x-y| \geq C$, the mass from~$A$ cannot reach the support~$\supp \nu$ and it does not matter where in~$\supp \nu$ it gets transported to. Remark~\ref{rem:uniqueness_via_optimal_transport_maps} in Section~\ref{sec:existence_optimal_transport_map} discusses a situation where the optimal transport plan is unique. 
\end{remark}

In the remainder of this subsection we present settings and situations where the non-conservative OT may arise or find applications. A detailed example of optimal rebalancing of portfolios corresponding to a discrete non-conservative optimal transport problem was outlined in Section~\ref{sec:rebalancing_general_graphs}.
\begin{example}[Leaky Monge Problem]
    The original motivation of Monge was to find the most efficient way to move sand distributed in space acording to~$\mu$ to a desired target distribution~$\nu$ using buckets of unit mass. Therefore, the cost function~$c(x,y)$ is given by the natural (e.g.~Euclidean or Manhattan) distance between~$x$ and~$y$. 
    The solution of the associated optimal transport problem minimizes the mean dislocation. No particle of sand is making an unnecessary round trip. If the mass is only moved horizontally on a frictional surface with constant resistive force proportional to its weight, this corresponds to minimizing the needed work to transport the sand. 
    
    Now, let us assume that buckets are leaky, i.e.~they loose sand at a uniform rate due to a small hole in the bucket. Assuming that mass is transported with uniform speed, the mass-change factor is given by
    \begin{align}\label{equ:loss_function_leaky_monge}
        m(x,y) = \max\{0, 1- k d(x,y)\},
    \end{align}
    where $d(x,y)$ measures the time to transport and the constant~$k>0$ models the size of the hole.
    
    In many physical models, the loss of mass would not happen at a uniform rate but be proportional to the remaining sand in the bucket. Examples would be loss of mass due to radioactive decay, first order chemical reactions such as drug elimination in a host, small gas leaks, or biodegradation of organic matter. In such a situation the loss will be exponential in the time needed to transport the mass from~$x$ to~$y$ and the mass-change factor is given by
    \begin{align}\label{equ:loss_function_proportional_leaky_monge}
        m(x,y) = \exp(- k   d(x,y)).
    \end{align}
    A situation when mass is gained in the transport could be a vehicle that charges its battery via solar panels or an infrastructure (e.g.~satellites or cars). 
\end{example}

\begin{remark}[Limitations and extensions]
    Our model cannot account for situations when the loss of mass is not proportional to the amount of transported mass. In reality, the dependency between cost function, the mass-change factor, and the transport map is a lot more subtle and should be carefully modeled with the specific application in mind. The interplay between those quantities gives rise to richer models and strategies.  In the application to optimal rebalancing, the cost function is equivalent to minimize the loss of mass. Non-conservative optimal transport problems are able to find find the balance between competing goals; for example when finding the best speed to transport mass from~$x$ to~$y$ in order not to loose too much mass but also not to spend too much energy for the transport.
\end{remark}

Optimal transport has beautiful interpretations and applications in economics, see e.g.~\cite[Section~1.7.3]{santambrogio2015otam}. We mention two such problems from economics here for which a non-conservative extension has a natural applications. 

\begin{example}[Utility maximization]
     One such application is in optimal allocation of goods to consumers. The source measure~$\mu(x)$ and the target measure~$\nu(y)$ represent the distributions of various types of goods and consumers, respectively. The aim to the optimal transport problem is to maximize the overall utility $u = -c$ across possible matchings $\pi$ of goods to consumers. The optimal potential~$\varphi (x)$ of the dual problem (cf.~Section~\ref{sec:ncot_dual_problem} below) determines the optimal prices of the good~$x$, and the potential~$\psi(y)$ describes the received utility of the consumer~$y$. One remarkable feature is that in the optimal state, every single utility is maximized i.e.~$\psi(y) = \max_{x} (u_x - \varphi(x))$. 
    
    In a non-conservative setting, the mass-change function~$m(x,y)$ can describe the taxes or tariffs customer~$y$ has to pay when buying the good~$x$. Non-conservative optimal transport provides a tool to study the implications of new tariffs. It could also give rise to new tools that allows to find the optimal design of tariffs given a certain agenda.
\end{example}

\begin{example}[Productivity maximization]    
    Another economic application is the maximization of the productivity of a company. The source measure~$\mu(x)$ is interpreted as the amount of workers of type~$x$ available, and the target measure~$\nu(y)$ the amount of work of type~$y$ needed to create the product. The productivity function~$p = -c$ then describes how productive a worker of type~$x$ is at performing task~$y$. The associated optimal transport problem then maximizes the productivity of the company and finds the optimal assignment of jobs. Moving to the non-conservative setting, the mass-change factor~$m(x,y)$ could model the retention rate, i.e.~how many workers of type~$x$ quit if they are forced to do job~$y$. Including the retention rates will result in a more productive company. 
\end{example}

\subsection{The non-conservative Dual Problem.}\label{sec:ncot_dual_problem}

We open this subsection with a derivation that will motivate the formulation of the dual problem to the optimization problem~\eqref{equ:katorovich_problem}. 
Let~$ \mathcal{M}_+ (\mathcal{X} \times \mathcal{Y})$ denote the set of positive finite measures on~$\mathcal{X} \times \mathcal{Y}$. We define the function~$\chi: \mathcal{M}_+ (\mathcal{X} \times \mathcal{Y}) \to [0, \infty]$ as
\begin{align*}
  \chi(\pi)
=&\sup_{\varphi,\psi}\Biggl\{
\int_{\mathcal{X}}\varphi(x)\,\mu(dx)-\int_{\mathcal{X} \times \mathcal{Y}} \varphi(x)\,\pi(dx,dy)
\\
&+\int_{\mathcal{Y}}\psi(y)\,\nu(dy)
\int_{\mathcal{X}\times \mathcal{Y}}m(x,y)\,\pi(dx,dy)
-\int_{\mathcal{X} \times \mathcal{Y}}\psi(y)m(x,y)\,\pi(dx,dy)
\Biggr\},
\end{align*}
where the supremum is over all bounded and continuous functions~$\varphi \in \cC(\mathcal{X})$ and~$\psi \in \cC(\mathcal{Y})$. We observe that the function~$\chi$ coincides with 
\begin{equation}\label{equ:def_chi}
\chi(\pi)=
\begin{cases}\infty, & \text{if } \pi\in\mathcal{M}_+(\mathcal{X}\times 
\mathcal{Y})\setminus\Gamma_{m}(\mu,\nu),\\[1mm]
0, & \text{if } \pi\in\Gamma_{m}(\mu,\nu).
\end{cases}
\end{equation}
Hence, we obtain that 
\begin{align*}
  & \inf_{\pi\in\Gamma_{m}(\mu,\nu)}\int_{\mathcal{X} \times \mathcal{Y}}c(x,y)\,\pi(dx,dy)
 =\inf_{\pi\in\mathcal{M}_+(\mathcal{X} \times \mathcal{Y})}\Biggl\{
\int_{\mathcal{X} \times \mathcal{Y}}c(x,y)\,\pi(dx,dy)+\chi (\pi) \Biggr\} \\
 & =\inf_{\pi \in \mathcal{M}_+(\mathcal{X} \times \mathcal{Y})  }\sup_{\varphi ,\psi}\Biggl\{
 \int_{\mathcal{X} \times \mathcal{Y}}c(x,y)\,\pi(dx,dy) +\int_{\mathcal{X}}\varphi(x)\,\mu(dx) -\int_{\mathcal{X} \times \mathcal{Y}}\varphi(x)\,\pi(dx,dy)\\
 & \qquad  +\int_{\mathcal{Y}}\psi(y)\,\nu(dy)
 \int_{\mathcal{X}\times \mathcal{Y}}m(x,y)\,\pi(dx,dy)
 -\int_{\mathcal{X} \times \mathcal{Y}}\psi(y)m(x,y)\,\pi(dx,dy)
 \Biggr\}. 
\end{align*}
Consequently, 
\begin{align} \label{equ:KP_larger_dual_last_step}
&\inf_{\pi\in\Gamma_{m}(\mu,\nu)}\int c(x,y)\,\pi(dx,dy)
\geq \sup_{\varphi, \psi }\Biggl\{
\int \varphi(x)\,\mu(dx) 
\\ 
&+\inf_{\pi\in\mathcal{M}_+(\mathcal{X} \times \mathcal{Y})}\int \Biggr( c(x,y)  -\varphi(x)-
\left(\psi(y) - \int \psi(b)\,\nu(db) \right) \,m(x,y)
\Biggr)\pi(dx,dy)
\Biggr\}. \nonumber
\end{align}

For a fixed pair $(\varphi, \psi)$, the inner infimum in~\eqref{equ:KP_larger_dual_last_step} equals $-\infty$ unless
\[
c(x,y)\ge \varphi(x)+\Bigl(\psi(y)-\int \psi (b)\,\nu(db)\Bigr)m(x,y) \qquad \forall (x,y)\in \mathcal{X} \times \mathcal{Y}.
\]
This calculation motivates the following definition.

\begin{definition}[Dual Kantorovich problem in non-conservative optimal transport] \label{def:dual_problem}
    Given a cost function~$c: \mathcal{X} \times \mathcal{Y} \to \mathbb{R}$ and a mass-change factor~$m: \mathcal{X} \times \mathcal{Y} \to \mathbb{R}$, we define the admissibility set~$ \mathcal{A} := \mathcal{A}(c,m)$ as all pairs~$(\varphi, \psi) \in \cC (\mathcal{X}) \times \cC(\mathcal{Y})$ such that
    \begin{align}\label{equ:DP_admissability_set}
      c(x,y) \geq \varphi(x)+\Bigl(\psi(y)-\int \psi (b)\,\nu(db)\Bigr)m(x,y) \qquad \forall (x,y) \in \mathcal{X} \times \mathcal{Y}.
    \end{align}
 The dual problem is given by
\begin{align}
  \sup_{(\varphi, \psi) \in \mathcal{A}(c,m)}&
 \int \varphi(x)\,\mu(dx) . \tag{DP} \label{equ:dual_KP} 
\end{align}

\end{definition}
\begin{remark}
    For a constant mass-change factor~$m\equiv1$, one recovers the dual problem of the classical Kantorovich problem 
\begin{align}
         \sup\limits_{\varphi, \psi}&
\left\{ \int \varphi(x)\,\mu(dx) + \int \psi(y)\, \nu(dy) \ \mid  \ \forall x,y: \ c(x,y) \geq \varphi(x) + \psi(y)  \right\}.
\end{align} 
    One difference in non-conservative optimal transport is that the roles of~$\mu$ and~$\nu$ are not symmetric. Consequently, also the roles of~$\varphi$ and~$\psi$ become non-symmetric.
\end{remark}

In light of Definition~\ref{def:dual_problem}, the inequality~\eqref{equ:KP_larger_dual_last_step} provides the weak duality relation summarized below.

\begin{lemma}
For all~$\pi\in\Gamma_{m}(\mu,\nu)$ and all~$(\varphi , \psi ) \in \mathcal{A}$ it holds
\begin{align*}
      \int c(x,y)\,\pi(dx,dy) \geq \inf \eqref{equ:katorovich_problem} \geq  \sup \eqref{equ:dual_KP}  \geq  \int \varphi(x)\,\mu(dx).   
\end{align*}
\end{lemma}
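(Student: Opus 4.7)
The lemma decomposes into a chain of three inequalities. The first, $\int c\,d\pi \geq \inf$~\eqref{equ:katorovich_problem}, is immediate because $\pi$ is a feasible primal plan, and the last, $\sup$~\eqref{equ:dual_KP} $\geq \int \varphi\,d\mu$, is immediate because $(\varphi,\psi)$ is a feasible dual pair. The entire content of the lemma is therefore the middle inequality, i.e.\ weak duality $\inf$~\eqref{equ:katorovich_problem} $\geq \sup$~\eqref{equ:dual_KP}.

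My plan is to bypass the $\chi$-variational rewriting used to motivate the dual and prove weak duality directly: pair an arbitrary feasible $\pi \in \Gamma_m(\mu,\nu)$ with an arbitrary feasible $(\varphi,\psi) \in \mathcal{A}$, and integrate the pointwise admissibility inequality~\eqref{equ:DP_admissability_set} against $\pi$ to obtain
\[
\int c(x,y)\,\pi(dx,dy) \;\geq\; \int \varphi(x)\,\pi(dx,dy) \;+\; \int \Bigl(\psi(y) - \int\psi(b)\,\nu(db)\Bigr) m(x,y)\,\pi(dx,dy).
\]
The first right-hand term equals $\int \varphi\,d\mu$ by the first marginal condition in~\eqref{equ:def_gamma_m}. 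For the second term, write $Z := \int m(x,y)\,\pi(dx,dy)$; the ratio condition in~\eqref{equ:def_gamma_m} rearranges to $\int \psi(y)\, m(x,y)\,\pi(dx,dy) = Z \int \psi\,d\nu$, so the second term collapses to $Z\int\psi\,d\nu - Z\int\psi\,d\nu = 0$. Thus $\int c\,d\pi \geq \int \varphi\,d\mu$ for every compatible quadruple, and taking the infimum over $\pi$ followed by the supremum over $(\varphi,\psi)$ yields the desired chain.

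The only technical point I foresee is justifying the integrability needed to split the right-hand integral. Under the hypotheses of Theorem~\ref{the:solution_KP}, $\varphi$ and $\psi$ are bounded continuous and $m$ is bounded and continuous, so each summand is a bounded function integrated against a finite measure and everything is finite; if $c$ is merely lower semicontinuous and bounded below then $\int c\,d\pi$ is well-defined in $[-\infty,+\infty]$ and the inequality still makes sense. I do not expect a genuine obstacle: the mechanism is essentially the classical Kantorovich weak-duality argument, with the single non-standard feature that $\psi$ is recentered by $\int \psi\,d\nu$ — this recentering is precisely what makes the ratio-type second marginal in~\eqref{equ:def_gamma_m} (rather than a raw marginal) cause the $\psi$-cross-term to vanish, which is the one place the non-conservative structure enters the proof.
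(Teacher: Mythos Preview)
Your proof is correct. The paper's stated proof of this lemma is simply a pointer back to the inequality~\eqref{equ:KP_larger_dual_last_step}, which was obtained by the $\chi$-variational rewriting and the swap $\inf\sup \geq \sup\inf$. You instead integrate the pointwise admissibility condition~\eqref{equ:DP_admissability_set} directly against a feasible $\pi$ and use the two marginal constraints in~\eqref{equ:def_gamma_m} to collapse the right-hand side to $\int\varphi\,d\mu$. This is a slightly more elementary route (no need for the indicator $\chi$ or the minimax inequality), and in fact the paper itself performs exactly your computation later, in the proof of Proposition~\ref{the:characterization_support_optimal_transport_plan}. So the two approaches are close cousins: the paper's buys a unified derivation of the dual problem and weak duality in one stroke, while yours isolates weak duality as a two-line consequence of feasibility.
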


Let us now investigate existence of a maximizer of the dual problem~\eqref{equ:dual_KP}. 

\begin{assumption}\label{asu:existence_optimal_potentials}
  The sets~$\mathcal{X}$ and~$\mathcal{Y}$ are compact and the functions~$c$ and~$m$ are continuous with~$\inf\limits_{x,y}m(x,y)>0$.   
\end{assumption}

\begin{theorem}\label{pro:solution_dual_KP}
 Let Assumption~\ref{asu:existence_optimal_potentials} be satisfied. Then, there exists $(\varphi, \psi)\in \mathcal{A}$ satisfying 
 \begin{align}\label{def:generalized_c_transform}
     \varphi(x) : =   \psi_m^c (x): = \inf_{y \in \mathcal{Y}} \left(c(x,y) - \left( \psi(y) - \int_{\mathcal{Y}} \psi(b) \nu(db) \right) m(x,y) \right)
 \end{align}
 and
 \begin{align}\label{equ:realtion_phi_psi}
        \psi(y) = \inf_{x\in \mathcal{X}} \frac{c(x,y)- \varphi(x)}{m(x,y)} \quad \mbox{with}  \quad \int \psi(y) \nu(dy) = 0.
 \end{align}
 such that
 \begin{align*}
     \sup \eqref{equ:dual_KP} = \max \eqref{equ:dual_KP} = \int \varphi(x)  \mu(dx).
 \end{align*}
\end{theorem}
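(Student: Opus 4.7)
The plan is to adapt the classical OT duality argument to the mass-changed setting. I build an optimal pair $(\varphi^*,\psi^*)$ as the uniform limit of a $c$-transform-improved maximizing sequence, then verify that the two representation formulas pass to the limit.

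First, admissibility~\eqref{equ:DP_admissability_set} is invariant under $\psi\mapsto\psi+C$, since only $\psi(y)-\int\psi\,d\nu$ enters. I would therefore normalize any admissible pair so that $\int\psi\,d\nu=0$, turning~\eqref{equ:DP_admissability_set} into the pointwise inequality $c(x,y)\ge\varphi(x)+\psi(y)m(x,y)$. For a maximizing sequence $(\varphi_n,\psi_n)$ of~\eqref{equ:dual_KP}, I improve it as follows: first replace $\varphi_n$ by $\bar\varphi_n(x):=\inf_y(c(x,y)-\psi_n(y)m(x,y))$, so $\bar\varphi_n\ge\varphi_n$ preserves admissibility and weakly increases the objective; then replace $\psi_n$ by $\tilde\psi_n(y):=\inf_x(c(x,y)-\bar\varphi_n(x))/m(x,y)$ followed by the mean-zero renormalization $\hat\psi_n:=\tilde\psi_n-\int\tilde\psi_n\,d\nu$. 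A short check using admissibility of $(\bar\varphi_n,\psi_n)$ and $\int\psi_n\,d\nu=0$ gives $\psi_n\le\tilde\psi_n$ pointwise and therefore $\int\tilde\psi_n\,d\nu\ge 0$, so that $(\bar\varphi_n,\hat\psi_n)$ remains admissible.

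The core technical step is uniform boundedness and equi-continuity of the improved sequence. Using compactness of $\mathcal{X},\mathcal{Y}$, boundedness of $c$ and $m$, uniform continuity of $c$ and $m$, and $\inf m\ge m_0>0$, routine modulus-of-continuity estimates give a bound on the modulus of $\bar\varphi_n$ in terms of the moduli of $c,m$ and $\|\hat\psi_n\|_\infty$, and symmetrically for $\hat\psi_n$ in terms of $\|\bar\varphi_n\|_\infty$. The zero-mean condition controls $\|\hat\psi_n\|_\infty$ by the oscillation of $\hat\psi_n$, and weak duality plus a fixed admissible lower bound controls $\int\bar\varphi_n\,d\mu$, which via the same oscillation mechanism controls $\|\bar\varphi_n\|_\infty$. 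Breaking the apparent circularity between these two estimates is the principal obstacle: I expect it to require iterating the improvement procedure and carefully combining the modulus estimates with the $L^\infty$ bounds from the zero-mean and objective controls, yielding a closed inequality that bounds both norms purely in terms of $c$, $m$, $m_0$, and the weak-duality upper bound $\inf\eqref{equ:katorovich_problem}$.

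Once boundedness and equi-continuity are established, Arzel\`a--Ascoli yields uniformly convergent subsequences $\bar\varphi_n\to\varphi^*$ in $C(\mathcal{X})$ and $\hat\psi_n\to\psi^*$ in $C(\mathcal{Y})$. Admissibility passes to the limit as a closed pointwise inequality, $\int\psi^*\,d\nu=0$ follows from uniform convergence, and the $c$-transform identities~\eqref{def:generalized_c_transform} and~\eqref{equ:realtion_phi_psi} persist in the limit because the infima of equi-continuous families of continuous functions, with denominator bounded uniformly away from zero, commute with uniform limits. Dominated convergence gives $\int\varphi^*\,d\mu=\sup\eqref{equ:dual_KP}$, so the supremum is attained at $(\varphi^*,\psi^*)$.
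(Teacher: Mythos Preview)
Your overall architecture---normalize $\int\psi\,d\nu=0$, improve a maximizing sequence by the two $c$-transforms, extract a limit via Arzel\`a--Ascoli---is exactly the paper's. The genuine gap is precisely the step you flag as ``the principal obstacle'': you do not actually break the circularity between the $L^\infty$ bounds on $\bar\varphi_n$ and $\hat\psi_n$, and your proposed fix (iterate the improvement, bootstrap modulus estimates) is speculative and, in fact, unnecessary. The paper resolves this not by any iteration but by two elementary one-point arguments. First, $\int\psi\,d\nu=0$ forces $\psi(y^\ast)\ge 0$ at some $y^\ast$, and then admissibility alone gives $\varphi(x)\le c(x,y^\ast)-\psi(y^\ast)m(x,y^\ast)\le\sup c$, an upper bound on $\varphi$ with \emph{no dependence on $\|\psi\|_\infty$}. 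Second, any admissible pair with $\varphi(x)<\inf c$ everywhere is dominated by the constant pair $(\inf c,0)$, so along a maximizing sequence one may assume $\varphi(x^\ast)\ge\inf c$ at some $x^\ast$; feeding this into $\psi(y)\le (c(x^\ast,y)-\varphi(x^\ast))/m(x^\ast,y)$ yields $\psi\le(\sup c-\inf c)/\inf m$, again with \emph{no dependence on $\|\varphi\|_\infty$}. These two one-sided bounds break the circle; the remaining two follow immediately from the infimum formulas. Your oscillation route via modulus of continuity cannot start without one of these anchors.

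A secondary issue: after your renormalization $\hat\psi_n=\tilde\psi_n-\int\tilde\psi_n\,d\nu$, the pair $(\bar\varphi_n,\hat\psi_n)$ in general satisfies \emph{neither} of the two infimum identities~\eqref{def:generalized_c_transform}--\eqref{equ:realtion_phi_psi} exactly (you defined $\bar\varphi_n$ from $\psi_n$, not from $\hat\psi_n$; and $\hat\psi_n$ differs from $\tilde\psi_n=\inf_x(c-\bar\varphi_n)/m$ by a nonzero constant whenever $\int\tilde\psi_n\,d\nu>0$). So your final claim that ``the $c$-transform identities persist in the limit'' is not justified by the sequence you constructed. The paper handles this by shifting $\varphi$ by a constant $k\ge 0$ chosen so that the resulting $\psi$ already has mean zero, rather than renormalizing $\psi$ after the fact.
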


The main idea of the proof is using the Arzela-Ascoli theorem to show that a maximizing sequence~$(\varphi_n, \psi_n) \in \mathcal{A}$ converges to a maximizer~$(\varphi , \psi) \in \mathcal{A}$. The equation~\eqref{def:generalized_c_transform} takes on a similar role as the $c$-transform in classical optimal transport. In contrast to the classical case, the argument get more involved due to~\eqref{def:generalized_c_transform} and~\eqref{equ:realtion_phi_psi} being a more complicated analogue of of the $c$-transform that lacks symmetry. The compactness of~$\mathcal{X}$ and~$\mathcal{Y}$ is assumed to simplify the presentation and arguments.

\begin{proof}[Proof of Theorem~\ref{pro:solution_dual_KP}]
    In the first part of this proof we show that the feasible set of the dual problem~\eqref{equ:dual_KP} can be restricted to admissible pairs $(\varphi, \psi)\in \mathcal{A}$ satisfying~\eqref{def:generalized_c_transform} and~\eqref{equ:realtion_phi_psi} without changing the optimal value or solution of the problem. 
    We start with deducing~\eqref{def:generalized_c_transform}. For this, let us fix a function~$\psi$ and consider all~$\varphi$ such that~$(\varphi, \psi) \in \mathcal{A}$ is admissible. From the definition of~$\mathcal{A}$ and the continuity of~$c$ and~$m$ it follows that
    \begin{align} \label{equ:reduction_c_transform} 
        \varphi(x) \leq \inf_{y\in \mathcal{Y}} \left(  c(x,y) - \Bigl(\psi(y)-\int \psi (b)\,\nu(db)\Bigr)m(x,y) \right) =: \psi_{m}^c(x).
    \end{align}
    As all involved functions are continuous, the function~$\psi_m^c$ is also continuous. Because~$\mathcal{X}$ is bounded, the function~$\psi_m^c$ is also bounded. It follows that the pair~$(\psi_{m}^c, \psi) \in \mathcal{A}$ and from~\eqref{equ:reduction_c_transform} it follows that
    \begin{align*}
        \int_{\mathcal{X}} \varphi(x) \mu(dx) \leq \int_{\mathcal{X}}  \psi_{m}^c (x) \mu(dx).
    \end{align*}
    Therefore, we can w.l.o.g.~impose~$\varphi = \psi_{m}^c$. \\
    
    Next, let us show that we can w.l.o.g.~impose~\eqref{equ:realtion_phi_psi}. Indeed, from the admissibility condition it follows that 
\begin{align*}
    \psi(y) - \int \psi(y) \nu(dy)  \leq  \inf_x \frac{c(x,y)- \varphi(x)}{m(x,y)}.
\end{align*}
Integrating this inequality yields
\begin{align*}
    0 = \int_{\mathcal{Y}} \psi(y) \nu (dy) - \int_{\mathcal{Y}} \psi(b) \nu (db)  \leq \int_{\mathcal{Y}} \inf_x \frac{c(x,y)- \varphi(x)}{m(x,y)} \nu(dy).
\end{align*}
By continuity and monotonicity there exists a constant~$k\geq0$ such that
\begin{align*}
    0 = \int_{\mathcal{Y}} \inf_x \frac{c(x,y)- \varphi(x)-k}{m(x,y)} \nu(dy).
\end{align*}
We define now 
\begin{align*}
    \bar \varphi = \varphi + k \qquad \mbox{and} \qquad \bar \psi (y) = \inf_x \frac{c(x,y)- \varphi(x)-k}{m(x,y)}.
\end{align*}
It follows from the construction that~$(\bar \varphi, \bar \psi) \in \mathcal{A}$ is admissible and
\begin{align*}
    \int \varphi \  d\mu \leq  \int \bar \varphi \ d \mu,
\end{align*}
which verifies that we can w.l.o.g. impose~\eqref{equ:realtion_phi_psi}. \\

In the next part of this proof, we show that admissible pairs~$(\varphi, \psi) \in \mathcal{A}$ satisfying~(\ref{def:generalized_c_transform}--\ref{equ:realtion_phi_psi}) are uniformly bounded, i.e.~there is a constant~$K>0$ such that
\begin{align}\label{equ:uniform_boundedness_phi_psi}
    -K \leq \varphi(x) \leq K, \quad  -K \leq \psi(y) \leq K \qquad \mbox{for all $x\in \mathcal{X}, \ y \in \mathcal{Y}$}.
\end{align}
\begin{enumerate}
    \item[(i)]  From~\eqref{equ:realtion_phi_psi} it follows that $\psi(y^*) \geq 0$ for some~$y^* \in \mathcal{Y}$. Using~\eqref{equ:reduction_c_transform} and~$\int \psi \ d \nu =0$, it follows that~$\varphi$ is uniformly bounded from above, namely
\begin{align}\label{equ:varphi_upper_bound}
    \varphi(x) \leq c(x,y^*) \leq \sup_{x,y} c(x,y) < \infty.  
\end{align}

\item[(ii)] From~\eqref{equ:varphi_upper_bound} it follows that
\begin{align*}
    \frac{c(x,y) - \varphi(x)}{m(x,y)} \geq \frac{\inf_{x,y} c(x,y) - \sup_{x,y} c(x,y)}{\inf_{x,y} m(x,y)},
\end{align*}
which together with~\eqref{equ:realtion_phi_psi} yields  that~$\psi$ is uniformly bounded from below.

\item[(iii)] Note that the objective value for an admissible pair $(\varphi, \psi) \in \mathcal{A}$ with $\varphi(x) < \inf\limits_{x,y} c(x,y)$ across all $x \in \mathcal{X}$ is dominated by an objective value for the constant admissible pair $(\bar{\varphi}, \bar{\psi}) \equiv (\inf\limits_{x,y} c(x,y), 0)$. Therefore, we can w.l.o.g.~exclude such admissible pairs and assume that there exists some~$x^* \in \mathcal{X}$ such that~$\varphi(x^*) \geq \inf\limits_{x,y} c(x,y)$. In light of this, for all $y \in \mathcal{Y}$ we obtain form~\eqref{equ:realtion_phi_psi}
\begin{align*}
    \psi(y) \leq \frac{c(x^*, y) - \varphi(x^*)}{m(x^*, y)} \leq \frac{\sup_{x,y} c(x,y) - \inf_{x,y} c(x,y)}{\inf_{x,y} m(x,y)}.
\end{align*}
This shows that~$\psi$ is uniformly bounded from above.

\item[(iv)] Finally, the uniform upper bound on~$\psi$ implies by~\eqref{def:generalized_c_transform} and~\eqref{equ:realtion_phi_psi} a uniform lower bound on~$\varphi$.
\end{enumerate}

In the last part of the proof, we argue existence of an optimal solution. By the previous arguments, we restrict ourselves to admissible pairs satisfying~(\ref{def:generalized_c_transform}--\ref{equ:realtion_phi_psi}) and uniform boundedness. Let us consider a maximizing sequence~$(\varphi_n , \psi_n) \in \mathcal{A}$ realizing the supremum in~\eqref{equ:dual_KP}. In order to apply Arzela-Ascoli, it is left to show that the functions $\varphi_n$ and~$\psi_n$ are equi-continuous. Let us first consider~$\{\varphi_n\}$ and define the family of functions~$f_{y,n} (x) : = c(x,y) - \psi_n(y) m(x,y)$ and observe $\varphi_n(x) = \inf_{y} f_{y,n} (x)$. It is a known fact that modulus of continuity of~$f_{y,n}(x)$ carries over to the~$\inf_{y} f_{y,n} = \varphi_n$ (see e.g.~\cite[Box 1.5]{santambrogio2015otam}). Direct calculation using~\eqref{equ:realtion_phi_psi} and~\eqref{equ:uniform_boundedness_phi_psi} gives
\begin{align}\label{equ:modulus_continuity}
    |f_{y,n} (x_2) - f_{y,n} (x_1)| \leq | c(x_2,y) - c(x_1,y)| + K  |m(x_2,y) - m(x_1,y)|  .
\end{align}
So the modulus of continuity of the functions~$\varphi_n$ only depends on the modulus of continuity of~$c$ and~$m$. Therefore the functions~$\varphi_n$ are equi-continuous. 
Let us now consider the family~$\left\{ \psi_n \right\}$ and define the auxiliary functions $g_{n,x} = \frac{c(x,y) - \varphi_n(x)}{m(x,y)}$. As the family~$\varphi_n$ is equi-continuous with modulus of continuity only depending on~$c$ and~$m$, a direct calculation shows that the family~$g_{n,x}$ also has a uniform modulus of continuity, and therefore $\inf_{x} g_n = \psi_n$ has a uniform modulus of continuity as well. Hence, the functions~$\psi_n$ are equi-continuous.

As a consequence, we can apply Arzela-Ascoli and can extract a subsequence~$(\varphi_{n_k}, \psi_{n_k})$ that converges uniformly to a maximizer~$(\varphi, \psi) \in \mathcal{A}$.
\end{proof}

\subsection{Strong Duality} \label{sec:strong_duality}

In this section we show that under the Assumption~\ref{asu:existence_optimal_potentials} strong duality holds (see Theorem~\ref{the:strong_duality} below). We generalize the argument outlined in~\cite[Section~1.6.2]{santambrogio2015otam}, which is based on an argument by C.~Jimenez and is adapted from~\cite[Section~4]{bouchitte2001characterization}.

\begin{definition}
	For all $p\in \cC(\mathcal X \times \mathcal Y)$, we define
	\[
	H(p) = - \sup_{\varphi, \psi} \left\{ \int \varphi \ d\mu : c(x,y) - p(x,y) \geq \varphi(x) + \left( \psi(y) - \int \psi(b) d\nu(b) \right) m(x,y) \right\},
	\]
    where the supremum is taken over $\varphi\in \cC(\mathcal X), \ \psi\in \cC(\mathcal Y)$.
\end{definition}

Notice that~$\sup \eqref{equ:dual_KP} = -H(0)$. Moreover, note than under the Assumption~\ref{asu:existence_optimal_potentials} the supremum is attained, see Theorem~\ref{pro:solution_dual_KP}.

\begin{lemma}\label{lemma:Hconvexlsc}
	Under the Assumption~\ref{asu:existence_optimal_potentials}, the function $H: \cC(\mathcal X \times \mathcal Y) \to \mathbb R \ \cup \ \{\pm \infty\}$ is proper, convex and l.s.c.~with respect to uniform convergence on $\mathcal X \times \mathcal Y$. 
\end{lemma}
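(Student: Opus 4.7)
The plan is to exploit the observation that for any fixed $p \in \cC(\mathcal X \times \mathcal Y)$, the quantity $-H(p)$ is precisely the value of the dual problem~\eqref{equ:dual_KP} with the cost $c$ replaced by $c - p$, which is again continuous on the compact product space. Hence the hypotheses of Theorem~\ref{pro:solution_dual_KP} hold verbatim for this modified cost, so the supremum defining $-H(p)$ is finite and attained for every $p$. Properness then follows immediately: $H(p) > -\infty$ by finiteness of the supremum, while $H(p) < +\infty$ is witnessed by the admissible pair $\varphi(x) = \inf_{y \in \mathcal Y}(c(x,y) - p(x,y))$, $\psi \equiv 0$.

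For convexity, I observe that the admissibility constraint
\[
c(x,y) - p(x,y) \ \geq \ \varphi(x) + \Bigl(\psi(y) - \int \psi(b)\,\nu(db)\Bigr) m(x,y)
\]
is affine in $(p,\varphi,\psi)$ jointly. Given $(\varphi_i, \psi_i)$ admissible for $p_i$ ($i = 0, 1$) and $\lambda \in [0,1]$, the convex combination $(\lambda \varphi_0 + (1-\lambda)\varphi_1,\ \lambda \psi_0 + (1-\lambda)\psi_1)$ is therefore admissible for $\lambda p_0 + (1-\lambda) p_1$. Since the objective $(\varphi, \psi) \mapsto \int \varphi \, d\mu$ is linear, taking suprema gives $-H(\lambda p_0 + (1-\lambda) p_1) \geq \lambda(-H(p_0)) + (1-\lambda)(-H(p_1))$, i.e.~convexity of $H$.

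For lower semi-continuity, I consider $p_n \to p$ uniformly and aim to show $\limsup_n (-H(p_n)) \leq -H(p)$. For each $n$, I select $(\varphi_n, \psi_n)$ admissible for the cost $c - p_n$ with $\int \varphi_n \, d\mu \geq -H(p_n) - 1/n$, then reduce the pair via the analogues of (\ref{def:generalized_c_transform}--\ref{equ:realtion_phi_psi}) in which $c$ is replaced by $c - p_n$; this reduction only increases $\int \varphi_n \, d\mu$. Since a uniformly convergent sequence of continuous functions on a compact metric space is automatically uniformly bounded and equicontinuous, $\{c - p_n\}$ inherits both properties. The bounds established in steps (i)--(iv) of the proof of Theorem~\ref{pro:solution_dual_KP} and the modulus-of-continuity estimate~\eqref{equ:modulus_continuity} depend only on $\inf m$, $\sup m$, the oscillation of the cost, and its modulus of continuity; they therefore yield~\eqref{equ:uniform_boundedness_phi_psi} and equicontinuity of $(\varphi_n, \psi_n)$ with constants independent of $n$. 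Arzela-Ascoli then produces a subsequence $(\varphi_{n_k}, \psi_{n_k}) \to (\varphi, \psi)$ uniformly; passing to the limit in the admissibility inequality and in $\int \psi_{n_k}\,d\nu = 0$ shows that $(\varphi, \psi)$ is admissible for $c - p$, so $\int \varphi \, d\mu \leq -H(p)$, which combined with $\int \varphi \, d\mu = \lim_k \int \varphi_{n_k}\, d\mu \geq \limsup_n (-H(p_n))$ closes the argument.

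The main obstacle lies in this last step: the bounds and equicontinuity statements from the proof of Theorem~\ref{pro:solution_dual_KP} must be shown to remain uniform along the sequence of modified costs $c - p_n$. The enabling observation is that $\{p_n\}$, as a uniformly convergent sequence on a compact space, is equicontinuous; this feeds uniform control of $c - p_n$ into the existing bounds, which otherwise depend only on $\inf m$, $\sup m$, and the oscillations and moduli of continuity of $c$ and $m$.
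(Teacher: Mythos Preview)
Your approach mirrors the paper's almost exactly: properness via Theorem~\ref{pro:solution_dual_KP} applied to the shifted cost $c-p$, convexity via affinity of the admissibility constraint, and lower semi-continuity via uniform equicontinuity of the potentials along the sequence $c-p_n$ combined with Arzel\`a--Ascoli.

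There is one small slip in the l.s.c.\ step. After extracting the Arzel\`a--Ascoli subsequence $(n_k)$ you assert $\lim_k \int \varphi_{n_k}\,d\mu \geq \limsup_n(-H(p_n))$, but the subsequence produced by Arzel\`a--Ascoli need not realize the $\limsup$ of the full sequence; from $\int \varphi_{n_k}\,d\mu \geq -H(p_{n_k}) - 1/n_k$ you only get control of $\limsup_k(-H(p_{n_k}))$, which can be strictly smaller. The fix is standard and is exactly what the paper does: first pass to a subsequence along which $-H(p_n)$ converges to its $\limsup$ (equivalently, $H(p_n)$ to its $\liminf$), and only then apply Arzel\`a--Ascoli to extract a further uniformly convergent subsequence of potentials. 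With that adjustment your argument is complete and coincides with the paper's.
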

\begin{proof}

For any continuous function $p\in \cC(\mathcal X \times \mathcal Y)$, the admissible set $\mathcal{A} (c - p, m)$ and its related dual problem satisfy the Assumption~\ref{asu:existence_optimal_potentials}. Since by Theorem~\ref{pro:solution_dual_KP} the dual problem with cost function $c-p$ admits optimal potentials, $H(p)$ is real-valued and $H$ is a proper.

    Let $p_1, p_2 \in  C(\mathcal X \times \mathcal Y)$ and $t\in [0,1]$. Observe that, for any $(\varphi_1, \psi_1)$ feasible for $p_1$ and any $(\varphi_2, \psi_2)$ feasible for $p_2$, the convex combination $(t\varphi_1 + (1-t) \varphi_2, t \psi_1 + (1-t)\psi_2)$ is admissible for $tp_1 + (1-t)p_2$. Given the linearity of the objective $\int \varphi d\mu$, it follows that $H(tp_1 + (1-t)p_2) \geq tH(p_1) + (1-t) H(p_2)$ and $H$ is convex.

 For semi-continuity, suppose that we have $p_n \to p$ uniformly on $\mathcal X \times \mathcal Y$. Let us take a subsequence $n_k$ such that $H(p_{n_k}) \to \liminf H(p_{n})$. By the converse of the Arzela-Ascoli theorem (see~\cite[Theorem 7.25]{rudin1976principles}), $\{p_n\}$ and therefore $\{p_{n_k}\}$ are equicontinous and bounded. Let $\varphi_{n_k}, \psi_{n_k}$ be the corresponding optimal potentials for $p_{n_k}$. Recall from the proof of Theorem~\ref{pro:solution_dual_KP} that the modulus of continuity of the potential functions depends only on the modulus of the continuity of $c - p_{n_k}$ and $m$. Since $\{p_{n_k}\}$ is equicontinous, then we know also that we may uniformly bound the modulus of continuity of the potentials $\varphi_{n_k}, \psi_{n_k}$ so these are also bounded and equicontinuous. So taking further subsequence via Arzela-Ascoli, without altering notation we have $\varphi_{n_k} \to \varphi, \psi_{n_k}\to \psi$ uniformly. Notice that these limits $(\varphi, \psi)$ will clearly be admissible for $p$, and
\[
H(p) \leq - \int \varphi \ d\mu  = \lim_{k\to \infty} H(p_{n_k}) = \liminf_{n\to \infty} H(p_n)
\]
completes the argument.
\end{proof}

If we endow the space $\cC(\mathcal X \times \mathcal Y)$ with the topology induced by the sup-norm, we know by the Riesz–Markov representation theorem that its dual space is the space of finite signed Radon measures $\mathcal M (\mathcal X \times \mathcal Y)$. With this in mind, the Legendre conjugate function $H^* : \mathcal M (\mathcal X \times \mathcal Y) \to \mathbb R \cup \{+\infty\}$ is given by
\begin{align*}
    H^*(\pi) = &\sup_{p \in C(\mathcal X \times \mathcal Y)} \sup_{\varphi, \psi} \; \Big\{ \int_{\mathcal X \times \mathcal Y} p\ d\pi  + \int_{\mathcal{X}} \varphi \ d\mu \ : \\& \quad   c(x,y) - p(x,y) \geq \varphi(x) + \Big( \psi(y) - \int \psi(b) d\nu(b) \Big) m(x,y) \Big\}.	
\end{align*}
Notice that for $\pi \not \in \mathcal M_+(\mathcal X \times \mathcal Y)$ there exists some $p_0\leq 0$ with $\int p_0 d\pi > 0$, and the constant pair $(\varphi, \psi) = (\inf_{x,y} c(x,y), 0)$ is admissible for the second supremum. Therefore, since $p_0$ can be scaled by arbitrary positive scalar, it holds $H^*(\pi) = \infty$ for $\pi \not \in \mathcal M_+(\mathcal X \times \mathcal Y)$. 

Let us then consider $\pi \in \mathcal M_+(\mathcal X \times \mathcal Y)$. For a fixed pair $(\varphi, \psi)$, the supremum over $p \in \cC(\mathcal X \times \mathcal Y)$ is attained by
\[
p(x,y) = c(x,y) - \varphi(x) - \left(\psi(y) - \int \psi(b) d\nu(b)\right)m(x,y).
\]
Consequently, for $\pi \in \mathcal M_+(\mathcal X \times \mathcal Y)$ the Legendre conjugate becomes 
\begin{align*}
  H^*(\pi) &= \int c(x,y) d\pi(x,y)+ \sup_{\varphi,\psi} \int\varphi d\mu -\int \varphi(x)d\pi(x,y)\\ & \quad \;  - \int \left(\psi(y) - \int \psi(b) d\nu(b)\right)m(x,y) d\pi(x,y) \\
  &= \int c(x,y) d\pi(x,y) + \chi(\pi).
\end{align*}
For the proof of the next theorem we recall the Fenchel–Moreau theorem (cf.~\cite[Theorem 12.2]{rockafellar1970convex} or~\cite[Theorem 3.2.2]{CorreaRafael2023FoCA}). With that in hand we will prove strong duality.

\begin{theorem}[Fenchel–Moreau] 
Let $X$ be an  Hausdorff, locally convex space, then for any proper, lower semi-continuous, and convex function $f : X \to \mathbb R \ \cup \ \{\pm \infty\}$, we have that $f^{**}=f$.
\end{theorem}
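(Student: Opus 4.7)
My plan is to prove Fenchel--Moreau by the standard two-step approach: establish the trivial inequality $f^{**}\leq f$ directly from the definitions, and then argue the reverse inequality $f \leq f^{**}$ by applying the geometric Hahn--Banach theorem to separate the epigraph of $f$ from a point strictly below it in the locally convex Hausdorff space $X \times \mathbb{R}$.

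For the easy direction, I fix $x \in X$ and observe that the Young--Fenchel inequality $f^*(x^*) + f(x) \geq \langle x^*, x\rangle$ holds for every $x^* \in X^*$ by the definition of $f^*$. Rearranging gives $f(x) \geq \langle x^*, x\rangle - f^*(x^*)$, and taking the supremum in $x^*$ yields $f(x) \geq f^{**}(x)$. Note that properness and lower semi-continuity are not used here.

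For the substantive direction, I would argue by contradiction: suppose there exists $x_0 \in X$ with $f^{**}(x_0) < f(x_0)$. Because $f$ is proper, convex, and lower semi-continuous, the epigraph $\mathrm{epi}(f) \subset X \times \mathbb{R}$ is non-empty, convex, and closed, while the point $(x_0, f^{**}(x_0))$ lies outside it. Since $X \times \mathbb{R}$ is locally convex and Hausdorff, the geometric Hahn--Banach theorem produces a continuous linear functional $(x^*, \alpha) \in X^* \times \mathbb{R}$ and a scalar $\gamma \in \mathbb{R}$ with
\begin{equation*}
\langle x^*, x\rangle + \alpha t \;\geq\; \gamma \;>\; \langle x^*, x_0\rangle + \alpha\,f^{**}(x_0) \qquad \text{for all } (x,t)\in \mathrm{epi}(f).
\end{equation*}
Letting $t \to \infty$ (using that $\mathrm{epi}(f)$ is upward closed in the second coordinate) forces $\alpha \geq 0$. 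In the regular case $\alpha > 0$, one normalizes $\alpha = 1$ so that $\langle x^*, x\rangle + f(x) \geq \gamma$ on $\mathrm{dom}(f)$; this gives $f^*(-x^*) \leq -\gamma$, hence $f^{**}(x_0) \geq \langle -x^*, x_0\rangle - f^*(-x^*) \geq \gamma - \langle x^*, x_0\rangle$, which flatly contradicts the strict inequality produced by the separation.

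The main obstacle is the degenerate case $\alpha = 0$, in which the separator is ``vertical'' and yields only $\langle x^*, x\rangle \geq \gamma > \langle x^*, x_0\rangle$ on $\mathrm{dom}(f)$, carrying no information on the values of $f$. The standard remedy is to invoke as a preliminary lemma the existence of a continuous affine minorant $\langle y^*, \cdot\rangle + \beta \leq f$ on $X$, which itself is a separate consequence of Hahn--Banach applied to a point strictly below some $(x_1, f(x_1))$ with $x_1 \in \mathrm{dom}(f)$, chosen so the separating functional is non-vertical. One then tilts this minorant by the vertical separator: for every $\lambda > 0$, the affine function $\langle y^* - \lambda x^*, \cdot\rangle + \beta + \lambda \gamma$ still lies below $f$ on $\mathrm{dom}(f)$, giving $f^*(y^* - \lambda x^*) \leq -\beta - \lambda \gamma$ and thus
\begin{equation*}
f^{**}(x_0) \;\geq\; \langle y^*, x_0\rangle + \beta + \lambda\bigl(\gamma - \langle x^*, x_0\rangle\bigr).
\end{equation*}
Since $\gamma - \langle x^*, x_0\rangle > 0$, sending $\lambda \to \infty$ forces $f^{**}(x_0) = +\infty$, which contradicts the assumption $f^{**}(x_0) < f(x_0)$ (our hypothesis implies $f^{**}(x_0)$ is finite, either because $f(x_0) < \infty$ or directly). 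This closes both cases and yields $f = f^{**}$.
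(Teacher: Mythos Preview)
Your proof is the standard Hahn--Banach separation argument and is essentially correct. One small expository point: in the degenerate case $\alpha=0$ you should note that the preliminary affine minorant also guarantees $f^{**}(x_0)>-\infty$, so that $(x_0,f^{**}(x_0))$ is a genuine point of $X\times\mathbb{R}$ to which separation can be applied; otherwise the contradiction ``$f^{**}(x_0)=+\infty$'' at the end needs the observation that $f^{**}(x_0)<f(x_0)$ already forces $f^{**}(x_0)<+\infty$.

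As for comparison with the paper: the paper does not prove Fenchel--Moreau at all. It is quoted as a classical result with references to Rockafellar and to Correa et al., and is then applied as a black box in the proof of strong duality (Theorem~\ref{the:strong_duality}). So you have supplied a self-contained proof where the authors chose to cite the literature; there is no alternative approach in the paper to compare against.
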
 

\begin{theorem}[Strong Duality]\label{the:strong_duality}
If Assumption~\ref{asu:existence_optimal_potentials} holds and $\mathcal X, \mathcal Y$ are convex, then $\inf \eqref{equ:katorovich_problem} = \sup \eqref{equ:dual_KP}$.
\end{theorem}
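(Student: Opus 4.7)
The plan is to apply the Fenchel--Moreau theorem to the functional $H$ at the origin. By its very definition, $H(0) = -\sup\eqref{equ:dual_KP}$, so strong duality amounts to proving $H(0) = -\inf\eqref{equ:katorovich_problem}$. Lemma~\ref{lemma:Hconvexlsc} already supplies the three properties needed for Fenchel--Moreau: $H$ is proper, convex, and lower semi-continuous on the Banach space $\cC(\mathcal{X}\times \mathcal{Y})$ endowed with the sup-norm topology. This topology is Hausdorff and locally convex, and since $\mathcal{X}\times\mathcal{Y}$ is compact Hausdorff (by Assumption~\ref{asu:existence_optimal_potentials}), the Riesz--Markov--Kakutani theorem identifies its continuous dual with the space $\mathcal{M}(\mathcal{X}\times\mathcal{Y})$ of finite signed Radon measures, under the natural pairing $\langle \pi,p\rangle = \int p\,d\pi$. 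Hence Fenchel--Moreau applies and yields $H^{**}(0) = H(0)$.

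The second step is to evaluate $H^{**}(0)$ using the explicit formula for $H^*$ derived in the discussion immediately preceding the theorem statement. That calculation shows $H^*(\pi) = +\infty$ whenever $\pi \notin \mathcal{M}_+(\mathcal{X}\times\mathcal{Y})$, and for $\pi \in \mathcal{M}_+(\mathcal{X}\times\mathcal{Y})$
\[
H^*(\pi) = \int c(x,y)\,d\pi(x,y) + \chi(\pi),
\]
with $\chi$ as in \eqref{equ:def_chi}. In particular $H^*(\pi)<\infty$ forces $\pi \in \Gamma_m(\mu,\nu)$, where $\chi(\pi)=0$. Consequently,
\[
H^{**}(0) = \sup_{\pi \in \mathcal{M}(\mathcal{X}\times\mathcal{Y})}\bigl\{\langle \pi,0\rangle - H^*(\pi)\bigr\} = -\inf_{\pi \in \Gamma_m(\mu,\nu)} \int c\,d\pi = -\inf\eqref{equ:katorovich_problem}.
\]
Combining this with $-H(0) = \sup\eqref{equ:dual_KP}$ and $H(0) = H^{**}(0)$ yields $\sup\eqref{equ:dual_KP} = \inf\eqref{equ:katorovich_problem}$.

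The main obstacle is not the final chain of equalities, which is essentially bookkeeping once Fenchel--Moreau is invoked, but rather the preparatory work already carried out in Lemma~\ref{lemma:Hconvexlsc} and the $H^*$ computation. Specifically, the lower semi-continuity of $H$ relies on uniformly controlling the modulus of continuity of optimal potentials along a sequence $p_n \to p$, which is where Assumption~\ref{asu:existence_optimal_potentials} (compactness together with $\inf m > 0$) and Theorem~\ref{pro:solution_dual_KP} enter decisively; convexity of $\mathcal{X}$ and $\mathcal{Y}$ is used to keep these analytical steps clean. Once those ingredients are in place, applying Fenchel--Moreau and unwinding the conjugates is the most direct route to strong duality.
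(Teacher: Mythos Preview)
Your proposal is correct and follows essentially the same route as the paper: invoke Lemma~\ref{lemma:Hconvexlsc} to apply Fenchel--Moreau to $H$ at $0$, then unwind $H^{**}(0)$ using the pre-computed expression for $H^*$ to land on $\inf\eqref{equ:katorovich_problem}$. One small caveat: your closing remark that convexity of $\mathcal{X},\mathcal{Y}$ is ``used to keep these analytical steps clean'' is speculative---the paper lists convexity as a hypothesis but does not actually invoke it in the proof of Lemma~\ref{lemma:Hconvexlsc} or the $H^*$ computation, so this comment is not supported by the argument as written.
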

\begin{proof}
    In view of Lemma~\ref{lemma:Hconvexlsc}, the function $H$ satisfies the assumptions of the Fenchel-Moreau theorem. Therefore, it follows
    \begin{align*}
	\sup \eqref{equ:dual_KP} &= -H(0) = -H^{**}(0) = - \sup_{\pi \in \mathcal M (\mathcal X \times \mathcal Y)}  \left\lbrace \langle 0, \pi \rangle - H^*(\pi) \right\rbrace \\
    & = \inf_{\pi \in \mathcal M (\mathcal X \times \mathcal Y)} H^*(\pi) = \inf_{\pi \in \mathcal{M}_+ (\mathcal X \times \mathcal Y)} \int c(x,y) d \pi(x,y) +\chi (\pi) \\
    & = \inf_{\pi \in \Gamma_m (\mu, \nu)} \int c(x,y) d \pi(x,y) = \inf \eqref{equ:katorovich_problem}.
	\end{align*}
    Here we used $H^*(\pi) = \infty$ for $\pi \not \in \mathcal M_+(\mathcal X \times \mathcal Y)$ shown above, and $\xi(\pi) = \infty$ for $\pi \not \in \Gamma_m (\mu, \nu)$ given in~\eqref{equ:def_chi}.
\end{proof}

\subsection{Existence of optimal transport map(s)}\label{sec:existence_optimal_transport_map}

In this subsection, we study whether the optimal transport \textit{plan} of problem~\eqref{equ:katorovich_problem} can be expressed via optimal transport \textit{map(s)}. As the roles of the measure~$\mu$ and~$\nu$ are not symmetric, we define two notions of optimal transport maps. Throughout, $\sharp$ denotes the push-forward measure.

\begin{definition}[Primal and dual optimal transport map] \label{def:optimal_transport_maps}
    Assume that~$\pi^*$ is an optimal transport plan for the problem~\eqref{equ:katorovich_problem}. 
    \begin{enumerate}
        \item[(i)] A function ~$T: \mathcal{X} \to \mathcal{Y}$ is called (primal) optimal transport map if $\pi^*= (\Id_X , T)_{\sharp} \mu$  is given by the push forward of~$\mu$ under the map~$(\Id_{\mathcal{X}}, T)$.
        \item[(ii)] A function~$S: \mathcal{Y} \to \mathcal{X}$ is called dual optimal transport map of~\eqref{equ:katorovich_problem} if $\pi^*= (S , \Id_Y)_{\sharp} \lambda$. Here, the probability measure $\lambda$ is given by the Radon-Nikodym derivative 
        \begin{align*}
            \lambda(dy) =\frac{1}{Z} \frac{1}{m(S(y),y)} \nu(dy),
        \end{align*}
        where~$Z$ denotes the normalization constant transforming $\lambda$ into a probability measure. 
    \end{enumerate}
\end{definition}

\begin{proposition}[Criterion for the existence of optimal transport maps]\label{the:characterization_optimal_maps}
Let~$\pi^*$ be the optimal transport plan for problem~\eqref{equ:katorovich_problem}.
\begin{enumerate}
    \item[(i)] If for a function~$T: \mathcal{X} \to \mathcal{Y}$ it holds 
    \begin{align} \label{eq:support1}
        \supp \pi^* \subset \left\{ (x,T(x)) \ | \ x \in \mathcal{X} \right\},
    \end{align}
    then~$T$ is an optimal transport map of~\eqref{equ:katorovich_problem}.

    \item[(ii)] If for a function~$S: \mathcal{Y} \to \mathcal{X}$ it holds 
    \begin{align} \label{eq:support2}
        \supp \pi^* \subset \left\{ (S(y),y) \ | \ y \in \mathcal{Y} \right\},
    \end{align}
    then~$S$ is a dual optimal transport map of~\eqref{equ:katorovich_problem}.
\end{enumerate}
\end{proposition}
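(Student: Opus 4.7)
The plan is to exploit, in both parts, the elementary observation that a coupling whose support is contained in the graph of a function is fully determined by the appropriate marginal. In part (i) that marginal is the first one, which is prescribed to equal $\mu$, and the identification is essentially automatic. In part (ii) the second marginal of $\pi^*$ is only specified up to the $m$-reweighting imposed by~\eqref{equ:def_gamma_m}, so an extra normalization step is needed.

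For part (i), by~\eqref{eq:support1} the set $\{y = T(x)\}$ has full $\pi^*$-measure, so for every bounded Borel function $\varphi:\mathcal{X}\times\mathcal{Y}\to\mathbb{R}$,
\begin{align*}
\int \varphi(x,y)\,\pi^*(dx,dy)
= \int \varphi(x,T(x))\,\pi^*(dx,dy)
= \int \varphi(x,T(x))\,\mu(dx),
\end{align*}
where the last equality uses that the first marginal of $\pi^*$ equals $\mu$. The right-hand side is by definition $\int \varphi\,d(\Id_{\mathcal{X}},T)_{\sharp}\mu$, so $\pi^*=(\Id_{\mathcal{X}},T)_{\sharp}\mu$, which is exactly the statement that $T$ is an optimal transport map in the sense of Definition~\ref{def:optimal_transport_maps}.

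For part (ii), let $\eta$ denote the second marginal of $\pi^*$. The symmetric version of the argument in (i), applied to~\eqref{eq:support2}, gives $\pi^* = (S,\Id_{\mathcal{Y}})_{\sharp}\eta$. It remains to identify $\eta$ with the probability measure $\lambda$ of Definition~\ref{def:optimal_transport_maps}. Here I plan to plug the representation $\pi^*=(S,\Id_{\mathcal{Y}})_{\sharp}\eta$ into the non-conservative marginal constraint from~\eqref{equ:def_gamma_m}. Setting $M:=\int m(x,y)\,\pi^*(dx,dy)\in(0,\infty)$, one obtains for every $B\in\mathcal{B}(\mathcal{Y})$
\begin{align*}
M\,\nu(B)
= \int_{\mathcal{X}\times B} m(x,y)\,\pi^*(dx,dy)
= \int_B m(S(y),y)\,\eta(dy).
\end{align*}
Since $m>0$ under the standing assumptions, this forces the Radon--Nikodym derivative $\tfrac{d\eta}{d\nu}(y) = \tfrac{M}{m(S(y),y)}$, and enforcing $\eta(\mathcal{Y})=1$ pins down $M=1/Z$ with $Z = \int 1/m(S(y),y)\,\nu(dy)$, so $\eta=\lambda$.

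The only genuine subtlety is the measurability of $T$ and $S$, which is not explicit in the hypothesis but is required for the push-forwards and the Radon--Nikodym step to be meaningful. In applications it is natural to take $T$, $S$ as (universally) measurable selections produced by the support conditions~\eqref{eq:support1}--\eqref{eq:support2}; given this, the proof is entirely routine. The genuinely hard problem is not the present proposition but the subsequent task of exhibiting such maps in concrete regimes, which is what the following subsections address.
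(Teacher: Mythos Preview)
Your proof is correct and follows essentially the same approach as the paper: both arguments reduce to showing that a coupling supported on a graph is the push-forward of the appropriate marginal along $(\Id,T)$ or $(S,\Id)$, then identify that marginal. The paper phrases this via disintegration into conditionals $\pi(dy\mid x)=\delta_{T(x)}$ (resp.\ $\pi(dx\mid y)=\delta_{S(y)}$), while you do it directly via test functions; these are equivalent, and your explicit Radon--Nikodym computation in (ii) makes the identification $\eta=\lambda$ a bit more transparent than the paper's ``short calculation''.
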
    
    
\begin{proof}[Proof of Proposition~\ref{the:characterization_optimal_maps}]

\begin{enumerate}
    \item[(i)] Let~\eqref{eq:support1} hold. We start with disintegrating the probability measure~$\pi^*(x,y)$ into its conditional measures~$\pi(dy|x)$ and marginal~$\bar \pi(dx)$, i.e.
        \begin{align*}
            \pi^* (dx, dy) = \pi(dy|x) \bar \pi(d x).
        \end{align*}
        As~$\pi^* \in \Gamma_m(\mu, \nu)$ the marginal measure is given by~$\bar \pi(d x) = \mu(d x)$. If the support of~$\pi^*$ is contained in the set $\left\{ (x,T(x)) \ | \ x \in \mathcal{X} \right\}$, then the support of the conditional probability measure~$\pi(dy|x)$ is concentrated on a single point~$\left\{ T(x)\right\}$. Hence,~$\pi(dy|x) = \delta_{T(x)}$. This yields the desired representation~$\pi^* = (\Id_x , T)_{\sharp} \mu$.

        \item[(ii)] Let~\eqref{eq:support2} hold, we follow a similar argument as in the first case. This time we disintegrate~$\pi^*$ into
        \begin{align*}
            \pi^* (dx, dy) = \pi(dx| y) \bar \pi(d y).
        \end{align*}
        Again, we get that the conditional measure is given by~$\pi(dx|y) = \delta_{S(y)}$. It is left to verify that the marginal measure is given by $\bar \pi (dy) = \lambda(dy)$, which follows from the~$\pi^* \in \Gamma_{m} (\nu, \mu)$ after a short calculation.
\end{enumerate}
\end{proof}

    \begin{remark}\label{rem:uniqueness_via_optimal_transport_maps}
        If the cost function is convex, then (primal and dual) optimal transport map must be~$\mu$-a.e. unique (or~$\nu$-a.e.~respectively). Indeed, if~$\pi_1 := (\Id_{\mathcal{X}}, T_1)_{\sharp} \mu$ and $\pi_2 := (\Id_{\mathcal{X}}, T_2)_{\sharp} \mu$ are optimal, then by convexity $\pi_3 = \frac{1}{2} \pi_1 + \frac{1}{2} \pi_2$ is also optimal. The optimal transport plan~$\pi_3$ can also be represented via an optimal transport map~$T_3$, i.e.~$\pi_3 = (\Id_{\mathcal{X} }, T_3)_{\sharp} \mu$, which is only possible if~$T_1= T_2$, $\mu$-a.e.. If every optimal transport plan is given by an optimal transport map, this implies a.s.~uniqueness of the optimal transport plan as well (cf.~Proposition~\ref{the:existence_optimal_maps} and Proposition~\ref{the:solution_quadratic_leaky_monge} below).  
    \end{remark}
    \begin{remark}
        If there exists an optimal transport map~$T$ as well as a dual optimal transport map~$S$, then it holds~$(x,T(x))= (S(y), y)$ for~$\pi^*$-a.e.~$(x,y)$. Therefore, $T$ and $S$ are inverse to each other.
    \end{remark}

    From Proposition~\ref{the:characterization_optimal_maps} it becomes clear that in order to show the existence of optimal transport maps, one needs to study the support of the optimal transport plan.

 \begin{proposition}\label{the:characterization_support_optimal_transport_plan}
          Assume that~$\min \eqref{equ:katorovich_problem} = \max \eqref{equ:dual_KP}$, which means that there is~$\pi^* \in \Gamma_{m} (\mu, \nu)$ and~$(\varphi, \psi) \in \mathcal{A}$ such that
    \begin{align}\label{equ:strong_duality_specific}
       \int \varphi(x) \  \mu(dx) = \int c(x,y)  \ d  \pi^*.
    \end{align}
        Then, the support of~$\pi^*$  is contained in the set
        \begin{align}\label{equ:support_pi_contained_in_c_convex_set}
            \cS :=  \left\{ (x,y) \in \mathcal{X} \times \mathcal{Y} \ | \ c(x,y) = \varphi(x) + \left( \psi(y) - \int \psi \ d \nu \right) m(x,y) \right\},
        \end{align} 
        i.e. $\supp \pi^* \subset \cS$. 
        Moreover, consider the family of functions~$f_x: \mathcal{Y} \to \mathbb{R}$ and~$g_y: \mathcal{X} \to \mathbb{R}$ given by 
        \begin{align}\label{equ:criterion_optimal_transport_map}
            f_x(y):= c(x,y) - \left(\psi(y)- \int \psi \ d \nu \right) m(x,y)
        \end{align}
        and
        \begin{align}\label{equ:criterion_dual_optimal_transport_map}
            g_y(x):= \frac{ c(x,y) - \varphi(x)}{m(x,y)}.
        \end{align}
        Then, the set~$\cS$ is given by
        \begin{align}\label{equ:support_pi_characterized via minima}
            \cS = \left\lbrace (x,y) \in \mathcal{X} \times \mathcal{Y} \ | \ y \mbox{ minimizes } f_x \mbox{ and } x \mbox{ minimizes } g_y \right\rbrace.
        \end{align}
        
    \end{proposition}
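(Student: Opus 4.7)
The statement has two parts. The first inclusion $\supp \pi^* \subset \cS$ is obtained by testing the admissibility inequality against $\pi^*$ and exploiting the non-conservative marginal condition to apply strong duality. The second assertion, the characterization of $\cS$ via minimizers of $f_x$ and $g_y$, follows from rearranging the defining equation of $\cS$ together with the $c$-transform identities for optimal dual potentials established in Theorem~\ref{pro:solution_dual_KP}.

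For the first part, I would begin from the admissibility inequality
\begin{align*}
c(x,y) \geq \varphi(x) + \left(\psi(y) - \int \psi \, d\nu\right) m(x,y),
\end{align*}
valid for all $(x,y)$, and integrate it against $\pi^*$. The first marginal condition $(\mathrm{Proj}_1)_{\sharp}\pi^* = \mu$ makes the $\varphi$-term contribute exactly $\int \varphi \, d\mu$. The decisive step is the cancellation of the $\psi$-contribution: by the non-conservative second marginal condition in~\eqref{equ:def_gamma_m} one has $\int \psi(y) m(x,y) \, d\pi^* = \left(\int \psi \, d\nu\right)\left(\int m \, d\pi^*\right)$, which exactly balances the remaining $\left(\int \psi \, d\nu\right)\int m \, d\pi^*$ coming from the inequality. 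This yields $\int c \, d\pi^* \geq \int \varphi \, d\mu$, and the strong-duality hypothesis~\eqref{equ:strong_duality_specific} forces equality. Consequently the admissibility inequality is saturated for $\pi^*$-a.e.~$(x,y)$, and since $c, m, \varphi, \psi$ are continuous the set $\cS$ is closed, so $\supp \pi^* \subset \cS$.

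For the second part, observe that the defining equation of $\cS$ can be rewritten equivalently as $f_x(y) = \varphi(x)$ or, since $m>0$, as $g_y(x) = \psi(y) - \int \psi \, d\nu$. Admissibility of $(\varphi,\psi)$ translates into the pointwise lower bounds $f_x(y') \geq \varphi(x)$ for all $y' \in \mathcal{Y}$ and $g_y(x') \geq \psi(y) - \int \psi \, d\nu$ for all $x' \in \mathcal{X}$. By the reductions carried out at the start of the proof of Theorem~\ref{pro:solution_dual_KP}, one may assume without loss of generality the $c$-transform identities $\varphi(x) = \inf_{y'} f_x(y')$ and $\psi(y) = \inf_{x'} g_y(x')$ together with $\int \psi \, d\nu = 0$. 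Under this normalization, $f_x(y) = \varphi(x)$ is equivalent to $y$ being a minimizer of $f_x$, and $g_y(x) = \psi(y) - \int \psi \, d\nu$ is equivalent to $x$ being a minimizer of $g_y$. Each of these conditions is on its own equivalent to $(x,y) \in \cS$, so in particular their conjunction describes $\cS$ exactly.

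The main obstacle lies in the first step: engineering the cancellation of the two $\psi$-integrals is where the particular form of the second marginal condition in $\Gamma_m(\mu,\nu)$ is genuinely used; without it one cannot bound $\int c \, d\pi^*$ from below by the dual functional $\int \varphi \, d\mu$. The second step is then largely bookkeeping, provided one first invokes the $c$-transform normalization from Theorem~\ref{pro:solution_dual_KP} to align the values of $\varphi$ and $\psi$ with the infima of $f_x$ and $g_y$.
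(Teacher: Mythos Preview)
Your proof is correct and follows essentially the same route as the paper: you integrate the admissibility inequality against $\pi^*$, use the first marginal to obtain $\int\varphi\,d\mu$ and the non-conservative second marginal to cancel the $\psi$-terms, then invoke the hypothesis to force saturation $\pi^*$-a.e.; for the characterization of $\cS$ you reduce (via Theorem~\ref{pro:solution_dual_KP}) to the normalized potentials satisfying $\varphi=\inf_y f_x$ and $\psi=\inf_x g_y$, exactly as the paper does. Your added remark that each of the two minimization conditions is individually equivalent to membership in $\cS$ is a nice clarification that the paper leaves implicit.
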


    \begin{proof}
    Recall from the proof of Theorem~\ref{pro:solution_dual_KP} that an optimal dual solution~$(\varphi, \psi) \in \mathcal{A}$ satisfies conditions~(\ref{def:generalized_c_transform}--\ref{equ:realtion_phi_psi}). Then, in view of definitions of $f_x$ and $g_y$ in~(\ref{equ:criterion_optimal_transport_map}--\ref{equ:criterion_dual_optimal_transport_map}), it holds
    \begin{align}\label{equ:assumption_kantorovich_potentials}
        \varphi(x) = \inf_y f_x(y) \quad \mbox{ and } \quad \psi(y) = \inf_x g_y(x).
    \end{align}
    Integrating the admissibility condition \eqref{equ:DP_admissability_set} with respect to~$\pi$ implies
    \begin{align*}
        \int & c(x,y) \pi(dx, dy)  \geq \int \left( \varphi(x) + \left( \psi(y) - \int \psi(y) \ \nu(dy) \right) m(x,y) \right) \ \pi(dx, dy) \\
        & = \int \varphi(x) \ \mu(dx) + \int \psi(y) m(x,y) \pi(dx, dy) - \int \psi(y) \ \nu(dy) \int  m(x,y)  \ \pi(dx, dy) \\
        & = \int \varphi(x) \ \mu(dx),
    \end{align*}
    where we used in the last step the second marginal condition on~$\pi$ (cf.~\eqref{equ:def_gamma_m}). Hence, the hypothesis~\eqref{equ:strong_duality_specific} jointly with admissibility \eqref{equ:DP_admissability_set} implies the desired statement~\eqref{equ:support_pi_contained_in_c_convex_set}. The second characterization of the set~$\cS$ follows directly from~\eqref{equ:assumption_kantorovich_potentials}.
    \end{proof}

The theory for the existence of optimal transport maps does not easily carry over from the conservative case and needs further investigation. The next statement show the existence of (primal or dual) optimal transport maps if the mass-change factor~$m \approx 1$, i.e.~if the model is close to a classical optimal transport problem. 

\begin{proposition}[Existence of optimal transport maps in the perturbative regime]\label{the:existence_optimal_maps}

    Let $\mathcal{X}, \mathcal{Y} \subset \mathbb{R}^n$ be non-empty, compact, connected sets that are equal to the closure of their interiors (i.e.~$\mathcal{X}, \mathcal{Y}$ are domains). Let~$h,d: [0, \infty) \to \mathbb{R}$ be two smooth, convex, Lipschitz functions such that 
    \begin{itemize}
        \item $h$ is strictly increasing and $\lim_{t \downarrow 0} \frac{h(t) - h(0)}{t} =0$;
        \item $d$ is non-decreasing and $\lim_{t \downarrow 0} \frac{d(t) - d(0)}{t} =0$.
    \end{itemize}
    Additionally, we assume that there is~$\varepsilon>0$
    \begin{align}\label{equ:assumptions_d_and_h}
    \hspace{-0.5cm}
             h''(z)  > \varepsilon d''(z) \quad \mbox{and} \quad
    h'(z) > \varepsilon d'(z)  \qquad \mbox{for all } 0< z \leq \sup_{x\in \mathcal{X} , y \in \mathcal{Y}} |x-y|.      
    \end{align}
    We consider the cost function~$c(x,y):= h(|x-y|)$ and the mass-change factor~$m(x,y)= \exp(- k \ d(|x-y|))$ where $k \geq 0$ is a constant. \newline

    Then, there is a constant~$K$ such that for all $0 \leq k < K$ it holds:
    \begin{enumerate}
    \item[(i)] If the probability measure $\mu \in \mathcal{P} (\mathcal{X})$ absolutely continuous w.r.t.~the Lebesgue measure and the boundary of~$\mathcal{X}$ has Lebesgue-measure 0, then every optimal transport plan is given by a unique optimal transport map~$T: \mathcal{X} \to \mathcal{Y}$.
    
    \item[(ii)] If the probability measure $\nu \in \mathcal{P} (\mathcal{Y})$ absolutely continuous w.r.t.~the Lebesgue measure, and the boundary of~$\mathcal{Y}$ has Lebesgue-measure 0, then then every optimal transport plan is given by a unique dual optimal transport map~$S: \mathcal{Y} \to \mathcal{X}$.
    \end{enumerate}

\end{proposition}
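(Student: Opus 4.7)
The plan is to combine Proposition~\ref{the:characterization_support_optimal_transport_plan} with Rademacher differentiability of the optimal Kantorovich potentials, so that at $\mu$-almost every $x$ (resp.\ $\nu$-almost every $y$) the support of any optimizer is forced to concentrate on a single point. By Theorems~\ref{pro:solution_dual_KP} and~\ref{the:strong_duality}, an optimal dual pair $(\varphi,\psi)\in\mathcal{A}$ exists with $\sup\eqref{equ:dual_KP}=\min\eqref{equ:katorovich_problem}$, so every optimal $\pi^*$ is supported in the set $\mathcal{S}$ from~\eqref{equ:support_pi_contained_in_c_convex_set}. The equi-continuity argument inside the proof of Theorem~\ref{pro:solution_dual_KP} shows that $\varphi,\psi$ inherit the moduli of continuity of $c$ and $m$; since $h,d$ are Lipschitz and $\mathcal{X},\mathcal{Y}$ are compact, both potentials are Lipschitz, with constants bounded uniformly for $k$ in some interval $[0,K_0]$. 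Combined with Rademacher's theorem, the absolute continuity hypothesis, and the boundary being Lebesgue-null, $\varphi$ is differentiable $\mu$-a.e.\ on $\mathcal{X}$ and $\psi$ is differentiable $\nu$-a.e.\ on $\mathcal{Y}$.

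For part~(i), fix such a differentiability point $x_0$ of $\varphi$ and any $y_0$ with $(x_0,y_0)\in\supp\pi^*$. By Proposition~\ref{the:characterization_support_optimal_transport_plan}, $y_0$ minimizes $f_{x_0}$, hence $\varphi(x)\leq f_x(y_0)$ for all $x$ with equality at $x_0$. Since $x\mapsto f_x(y_0)$ is smooth for $x\neq y_0$ and differentiable at $x=y_0$ with zero gradient (using $h'(0^+)=d'(0^+)=0$), the envelope identity $\nabla\varphi(x_0)=\nabla_x f_x(y_0)\big|_{x_0}$ holds. Eliminating $\psi(y_0)-\int\psi\,d\nu$ via the support identity $\bigl(\psi(y_0)-\int\psi\,d\nu\bigr)m(x_0,y_0)=h(r)-\varphi(x_0)$, with $r=|x_0-y_0|$ and $u=(x_0-y_0)/r$, yields
\[
\nabla\varphi(x_0)=G(r)\,u, \qquad G(r):=h'(r)+k\,d'(r)\bigl(h(r)-\varphi(x_0)\bigr),
\]
with $G(0)=0$. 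Setting $M:=\sup_{x,r}|h(r)-\varphi(x)|<\infty$ and applying~\eqref{equ:assumptions_d_and_h} to bound $d''(r)\leq h''(r)/\varepsilon$, together with $d'(r)h'(r)\geq 0$,
\[
G'(r)\;\geq\;h''(r)\Bigl(1-\tfrac{kM}{\varepsilon}\Bigr).
\]
Since $h''(r)>\varepsilon d''(r)\geq 0$ on $(0,\sup_{x,y}|x-y|]$, choosing $k<\varepsilon/M=:K$ makes $G$ strictly increasing on $[0,\sup_{x,y}|x-y|]$. Therefore $u$ and $r=G^{-1}(|\nabla\varphi(x_0)|)$ are uniquely determined by $\nabla\varphi(x_0)$ (with $r=0$ and $y_0=x_0$ when $\nabla\varphi(x_0)=0$). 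Setting $T(x_0):=y_0$, the conditional measure $\pi^*(dy\mid x_0)$ is the Dirac at $T(x_0)$ for $\mu$-a.e.\ $x_0$, so $\pi^*=(\Id_\mathcal{X},T)_\sharp\mu$ by Proposition~\ref{the:characterization_optimal_maps}(i); uniqueness of $T$ follows because the prescription depends only on $(x_0,\nabla\varphi(x_0))$, forcing any two optimizers to coincide.

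Part~(ii) is analogous: from $\psi(y)\leq g_y(x_0)$ with equality at $y_0$, the envelope in $y$ combined with the support substitution gives
\[
\nabla\psi(y_0)=-e^{kd(r)}\bigl[h'(r)+k\,d'(r)\bigl(h(r)-\varphi(y_0+ru)\bigr)\bigr]\,u,
\]
so the radial factor $\tilde H(r):=e^{kd(r)}[h'(r)+kd'(r)(h(r)-\varphi(y_0+ru))]$ must equal $|\nabla\psi(y_0)|$. Here $\varphi$ is evaluated along the ray, and differentiating $\tilde H$ produces an extra term $-kd'(r)\,\nabla\varphi(y_0+ru)\cdot u$ controlled by $kd'(r)L_\varphi$ with $L_\varphi$ uniform in $k\in[0,K_0]$; using~\eqref{equ:assumptions_d_and_h} to bound $d'(r)\leq h'(r)/\varepsilon$ and the boundedness of $h'$, this extra term is absorbed into the principal $h''(r)$ by shrinking $K$ further, giving strict monotonicity of $\tilde H$. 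Hence $x_0$ is uniquely determined by $(y_0,\nabla\psi(y_0))$, and Proposition~\ref{the:characterization_optimal_maps}(ii) produces $S$. The main obstacle throughout is precisely this uniform control: neither $h''$ nor $L_\varphi$ are a priori compatible quantities, and the role of~\eqref{equ:assumptions_d_and_h} is to make the $k$-perturbations arising from $d',d''$ subordinate to the principal curvature $h''$ uniformly in $r\in(0,\sup_{x,y}|x-y|]$ and across all admissible potentials, so that a single threshold $K$ can be extracted.
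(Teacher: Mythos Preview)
Your treatment of part~(i) is essentially the paper's argument: you reach the same radial equation $\nabla\varphi(x_0)=G(r)u$ with $G(r)=h'(r)+k\,d'(r)(h(r)-\varphi(x_0))$ (the paper calls this $F$), and your monotonicity estimate $G'(r)\ge h''(r)(1-kM/\varepsilon)$ is a harmless variant of the paper's bound. No issue there.

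Part~(ii), however, diverges from the paper in a way that creates a real difficulty. The paper uses the \emph{other} first--order condition, namely $\nabla_y f_x(y_0)=0$, which yields
\[
-\nabla\psi(y_0)=u\bigl[h'(r)e^{kd(r)}+k\,\psi(y_0)\,d'(r)\bigr].
\]
Here $\psi(y_0)$ is a fixed, uniformly bounded constant once $y_0$ is fixed, so the bracket depends on $r$ only through $h',d',d$, and its $r$-derivative involves only $h'',d',d''$; strict monotonicity then follows from~\eqref{equ:assumptions_d_and_h} exactly as in part~(i). By contrast, you apply the envelope to $\psi(y)\le g_y(x_0)$ and then substitute the support identity to replace $\psi(y_0)$ by $(h(r)-\varphi(x_0))$, arriving at the radial factor $\tilde H(r)=e^{kd(r)}[h'(r)+kd'(r)(h(r)-\varphi(y_0+ru))]$. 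Since you are now varying $x=y_0+ru$ to test uniqueness, $\varphi(y_0+ru)$ genuinely depends on $r$, and differentiating produces the extra term $-k\,d'(r)\,\nabla\varphi(y_0+ru)\cdot u$. Your claim that this term ``is absorbed into the principal $h''(r)$'' is the gap: after using $d'(r)\le h'(r)/\varepsilon$ you need $k\,L_\varphi\,h'(r)/\varepsilon$ to be dominated by $h''(r)$ uniformly on $(0,D]$, i.e.\ $\sup_{(0,D]} h'(r)/h''(r)<\infty$. Nothing in the hypotheses furnishes this bound directly (the assumption~\eqref{equ:assumptions_d_and_h} compares $h'$ with $d'$ and $h''$ with $d''$, not $h'$ with $h''$), and the favorable positive term $2k d'(r)h'(r)$ you also pick up does not compensate, since $L_\varphi$ is of order $\sup h'$ while $h'(r)\to 0$ as $r\to 0$. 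The clean fix is exactly what the paper does: do not substitute $\psi(y_0)$ away---keep it as a bounded constant and differentiate the bracket $h'(r)e^{kd(r)}+k\psi(y_0)d'(r)$ in $r$, which brings in only $d''$ (controlled by $h''/\varepsilon$) and no $\nabla\varphi$ term at all.
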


\begin{proof}[Proof of Proposition~\ref{the:existence_optimal_maps}]
Under the current assumptions, Theorems~\ref{the:solution_KP},~\ref{pro:solution_dual_KP}, and~\ref{the:strong_duality} hold. Consequently, the optimal dual solution $(\varphi, \psi) \in \mathcal{A}$ exists and the assumptions of Proposition~\ref{the:characterization_support_optimal_transport_plan} are satisfied. 
Hence, for any point~$(x,y)\in \supp \pi^*$ it holds that the value~$x$ is a minimizer of the function $g_y(x) = \frac{c(x,y)- \varphi(x)}{m(x,y)}$ and the value~$y$ is a minimizer of the function $f_x(y) = c(x,y) -  \psi(y) m(x,y)$. We now show that~$f_x$ and~$g_y$ are a.e.~differentiable.\smallskip

Recall from the proof of Theorem~\ref{pro:solution_dual_KP} that $\varphi, \psi$ are uniformly bounded. Specifically, we can show that 
\begin{align}\label{equ:uniform_bounds_psi_phi}
-\bar{h}(1 + 2 e^{2k\bar{d}}) \leq \varphi (x) \leq \overline{h}  \quad \text{ and } \quad -2 \bar{h} e^{k \bar{d}}  \leq \psi(y) \leq 2 \bar{h} e^{k \bar{d}}    
\end{align} 
where $\overline{h} := \sup\limits_{x \in \mathcal{X}, y \in \mathcal{Y}} | h(|x - y|) |$ and $\overline{d} := \sup\limits_{x \in \mathcal{X}, y \in \mathcal{Y}} |d(|x - y|)|$. In particular those bounds are uniform in~$|k| \leq K$. As~$\mathcal{X}$ and~$\mathcal{Y}$ are compact and additionally~$\varphi$ and~$\psi$ are uniformly bounded, the functions~$\varphi$ and~$\psi$ inherit the modulus of continuity, up to a universal constant, from~$c$ and~$m$ (see proof of Theorem~\ref{pro:solution_dual_KP} for more details). Therefore~$\varphi$ and~$\psi$ are both Lipschitz, which implies together with our assumptions on~$c$ and~$m$ that~$f_x$ and~$g_y$ are also Lipschitz (see~\eqref{equ:criterion_optimal_transport_map} and~\eqref{equ:criterion_dual_optimal_transport_map}). It follows from Rademacher's theorem (here we use that~$\mathcal{X}$ is the closure of its interior) that if~$\mu$ is absolutely continuous w.r.t.~the Lebesgue measure, the functions~$g_y$ are~$\mu$-a.e.~differentiable. Similar the functions~$f_x$ are~$\nu$-a.e.~differentiable if~$\nu$ is absolutely continuous w.r.t.~the Lebesgue measure. \smallskip

Let us turn to the verification of~(i).
Recall from the characterization~\eqref{equ:support_pi_characterized via minima} that for all $(x,y) \in \supp \pi^*$, the value $x$ minimizes the function $g_y$. For~$(x,y) \in \supp \pi^*$ it holds (recall that boundary points are negligible by assumption)
\begin{align}\label{equ:characteristic_equation_optimal_map}
    0 &= m(x,y) \nabla_x g_y(x)  \\ & =  \nabla_x c(x,y) - \nabla_x \varphi (x) - \Big(  c(x,y) - \varphi (x)  \Big)  \nabla_x   \ln m(x,y) \qquad \qquad \mbox{$\mu$ a.e.~$x$.} 
\end{align}
For~$c(x,y)= h(|x-y|)$ and~$m(x,y)= \exp (-k d(|x-y|)) >0$ we obtain
\begin{align}\label{equ:characteristic_equation_1}
    \Big(  h'(|x-y|)   + k  d'(|x-y|) \left(h(|x-y|) - \varphi(x) \right) \Big) \frac{x-y}{|x-y|} =  \nabla_x \varphi (x).
\end{align}

We now argue that after keeping~$x$ fix the equation~\eqref{equ:characteristic_equation_1} uniquely characterizes~$y$. 
For domains $\mathcal{X}, \mathcal{Y}$ there exists a finite $D:= \sup_{x,y} |x-y|$ and since~$(x,y)$ solves the equation~\eqref{equ:characteristic_equation_1}, it must hold that~$x-y = z \ \frac{\nabla_x \varphi(x)}{|\nabla_x \varphi(x)|}$ for some value $-D \leq z \leq D$. In light of~\eqref{equ:characteristic_equation_1}, the scalar $z$ must satisfy
\begin{align}\label{eq:T(z)}
    \frac{z}{|z|} \underbrace{\Big(  h'(|z|) +  k  d'(|z|) \left(h(|z|) - \varphi(x) \right)  \Big)}_{=:F(|z|)} = |\nabla_x \varphi(x)|.
\end{align}
In view of the uniform upper bound $\varphi(x) \leq \bar{h}$ and $z \in [-D, D]$ it holds $h(|z|) - \varphi(x)  \geq -2\bar{h}$. Then, for $0 < k < \frac{\varepsilon}{2 \bar{h}}$ it holds for $|z| > 0$
\begin{align*}
    F(|z|) =  h'(|z|) +  k \  d'(|z|) \left(h(|z|) - \varphi(x) \right)  \geq  h'(|z|) - 2 \bar{h} \ k \  d'(|z|) \overset{\eqref{equ:assumptions_d_and_h}}{>} 0.
\end{align*}
Additionally, it follows form our assumptions that~$F(|z|) = 0 $ if and only if~$|z|=0$. Therefore, in the case~$|\nabla_x \varphi(x)| =0$, it follows that~$y=x$ is uniquely characterized by~$x$. \smallskip

In the case~$|\nabla_x \varphi(x)| \neq 0$, the equation~\eqref{eq:T(z)} may only have solutions $z > 0$ for sufficiently small $k >0$ . We  now show that $F(z)$ is strictly increasing for sufficiently small $k >0$. In view of the assumptions, it holds that
\begin{align}
    F'(z) & = h''(z) +  k   d''(z) \left(h(z) - \varphi(x) \right) + k d'(z) h'(z) \\
    & \geq  h''(z) - 2 \bar{h} k d''(z) > (\varepsilon - 2 \bar{h} k) d''(z) \geq 0,
\end{align} 
for any~$0\leq k\leq \frac{\varepsilon}{2 \bar{h}}$.
This concludes the argument for $y$ being uniquely characterized via~\eqref{eq:T(z)} given $x$, which proves part (i).

The argument for~(ii) follows the footstep of~(i). We have seen above that the functions~$f_x$ are~$\nu$-a.e.~differentiable.  Taking the gradient of~$f_x$ wrt.~$y$  yields for~$(x,y) \in \supp \pi^*$ (boundary points are again negligible by assumption)
\begin{align}\label{equ:characteristic_equation_dual_optimal_map}
    \nabla_y c(x,y) - m(x,y) \ \nabla_y \psi (y)   - \psi (y)  \nabla_y  m(x,y) = 0 \qquad \qquad \mbox{$\nu$ a.e.~$y$.}
\end{align}
    Plugging in the definition of~$c$ and~$m$ and rearranging terms yields
    \begin{align}
        \frac{x-y}{|x-y|} \left( h'(|x-y|) e^{k \ d(|x-y|)} + k \psi(y)  d'(|x-y|)  \right) = - \nabla_y \psi(y).
    \end{align}
    Using the Ansatz~$x-y = z \frac{\nabla_y \psi(y)}{|\nabla_y \psi(y)|}$ yields
\[
\frac{z}{|z|}  \left(  h'(|z|  ) e^{k d(|z| )}   + k  \psi (y)    d'(|z|)  \right) = - | \nabla_y \psi(y)|.
\]
    Due to~\eqref{equ:uniform_bounds_psi_phi}, the function~$|\psi|$ is uniformly bounded independent of~$k$. Similar to the argument in part (i) one can show that for~$0<k$ small enough
    \begin{align*}
       \left(  h'(|z|  ) e^{k d(|z| )}   + k  \psi (y)    d'(|z|)  \right) > 0~ \qquad \mbox{ if }  |z| >0,
    \end{align*}
    and the let hand side can only be zero at~$|z|=0$. This means in the case~$| \nabla_y \psi(y)| =0$, there is only one valid choice~$x=y$. In the case~$| \nabla_y \psi(y)| \neq 0$, the solutions require~$z < 0$. Now, one can argue that for fixed~$y$ the function in the bracket is strictly monotone increasing in~$|z|$ provided~$k$ small enough and therefore has at most one solution. This characterizes~$x$ uniquely in terms of~$y$ and closes the argument.  
\end{proof}

In the proof of Proposition~\ref{the:existence_optimal_maps}, the equations~\eqref{equ:characteristic_equation_optimal_map} and~\eqref{equ:characteristic_equation_dual_optimal_map} play a special role. They are used to characterize the existence of optimal transport maps motivating the following definition. 
\begin{definition}
     Let us consider a pair~$(\varphi, \psi)$ of optimal Kantorovich potentials solving~\eqref{equ:dual_KP} . Then we call the equation~\eqref{equ:characteristic_equation_optimal_map} characteristic equation for the optimal transport map. Similarly, the equation~\eqref{equ:characteristic_equation_dual_optimal_map} is called characteristic equation for the dual optimal transport map. 
\end{definition}

Now, let us illustrate with an example that optimal transport maps can also exists when the mass-change factor~$m$ is not close to~$1$, i.e.~when one does not expect the non-conservative optimal transport to be close to a classical optimal transport. 

\begin{proposition}[Existence of optimal transport maps in the quadratic leaky-Monge problem]\label{the:solution_quadratic_leaky_monge}
     Let $\mathcal{X}, \mathcal{Y} \subset \mathbb{R}^n$ be non-empty, compact, connected sets that are equal to the closure of its interior (i.e.~$\mathcal{X}, \mathcal{Y}$ are domains). We consider the quadratic cost function~$c(x,y):= \frac{1}{2} |x-y|^2$ and the mass-change factor~$m(x,y)= 1- \frac{k}{2}|x-y|^2$ for a constant~$k>0$ such that~$\inf_{x \in \mathcal{X} y \in \mathcal{Y}} m (x,y) > 0$. Then it holds:
    \begin{enumerate}[label=(\roman*)]
        \item If the probability measure $\mu$ absolutely continuous w.r.t.~the Lebesgue measure and the boundary of~$\mathcal{X}$ has Lebesgue-measure 0, then there exists an optimal transport map~$T: \mathcal{X} \to \mathcal{Y}$.
        \item If the probability measure $\nu$ absolutely continuous w.r.t.~the Lebesgue measure and the boundary of~$\mathcal{Y}$ has Lebesgue-measure 0, then there exists a dual optimal transport map~$S: \mathcal{Y} \to \mathcal{X}$.
    \end{enumerate}
\end{proposition}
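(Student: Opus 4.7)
The plan is to follow the strategy of Proposition \ref{the:existence_optimal_maps}: invoke Proposition \ref{the:characterization_support_optimal_transport_plan} together with Rademacher's theorem to obtain a characteristic equation at $\mu$-a.e.~(resp.~$\nu$-a.e.) point $(x,y) \in \supp \pi^*$, and argue that this equation determines $y$ uniquely from $x$ (resp.~$x$ uniquely from $y$). First, one checks that Theorems \ref{the:solution_KP}, \ref{pro:solution_dual_KP} and \ref{the:strong_duality} apply under the stated hypotheses, yielding an optimal plan $\pi^*$ and optimal Kantorovich potentials $(\varphi,\psi)$ with $\int \psi\,d\nu = 0$ such that, at every $(x,y)\in\supp\pi^*$, $x$ minimizes $g_y$ and $y$ minimizes $f_x$. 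Since $c$ and $m$ are smooth and bounded on the compact set $\mathcal{X}\times\mathcal{Y}$ with $\inf m > 0$, the argument in the proof of Theorem \ref{pro:solution_dual_KP} shows that $\varphi$ and $\psi$ are uniformly bounded and Lipschitz, hence (by Rademacher) differentiable $\mu$-a.e.~and $\nu$-a.e.~respectively; the negligible-boundary hypothesis makes boundary critical points inessential.

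The crucial observation for this quadratic pairing is the algebraic identity
\[
m(x,y) + k\,c(x,y) \;=\; \bigl(1 - \tfrac{k}{2}|x-y|^2\bigr) + k\cdot\tfrac{1}{2}|x-y|^2 \;\equiv\; 1.
\]
For part (i), plugging $\nabla_x c = x-y$ and $\nabla_x m = -k(x-y)$ into the characteristic equation \eqref{equ:characteristic_equation_optimal_map} and simplifying using this identity collapses everything to
\[
\frac{1 - k\,\varphi(x)}{m(x,y)}\,(x - y) \;=\; \nabla_x\varphi(x) \qquad \text{$\mu$-a.e.},
\]
so $x-y$ is parallel to $\nabla_x\varphi(x)$. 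Writing $x-y = \lambda\,\nabla_x\varphi(x)/|\nabla_x\varphi(x)|$ and substituting $m(x,y) = 1 - \tfrac{k}{2}\lambda^2$ reduces the identification of $y$ to a single scalar quadratic in $\lambda$ whose coefficients depend only on $|\nabla_x\varphi(x)|$ and $\varphi(x)$. A sign analysis --- in the spirit of steps (i)--(iii) of the proof of Proposition \ref{the:existence_optimal_maps} --- shows that $1 - k\,\varphi(x) > 0$ (via the upper bound $\varphi \leq \sup c$ together with $\inf m > 0$) and selects a unique root giving a pair $(x,y)$ with $y\in\mathcal{Y}$; when $\nabla_x\varphi(x) = 0$, the collapsed equation forces $y = x$. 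In either case $y$ is uniquely determined by $x$, so Proposition \ref{the:characterization_optimal_maps}(i) furnishes the optimal transport map $T$.

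Part (ii) is entirely symmetric. At $\nu$-a.e.~$(x,y)\in\supp\pi^*$ the characteristic equation \eqref{equ:characteristic_equation_dual_optimal_map}, with $\nabla_y c = y-x$, $\nabla_y m = k(x-y)$ and $\int\psi\,d\nu=0$, simplifies via the same identity $m+kc\equiv 1$ to
\[
\bigl(1 + k\,\psi(y)\bigr)\,(y - x) \;=\; m(x,y)\,\nabla_y\psi(y) \qquad \text{$\nu$-a.e.}
\]
Hence $y-x$ is parallel to $\nabla_y\psi(y)$ and its magnitude satisfies an analogous scalar quadratic. The uniform bound on $\psi$ (cf.\ the analogue of \eqref{equ:uniform_bounds_psi_phi}) yields $1 + k\,\psi(y) > 0$, selects the correct root and pins down $x$ uniquely from $y$, so Proposition \ref{the:characterization_optimal_maps}(ii) delivers the dual map $S$.

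The main obstacle I expect is precisely the sign and root-selection step: one must exclude that $1 - k\varphi(x)$ (resp.~$1 + k\psi(y)$) can vanish or change sign on a set of positive measure, and confirm that exactly one real root of the associated scalar quadratic corresponds to a pair $(x,y)\in\mathcal{X}\times\mathcal{Y}$ with $m(x,y)>0$. This plays the same role as steps (ii)--(iii) in the proof of Proposition \ref{the:existence_optimal_maps}, but is no longer perturbative: the algebraic identity $m + kc \equiv 1$ replaces the small-$k$ hypothesis by an exact quadratic relation, at the price of having to carry out the sign analysis carefully using only the a priori $L^\infty$-bounds on $\varphi$ and $\psi$ together with $\inf m > 0$.
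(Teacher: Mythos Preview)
Your outline for part (i) matches the paper's proof: the identity $m+kc\equiv 1$ collapses the characteristic equation to $(1-k\varphi(x))(x-y)=m(x,y)\nabla_x\varphi(x)$, the bound $\varphi\le\sup c=\tfrac12 D^2$ together with $\inf m=1-\tfrac{k}{2}D^2>0$ gives $1-k\varphi(x)>0$, and the resulting scalar equation (linear vs.\ concave parabola, restricted to the region where $m>0$) has a unique admissible root.

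Part (ii), however, has a genuine gap. You propose to obtain $1+k\psi(y)>0$ from the a~priori $L^\infty$ bound on $\psi$. The only lower bound available from Theorem~\ref{pro:solution_dual_KP} is $\psi(y)\ge -\sup c/\inf m$, which yields
\[
1+k\psi(y)\;\ge\;1-\frac{k\cdot\tfrac12 D^2}{1-\tfrac{k}{2}D^2}\;=\;\frac{1-kD^2}{1-\tfrac{k}{2}D^2}.
\]
This is positive only when $kD^2<1$, whereas the hypothesis $\inf m>0$ merely says $kD^2<2$. So for $1\le kD^2<2$ your sign argument breaks down and you cannot rule out $1+k\psi(y)\le 0$ from the uniform bound alone; the root-selection step then no longer isolates a unique $x$.

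The paper closes this gap not with a uniform bound but with the \emph{pointwise} relation on $\supp\pi^*$: by Proposition~\ref{the:characterization_support_optimal_transport_plan} and \eqref{equ:realtion_phi_psi}, for $(x,y)\in\supp\pi^*$ one has $\psi(y)=\dfrac{c(x,y)-\varphi(x)}{m(x,y)}$ exactly, and then the identity $m+kc\equiv 1$ gives
\[
1+k\psi(y)\;=\;\frac{m(x,y)+kc(x,y)-k\varphi(x)}{m(x,y)}\;=\;\frac{1-k\varphi(x)}{m(x,y)}\;>\;0,
\]
reducing the sign question in (ii) to the one already settled in (i). This is precisely where the algebraic identity does its work in the dual case --- not in the simplification of the characteristic equation (which is direct), but in linking $\psi$ back to $\varphi$ on the support of the optimal plan.
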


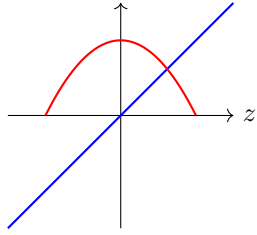
\begin{figure}
  \centering
  \begin{tikzpicture}[scale=1]
    \draw[->] (-1.5,0) -- (1.5,0) node[right] {$z$};
    \draw[->] (0,-1.5) -- (0,1.5);  

    \draw[blue, thick, domain=-1.5:1.5, samples=200]
      plot (\x,\x);

    \draw[red, thick, domain=-1:1, samples=200]
      plot (\x,{1 - \x*\x});
  \end{tikzpicture}
  \caption{Uniqueness of the optimal transport map for the quadratic leaky Monge problem (see proof of Proposition~\ref{the:solution_quadratic_leaky_monge}).}
  \label{fig:solution_quadratic_leaky_monge}
\end{figure}

    \begin{proof}[Proof of Proposition~\ref{the:solution_quadratic_leaky_monge}]
    The first part of the proof follows the same steps as the proof of Proposition~\ref{the:existence_optimal_maps}, we do not repeat them. Recall that the functions $g_y(x) = \frac{c(x,y)- \varphi(x)}{m(x,y)}$ and $f_x(y) = c(x,y) -  \psi(y) m(x,y)$ are Lipschitz and therefore Lebesgue a.e.~differentiable.

    Let us turn to the verification of (i). The characteristic equation for the optimal transport map is given for~$(x,y) \in \supp \pi^*$ by
    \begin{align}\label{equ:characteristic_equation_leaky_monge}
        \frac{(x-y) - \nabla_x \varphi (x)}{1- \frac{k}{2} |x-y|^2} + \frac{\frac{1}{2} |x-y|^2 - \varphi (x)}{\left( 1- \frac{k}{2}|x-y|^2\right)^2} \ k (x-y) = 0 \qquad \mbox{$\mu$-a.e.~$x$.}
    \end{align}
    Rearranging terms yields
    \begin{align}\label{equ:characteristic_equation_optimal_map_leaky_monge}
        (x-y) \left(1 - k \varphi(x) \right) =  \left( 1- \frac{k}{2}|x-y|^2 \right) \nabla_x \varphi(x).
    \end{align}
    We now argue that after keeping~$x$ fixed the equation~\eqref{equ:characteristic_equation_optimal_map_leaky_monge} uniquely characterizes~$y$. By~\eqref{equ:varphi_upper_bound}  we observe that~$\varphi(x) \leq \sup\limits_{x \in \mathcal{X} , y \in \mathcal{Y}} \frac{1}{2} |x-y|^2$. Combining this with our assumption
    \begin{align*}
            \inf_{x \in \mathcal{X} , y \in \mathcal{Y} } m(x,y) = 1-\frac{k}{2}\sup_{x\in \mathcal{X},y \in \mathcal{Y}}|x-y|^2 >0 ,
    \end{align*}
    it follows that 
    \begin{align}\label{equ:crucial_ingredient_non_perturbative}
        1 - k \varphi(x) >0.
    \end{align}
    In the case of~$\nabla_x \varphi_x =0$ it follows from~\eqref{equ:characteristic_equation_optimal_map_leaky_monge} that~$x-y=0$. Let us consider the case~$\nabla_x \varphi(x) \neq 0$. An Ansatz~$x-y = z \ \nabla_x \varphi(x)$ yields the equation
    \begin{align}
        \underbrace{1- \frac{k}{2}  |\nabla_x \varphi(x)|^2 z^2}_{ = 1- \frac{k}{2}|x-y|^2 = m(x,y) > 0} = (1 - k \varphi(x) ) z.
    \end{align}
    Let us interpret both sides as functions of $z$: The left-hand side is a concave quadratic function (as $- \frac{k}{2}  |\nabla_x \varphi(x)|^2 < 0$), which is restricted to the area where it attains positive values (as $m(x, y) > 0$). The right-hand side is a linear function with a zero intercept. Consequently, there exists a unique intersection, see Figure~\ref{fig:solution_quadratic_leaky_monge}. 
    Hence,~$y$ is uniquely characterized by~$x$ and an application of~Proposition~\ref{the:characterization_optimal_maps} yields the desired statement. \smallskip
    
    Let us turn to the verification of~(ii). The characteristic equation for the dual optimal transport map is given for $(x,y)\in \supp \pi^*$ by
    \begin{align}\label{equ:characteristic_equation_leaky_monge_dual}
        (y-x) + \psi(y) \  k \ (y-x) - \nabla_y \psi(y) \ \left( 1- \frac{k}{2}|x-y|^2 \right) = 0 \qquad \mbox{$\nu$-a.e.~$y$.}
    \end{align}
    Rearranging the equation gives
   \begin{align}
   \nabla_y \psi(y) \ \left( 1- \frac{k}{2}|x-y|^2 \right)  = (1 + k \psi(y) ) (y-x). 
    \end{align}
    Observe that, in light of~\eqref{equ:realtion_phi_psi} and Proposition~\ref{the:characterization_support_optimal_transport_plan}, for $(x,y) \in \supp \pi^*$ it a.s.~holds $\psi(y) = \frac{c(x,y) - \varphi(x)}{m(x,y)}$. This yields
    \begin{align*}
        1+ k \psi(y)  = 1 + k \frac{c(x,y) - \varphi(x)}{m(x,y)} & = \frac{m(x,y) + k c(x,y) - k\varphi(x)  }{m(x,y)}  = \frac{1- k \varphi(x)}{m(x,y)} \overset{\eqref{equ:crucial_ingredient_non_perturbative}}{>}0.
    \end{align*}
    Using the Ansatz~$y-x = z \ \nabla_y \psi(y)$ gives the equation
    \begin{align}
        1- \frac{k}{2}  |\nabla_y \psi(y)|^2 z^2 = (1 + k\psi(y)) z.
    \end{align}
   Observing that~$1- \frac{k}{2}  |\nabla_y \psi(y)|^2 z^2 = 1- \frac{k}{2}|x-y|^2= m(x,y) > 0$ we can conclude with the same argument as for (i).
   \end{proof}

\subsection{Dynamic formulation of non-conservative optimal transport}\label{sec:dynamic_formulation}
In the dynamic formulation, the transport of mass is modeled along a trajectory interpolating continuously between both marginal constraints. It has the advantage of allowing for finer modeling and interpretation of the mass transport. In our example of optimal re-balancing, the dynamic formulation would allow to calculate specific trading rates allowing for finer modeling for example by including dynamic price impact models. For simplicity, let us consider the cost functions if the form~$c(x,y)= |y-x|^a$ for~$a\geq 1$. 

\begin{definition}[Lagrangian formulation of~\eqref{equ:katorovich_problem}] \label{pro:lagrangian_NCOT}
  We consider the compact space~$\mathcal{X} \subset \mathbb{R}^n$ with a cost function~$c: \mathcal{X} \times \mathcal{X} \to \mathbb{R}$, $c(x,y)= |y-x|^a$ for some~$a\geq 1$, and a positive mass-change factor~$m: \mathcal{X} \times \mathcal{X} \to (0, \infty)$. Given two probability measures~$\mu, \nu \in \mathcal{P}(\mathcal{X})$, that are absolutely continuous w.r.t.~the Lebesgue measure, we define the set~$\mathcal{V}_m(\mu, \nu)$ of admissible vector fields~$v: [0,1] \times \mathcal{X} \to \mathbb{R}^n$ in the following way:\smallskip

  Let~$X(t, x)$ be the flow map given by the solution of the ODE
  \begin{align}\label{equ:lagrangian_coordinates}
      \partial_t X(t, x) = v(t, X(t, x)) \qquad \mbox{and} \qquad X(0,x)=x.
  \end{align}
  Then~$v \in \mathcal{V}_{m}(\mu, \nu)$ if and only if
  \begin{align}\label{equ:admissible_vector_field}
      \left( X(1, \cdot) \right)_{\sharp} \left( \frac{m(\cdot,X(1,\cdot))}{\int_\mathcal{X} m(x ,X(1,x)) \mu(dx)} \mu \right) =  \ \nu.
  \end{align}
    Then the dynamic formulation of~\eqref{equ:katorovich_problem} is given by
    \begin{align}\label{equ:generalized_benamou_brenier0} \tag{GBB}
        \inf_{v \in \mathcal{V}_m(\mu, \nu)} \int_{0}^1 \int_{\mathcal{X}} |v(t, X(t)|^a \mu(dx) dt.
    \end{align}
\end{definition}

\begin{remark}\label{rem:eulerian_formulation}
    The equations~\eqref{equ:lagrangian_coordinates} define Lagrangian coordinates of the transport of mass along the vector field~$v$. The equivalent Eulerian formulation~$p:[0,1] \times \mathcal{X} \to \mathbb{R}$ is given by the classical continuity equation 
    \begin{align}\label{equ:continuity_equation}
        \partial_t p = \nabla \cdot (v p) \qquad \mbox{and} \qquad p(t=0)= \mu.
    \end{align}
    Therefore,~\eqref{equ:generalized_benamou_brenier0} is a generalization of the classical Benamou-Brenier formulation~\cite{benamou2000computational} and recovers it in the special case~$m(x,y)=1$.
\end{remark}

    Let us briefly discuss the well-posedness of the dynamic optimization problem~\eqref{equ:generalized_benamou_brenier0}. One would require a-priori regularity assumptions on the vector field~$v(t,x)$ to guarantee that Lagrangian coordinates~\eqref{equ:lagrangian_coordinates} exist. However, we did not state those regularity assumptions on purpose. The proof of Proposition~\ref{the:equivalence_dynamic_static} below shows that the existence of an optimal transport map~$T: \mathcal{X} \to \mathcal{X}$ is sufficient to guarantee the well-posedness of~\eqref{equ:generalized_benamou_brenier0}. This leverages one of the main advantage of the relative simple-minded formulation of~\eqref{equ:katorovich_problem} compared to other variants of optimal transport that allow for mass creation or destruction. The advantage is that~\eqref{equ:katorovich_problem} allows to stay close to the classical theory of optimal transport. Here we use the fact that the existence of optimal transport maps is relative easy to deduce for~\eqref{equ:katorovich_problem}, simplifying our arguments. However, let us remark that that the existence of optimal transport maps is not needed in the classical setting for deducing the dynamic formulation (see e.g.~Chapter~8 in~\cite{ambrosio2008gradient}). We expect that this carries over to the non-conservative setting of~\eqref{equ:katorovich_problem}.
    The next statement shows the equivalence of the static problem~\eqref{equ:katorovich_problem} and the dynamic problem~\eqref{equ:generalized_benamou_brenier0}.
\begin{proposition}\label{the:equivalence_dynamic_static}
    Let us consider the setting of Definition~\ref{pro:lagrangian_NCOT} and assume that an optimal transport map~$T: \mathcal{X} \to \mathcal{X}$ exists for~\eqref{equ:katorovich_problem}. Then it holds   
    \begin{equation}\label{equ:generalized_benamou_brenier} 
        \inf_{v \in \mathcal{V}_m(\mu, \nu)}  \int_{0}^1 \int_{\mathcal{X}} |v(t, X(t,x))|^a \mu(dx) dt = \inf_{\pi \in \Gamma_m(\mu,\nu)} \int_{\mathcal{X}\times \mathcal{X}} |y-x|^{a} \pi(dx,dy) .
    \end{equation}   
\end{proposition}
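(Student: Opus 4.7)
The strategy is the standard Benamou--Brenier scheme adapted to the non-conservative setting: prove $\leq$ and $\geq$ separately, using the flow map to pass from \eqref{equ:generalized_benamou_brenier0} to a static plan, and straight-line interpolation to pass from the optimal transport map back to the dynamic side.

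For the inequality $\inf \eqref{equ:katorovich_problem} \leq \inf \eqref{equ:generalized_benamou_brenier0}$, I would fix an admissible vector field $v \in \mathcal{V}_m(\mu,\nu)$ with associated flow $X(t,x)$, and set $T^v(x) := X(1,x)$, $\pi^v := (\Id_{\mathcal{X}}, T^v)_{\sharp}\mu$. The first marginal of $\pi^v$ is $\mu$ by construction. For the second marginal, a direct change of variables gives $(\operatorname{Proj}_2)_\sharp (m\,\pi^v) = T^v_\sharp\bigl(m(\cdot, T^v(\cdot))\mu\bigr)$, and dividing by the total mass $\int m(x,T^v(x))\mu(dx)$ together with the admissibility \eqref{equ:admissible_vector_field} yields exactly $\nu$. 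Hence $\pi^v \in \Gamma_m(\mu,\nu)$. Then writing $X(1,x) - x = \int_0^1 v(t,X(t,x))\,dt$ and invoking Jensen for the convex map $|\cdot|^a$, $a \ge 1$,
\begin{equation*}
\int_{\mathcal{X}\times\mathcal{X}} |y-x|^a \pi^v(dx,dy) = \int_{\mathcal{X}} |X(1,x)-x|^a \mu(dx) \leq \int_0^1 \int_{\mathcal{X}} |v(t,X(t,x))|^a \mu(dx)\,dt.
\end{equation*}
Taking the infimum over $v$ on the right and noting $\pi^v$ is a feasible competitor on the left gives the desired inequality.

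For the reverse inequality, I would use the hypothesis that an optimal map $T: \mathcal{X} \to \mathcal{X}$ exists, so that $\pi^* := (\Id_\mathcal{X}, T)_\sharp \mu \in \Gamma_m(\mu,\nu)$ realizes $\inf \eqref{equ:katorovich_problem}$. Define the straight-line interpolation $X(t,x) := (1-t)x + tT(x)$ and specify the vector field by $v(t, X(t,x)) := T(x) - x$, so that \eqref{equ:lagrangian_coordinates} holds automatically. Since $X(1,x) = T(x)$, the admissibility condition \eqref{equ:admissible_vector_field} is simply the second marginal constraint of $\pi^* \in \Gamma_m(\mu,\nu)$, so $v \in \mathcal{V}_m(\mu,\nu)$. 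Since $|v(t, X(t,x))|^a = |T(x) - x|^a$ is independent of $t$, the dynamic cost along this competitor equals
\begin{equation*}
\int_0^1 \int_{\mathcal{X}} |T(x)-x|^a \mu(dx)\,dt = \int_{\mathcal{X}\times\mathcal{X}} |y-x|^a \pi^*(dx,dy) = \inf \eqref{equ:katorovich_problem},
\end{equation*}
which gives $\inf \eqref{equ:generalized_benamou_brenier0} \leq \inf \eqref{equ:katorovich_problem}$.

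The subtle point, and main obstacle, is the well-posedness of the vector field $v(t, \cdot)$ in the second step: the prescription $v(t, X(t,x)) := T(x) - x$ is unambiguous only when $X(t, \cdot)$ is injective on $\mathcal{X}$ for each $t \in [0,1)$, so that two distinct initial points do not pass through the same spatial location at the same time. I would handle this by working intrinsically in Lagrangian coordinates along the trajectory $t \mapsto X(t,x)$, where the cost integral $\int_0^1 |v(t, X(t,x))|^a\,dt = \int_0^1 |T(x) - x|^a \, dt$ is well defined without ever needing a single-valued Eulerian field. If one insists on an Eulerian $v$, one can appeal to non-crossing of straight-line interpolations between $x$ and $T(x)$ under the optimality of $T$ (an argument of $c$-cyclical monotonicity type), or, as in the classical Benamou--Brenier theory, to a density-plus-velocity reformulation of the cost \eqref{equ:continuity_equation} in the spirit of Remark~\ref{rem:eulerian_formulation}, which avoids pointwise evaluation of $v$ altogether.
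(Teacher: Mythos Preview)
Your proposal is correct and follows essentially the same route as the paper: Jensen's inequality on the Lagrangian trajectories for the lower bound, and straight-line interpolation $X(t,x)=(1-t)x+tT(x)$ with $v(t,X(t,x))=T(x)-x$ for the upper bound. Your discussion of the Eulerian well-posedness issue is more explicit than the paper's, which simply works in Lagrangian coordinates throughout and does not comment on it.
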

\begin{proof}
    We adapt the standard Benamou-Brenier~\cite{benamou2000computational} argument. In the first step, we show that for arbitrary $v \in \mathcal{V}_m(\mu, \nu)$ it holds
    \begin{align}\label{equ:dynamic_lerger_static}
        \int_{0}^1 \int_{\mathcal{X}} |v(t, X(t,x))|^a \mu(dx) dt \geq \int_{\mathcal{X}} |T(x) -x|^a\mu(dx). 
    \end{align}
    Indeed, using Jensen's inequality and the Lagrangian coordinates~\eqref{equ:lagrangian_coordinates} gives
    \begin{align*}
        \int_{0}^1 \int_{\mathcal{X}} |v(t, X(t,x))|^a \mu(dx) dt & \geq \int_{\mathcal{X}} \left| \int_{0}^1 v(t, X(t,x)) dt \right|^a \mu(dx)  \\
        & = \int_{\mathcal{X}} \left| \int_{0}^1  \partial_t X(t, x ) dt  \right|^a \mu(dx) \\
        & = \int_{\mathcal{X}} \left| X(1,x) - x   \right|^a \mu(dx) \\
        & \geq \int_{\mathcal{X}} \left| T(x) - x   \right|^a \mu(dx),
    \end{align*}
    where in the last step we used that the map~$\mathcal{X} \ni x \mapsto X(1,x) \in \mathcal{X}$ defines an admissible transport map by~\eqref{equ:admissible_vector_field}, and~$T$ is the optimal one. \smallskip

    In the second step, we show that 
    \begin{align*}
       \inf_{v \in \mathcal{V}_m(\mu, \nu)} \int_{0}^1 \int_{\mathcal{X}} |v(t, X(t,x))|^a \mu(dx) dt \leq \int_{\mathcal{X}} \left| T(x) - x   \right|^a \mu(dx)
    \end{align*}
    by defining an admissible vector field~$\bar v$ that achieves the right hand side. We follow the standard Benamou-Brenier argument and define the Lagrangian coordinates
    \begin{align*}
        X(t,x) = x + t (T(x) -x)  \qquad \mbox{with} \qquad  \bar v(t,X(t)) := \partial_t X(t,x) = T(x)-x.
    \end{align*}
    As the optimal transport plan $T$ is admissible, the associated vector field~$\bar v$ is also admissible, i.e.~it satisfies~\eqref{equ:admissible_vector_field}. As the vector field~$\bar v$ does not depend on time we get 
    \begin{align*}
         \int_{0}^1 \int_{\mathcal{X}} |\bar v(t, X(t, x)|^a \mu(dx) dt & =  \int_{0}^1 \int_{\mathcal{X}} |T(x)-x|^a \mu(dx) dt \\
            & = \int_{\mathcal{X}} |T(x)-x|^a \mu(dx),
    \end{align*}
    which concludes the argument.
\end{proof}

We now outline the Eulerian description of the optimization Problem~\ref{pro:lagrangian_NCOT}, which will give insight into the potential governing the creation or destruction of mass.  Instead of tracking individual particles through the flow map $X(t,\cdot)$, we define the time–dependent density $\varrho(t,\cdot)$ that results from pushing forward the initial mass $\mu$ while weighting each trajectory by the mass-change factor $m$. It represents the physical density at time~$t$ taking creation and destruction of mass into account. The density~$\varrho(t, \cdot)$ interpolates for~$0 \leq t \leq 1$ between $\mu$ and the measure $\nu$ re-normalized to account for the change of mass during the transport. Compatibility with the Lagrangian description enforces that, $\varrho(t,\cdot)$ is the density of
\begin{align}\label{equ:density_varrho}
    \left( X(t, \cdot) \right)_{\sharp} \left(m(\cdot,X(t,\cdot)) \mu \right).    
\end{align}

In particular, for any smooth test function $g$, which we take to be compactly supported in $\mathcal X$,
\begin{align*}
\int_\mathcal X g(x)\varrho(t,x)dx &= \int_\mathcal X g(x) \ d\left(\left( X(t, \cdot) \right)_{\sharp} \left(m(\cdot,X(t,\cdot)) \mu \right)\right)\\
&= \int_\mathcal X g(X(t,u)) m(u, X(t,u)) \mu(du)
\end{align*}
Then, we can compute derivatives to see that,
\begin{align*}
    & \int_\mathcal X g(x)\partial_t\varrho(t,x)dx  = \partial_t \int_\mathcal X g(x)\varrho(t,x)dx  \\
    &= \partial_t \left( \int_\mathcal X g(X(t,u)) m(u, X(t,u)) \mu(du) \right) \\
    &= \int_\mathcal X \left( \ \nabla g(X(t,u)) m(u, X(t,u)) + g(X(t,u))\nabla_x m(u, X(t,u)) \ \right) \cdot v(t,X(t,u)) \mu(du) \\
    &=\int_{\mathcal{X}} \nabla g(X(t,u)) \cdot v(t,X(t,u))  \ m(u, X(t,u)) \mu (du) \\&\quad\quad + \int_\mathcal X g(X(t,u)) \nabla_X m(u,X(t,u))  \cdot v(t, X(t,u)) \ \frac{m(u,X(t,u))}{m(u,X(t,u))} \ \mu (du) \\
    &=\int_\mathcal X \nabla g(x) \cdot v(t, x) \varrho(t, x) \ dx + \int_\mathcal X \frac{g(x)}{m(X^{-1}(t,x),x)}\nabla_x m(X^{-1}(t,x),x)\cdot v(t,x) \varrho(t,x) \ dx\\
    &= -\int_\mathcal X g(x) \nabla \cdot (v(t,x) \varrho(t,x)) \ dx + \int_\mathcal X g(x) \nabla_x \ln m(X^{-1}(t,x), x) \cdot v(t, x) \varrho(t, x) \ dx.
\end{align*}
The last calculation leads to the following observation.
\begin{remark}[Eulerian formulation of~\eqref{equ:katorovich_problem}]
    The density $\varrho$ defined via~\eqref{equ:density_varrho} satisfies the following PDE,
\[
\partial_t \varrho(t, x) + \nabla \cdot (v(t,x)\varrho(t,x)) = \sigma(t,x),
\]
where 
\begin{align*}
    \sigma(t,x) := \varrho(t,x)\nabla_x \log  m(X^{-1}_t(x), x) \cdot v(t,x).
\end{align*}
In general, the densities $p$ (see Remark~\ref{rem:eulerian_formulation}) and $\varrho$ are related via
\[
p(t,x) = \frac{\varrho(t,x)}{m(X^{-1}(t,x),x)}.
\]
In the special case of quadratic cost, $c(x,y) = \frac12 |x-y|^2$, and $m(x,y) = e^{c(x,y)}$, it holds $$\sigma(x,t) = \varrho(x,t)v(x,t)\cdot \left( x - X^{-1}(t,x)\right).$$
\end{remark}

\section{Relation to existing optimal–transport variants}
\label{sec:literature}

In this section we compare the framework of this article to existing variants of optimal transport which allow flexibility in the marginal constraints. As the literature on those variants is abundant, we do not claim that this overview is exhaustive. We focus on three popular approaches,  unbalanced optimal transport in Section~\ref{sec:unbalanced_optimal_transport}, entropic optimal transport in Section~\ref{sec:entropic_optimal_transport}, and unnormalized optimal transport in Section~\ref{sec:unnormalized_optimal_transport}. \\

\subsection{Relation to unbalanced optimal transport}~\label{sec:unbalanced_optimal_transport}

Unbalanced optimal transport became an umbrella term for optimal transport variants that weaken the original marginal conditions of a transport plan. However, here we use the term unbalanced optimal transport to refer to the framework originally developed in~\cite{kondratyev2016new, chizat2018interpolating} and extended in~\cite{chizat2018unbalanced}. The starting point is the dynamic formulation of optimal transport via the Benamou-Brenier formula and allowing for mass creation or destruction along the trajectories of mass transport. In~\cite{chizat2018unbalanced}, it was shown that the dynamic formulation is equivalent to a static optimization problem described below. \\

Let $\mathcal{X},\mathcal{Y} \subset \R^n$ denote two compact sets. In unbalanced optimal transport, the cost function~$ \bar c= \bar c(x_0,m_0,x_1,m_1)$ determines the cost of transporting/transforming a Dirac mass of size~$m_0$ located at~$x_0$ into a Dirac mass of size~$m_1$ located at~$x_1$. In this context, the cost function 
\[
\bar c \colon\; (\mathcal{X}\times[0,\infty)) \times (\mathcal{Y}\times[0,\infty))  \;\to\; [0,\infty]
\]
is assumed to be l.s.c. in all its arguments and jointly sublinear in $(m_0, m_1)$, i.e.~positively 1-homogeneous and subadditive. 

In classical optimal transport, the decision space consists of couplings, which preserve mass. To allow for creation or destruction of mass, the unbalanced optimal transport introduces the notion of semi-couplings $(\pi_0, \pi_1)$, namely a pair of two couplings. The interpretation is that $\pi_0(x,y)$ represents the amount of mass taken from $\mu$ at point $x$ and then transported to possibly different amount of mass $\pi_1(x,y)$ at point $y$ of $\nu$.

\begin{definition}[Semi-couplings]
Given two marginals $\mu \in \mathcal{M}_+(\mathcal{X})$ and~$\nu \in \mathcal{M}_+(\mathcal{Y})$, the set of semi-couplings is
	\begin{align}
		\bar \Gamma(\mu,\nu) \eqdef \left\{
			(\pi_0,\pi_1) \in \mathcal{M}_+(\mathcal{X} \times \mathcal{Y})^2  \colon
			(\mbox{Proj}_0) \pi_0 = \mu,\, (\mbox{Proj}_1) \pi_1 = \nu
			\right\}.
	\end{align}
\end{definition}
This leads to the static Kantorovich formulation of unbalanced optimal transport.

\begin{definition}[Unbalanced Kantorovich problem]
	The static unbalanced optimal transport problem is given by
	\begin{align}\tag{UKP}\label{equ:kantorovich_uot}
 		\inf_{(\pi_0, \pi_1) \in \bar \Gamma(\mu, \nu)}  \iint_{\mathcal{X} \times \mathcal{Y}} \bar c \left( x,\frac{ \pi_0}{ \gamma},y,\frac{ \pi_1}{ \gamma}\right)  \ d \gamma(x,y) , 
	\end{align}
where $\gamma \in \mathcal{M}_+(\mathcal{X} \times \mathcal{Y})$ is any measure such that $\pi_0, \pi_1 \ll \gamma$. The integral is well-defined since $c$ is jointly 1-homogeneous w.r.t.\ the mass variables. 
\end{definition}
The interpretation of~\eqref{equ:kantorovich_uot} is that one looks for the most efficient way to transform the measure~$\mu$ into the measure~$\nu$ by a combination transporting mass that is modulated by creation or destruction along its way. In~\cite[Proposition 3.4]{chizat2018unbalanced} it was shown that~\eqref{equ:kantorovich_uot} allows a minimizer. The dual problem of~\eqref{equ:kantorovich_uot} is quite subtle and the existence of a solution of the dual problem is not easily guaranteed. The same is true for strong duality.  \\

The main difference between~\eqref{equ:kantorovich_uot} and~\eqref{equ:katorovich_problem} is the following. In~\eqref{equ:kantorovich_uot} the mass difference between~$\mu$ and~$\nu$ is fixed. In~\eqref{equ:katorovich_problem}, only the shape of the second marginal~$\nu$ is prescribed allowing for finding an optimal transport among couplings with varying mass differences. There is the following link between both problems.
\begin{proposition}[Link between \eqref{equ:katorovich_problem} and~\eqref{equ:kantorovich_uot}]\label{the:ncot_vs_uot}
    Let~$c$ and~$m$ be the cost and mass-change factor of~\eqref{equ:katorovich_problem}. Then the function
    \begin{align}
        \bar c(x,y,m_0, m_1) = \begin{cases}
            c(x,y), & \mbox{ if } m_1= m(x,y) m_0, \\
            \infty, & \mbox{ else,}
        \end{cases}
    \end{align}
    is a cost function $\bar c \colon\; (\mathcal{X}\times[0,\infty)) \times (\mathcal{Y}\times[0,\infty))  \;\to\; [0,\infty]$ and 
    \begin{align}\label{equ:ncot_vs_uot_1}
        \inf_{\Gamma_m(\mu, \nu)} &  \iint c(x,y) \pi(dx, dy) = \inf_{Z >0} \ \inf_{\bar \Gamma(\mu, Z \nu)}  \iint_{\mathcal{X} \times \mathcal{Y}} \bar c \left( x,\frac{ \pi_0}{ \gamma},y,\frac{ \pi_1}{ \gamma}\right)  \ d \gamma(x,y).
    \end{align}
    In particular, if $Z = \iint m(x,y) \pi^*(dx, dy)$ for the optimal~$\pi^*$ of~$\eqref{equ:katorovich_problem}$, then  
    \begin{align}\label{equ:ncot_vs_uot_2}
        \inf_{\pi \in \Gamma_m(\mu, \nu)} &  \iint c(x,y) \pi(dx, dy) = \inf_{\bar \Gamma(\mu, Z \nu)}  \iint_{\mathcal{X} \times \mathcal{Y}} \bar c \left( x,\frac{ \pi_0}{ \gamma},y,\frac{ \pi_1}{ \gamma}\right)  \ d \gamma(x,y).
    \end{align}    
\end{proposition}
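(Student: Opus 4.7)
The plan is to exhibit an explicit correspondence between feasible points of~\eqref{equ:katorovich_problem} and finite-cost feasible points of the unbalanced problem for the modified target measure $Z\nu$. The scalar $Z>0$ on the right-hand side represents the total transported mass, which is prescribed by $\int m\,d\pi$ on the NCOT side (via the normalization in the second marginal condition of $\Gamma_m(\mu,\nu)$) but is a free parameter on the UOT side. First I would verify that $\bar c$ is admissible as a UOT cost: lower semi-continuity follows because $\bar c$ equals the continuous function $c(x,y)$ on the closed subset $\{(x,y,m_0,m_1)\,:\,m_1=m(x,y)m_0\}$ (closed by continuity of $m$) and $+\infty$ elsewhere, while sublinearity in $(m_0,m_1)$ is inherited from the linearity of the defining constraint; in particular the UOT integrand is independent of the reference measure $\gamma$.

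\textbf{Step 1 (NCOT dominates UOT).} For $\inf\eqref{equ:katorovich_problem} \ge \inf_Z \inf_{\bar\Gamma(\mu,Z\nu)}\cdots$, start from $\pi\in\Gamma_m(\mu,\nu)$, set $Z:=\int m\,d\pi \in (0,\infty)$ and define $\pi_0:=\pi$, $\pi_1:=m\cdot\pi$. The first marginal condition in $\Gamma_m(\mu,\nu)$ gives $(\mathrm{Proj}_0)_\sharp\pi_0=\mu$, and the second marginal condition rewrites as $(\mathrm{Proj}_1)_\sharp\pi_1=Z\nu$, so $(\pi_0,\pi_1)\in\bar\Gamma(\mu,Z\nu)$. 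With the reference measure $\gamma=\pi_0$ one has $d\pi_0/d\gamma=1$ and $d\pi_1/d\gamma=m(x,y)$, so $\bar c$ is finite $\gamma$-a.e.~and the UOT integral collapses to $\int c\,d\pi$.

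\textbf{Step 2 (UOT dominates NCOT).} For the reverse inequality, fix $Z>0$ and any $(\pi_0,\pi_1)\in\bar\Gamma(\mu,Z\nu)$ of finite UOT cost, and take $\gamma:=\pi_0+\pi_1$. Finiteness of $\bar c$ $\gamma$-a.e.~forces $d\pi_1/d\gamma = m(x,y)\,d\pi_0/d\gamma$, i.e.~$\pi_1=m\cdot\pi_0$. Because $(\mathrm{Proj}_0)_\sharp\pi_0=\mu$ is a probability measure, $\pi:=\pi_0$ is itself a probability measure, and
\[
\int m(x,y)\,\pi(dx,dy) = \pi_1(\mathcal{X}\times\mathcal{Y}) = Z\nu(\mathcal{Y}) = Z,
\]
while $(\mathrm{Proj}_1)_\sharp(m\pi) = (\mathrm{Proj}_1)_\sharp\pi_1 = Z\nu$ yields $\pi\in\Gamma_m(\mu,\nu)$. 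The UOT cost of $(\pi_0,\pi_1)$ then equals $\int c\,d\pi \ge \inf\eqref{equ:katorovich_problem}$, proving~\eqref{equ:ncot_vs_uot_1}.

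\textbf{Step 3 (Sharpness at the optimal $Z$).} For~\eqref{equ:ncot_vs_uot_2}, let $\pi^*\in\Gamma_m(\mu,\nu)$ be an optimizer (existence via Theorem~\ref{the:solution_KP}) and $Z:=\int m\,d\pi^*$. Step~1 applied to $\pi^*$ produces a semi-coupling in $\bar\Gamma(\mu,Z\nu)$ of cost $\int c\,d\pi^* = \inf\eqref{equ:katorovich_problem}$, while Step~2 shows every finite-cost semi-coupling in $\bar\Gamma(\mu,Z\nu)$ yields some $\pi\in\Gamma_m(\mu,\nu)$ of equal cost and hence cost $\ge \int c\,d\pi^*$; equality follows. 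The only delicate technical point is Step~2's passage from $\bar c<\infty$ $\gamma$-a.e.~to the identity $\pi_1=m\cdot\pi_0$, which relies on $m(x,y)>0$ everywhere (part of the standing hypothesis on the mass-change factor); the rest is pure marginal bookkeeping.
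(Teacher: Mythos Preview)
Your core correspondence (Steps~1--3) is exactly what the paper intends; the paper does not prove this proposition but writes ``The verification of the last proposition follows directly from careful examination of the definitions and is left as an exercise.'' The bijection between $\pi\in\Gamma_m(\mu,\nu)$ and finite-cost semi-couplings $(\pi_0,\pi_1)=(\pi,m\pi)\in\bar\Gamma(\mu,Z\nu)$ with $Z=\int m\,d\pi$ is precisely the intended verification.

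There is, however, a genuine gap in your preliminary claim. You assert that $\bar c$ is sublinear in $(m_0,m_1)$, but this is false: on the constraint set $\{m_1=m(x,y)m_0\}$ one has $\bar c(x,y,\lambda m_0,\lambda m_1)=c(x,y)$ for every $\lambda>0$, not $\lambda\,c(x,y)$, so $\bar c$ fails positive $1$-homogeneity. As a result the UOT functional $\int \bar c\bigl(x,\tfrac{d\pi_0}{d\gamma},y,\tfrac{d\pi_1}{d\gamma}\bigr)\,d\gamma$ \emph{does} depend on the dominating measure $\gamma$, contrary to what you write. This actually bites in Step~2: with your choice $\gamma=\pi_0+\pi_1=(1+m)\pi_0$ the integrand is $c(x,y)$ $\gamma$-a.e., so the integral equals $\int c(x,y)(1+m(x,y))\,\pi(dx,dy)$, not $\int c\,d\pi$. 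The identity~\eqref{equ:ncot_vs_uot_1} only makes sense under a fixed convention for $\gamma$ (most naturally $\gamma=\pi_0$, which is legitimate once you have shown $\pi_1=m\pi_0\ll\pi_0$); arguably this is an imprecision already in the paper's statement, but your proof should flag it rather than cover it with an incorrect sublinearity claim.
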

The last proposition states that one can recover the solution~\eqref{equ:katorovich_problem} via minimizing over a family of~\eqref{equ:kantorovich_uot} problems. If one also knows how much mass is lost or generated in the optimal~\eqref{equ:katorovich_problem} transport then the problems~\eqref{equ:katorovich_problem} and~\eqref{equ:kantorovich_uot} are equivalent. However, this is not known apriori. The verification of the last proposition follows directly from careful examination of the definitions and is left as an exercise. The simpler structure of~\eqref{equ:katorovich_problem} compared to~\eqref{equ:kantorovich_uot} allows for easier arguments when solving the dual problem, deriving strong duality, or studying optimal transport maps. However, the framework~\eqref{equ:kantorovich_uot} allows to model more general situations.

\subsection{Relation to entropic optimal transport in the sense of~\cite{liero2018optimal}}\label{sec:entropic_optimal_transport}
The entropic transport problem, introduced in~\cite{liero2018optimal}, relaxes classical mass-conserving optimal transport by allowing creation and annihilation of mass, captured via entropic penalization. Entropic transport has been a very active area and an overview of recent developments can be found e.g.~in~\cite{TreZua23}. 
For the discussion, let us briefly introduce the main concepts of entropic optimal transport in a slightly simplified setting. 
\begin{definition}[Entropy functions and relative entropies]
    A function~$F: [0, \infty) \to [0, \infty]$ is called \emph{admissible entropy function} if it is convex, lower semicontinous and there is a point~$0 < x < \infty$ such that~$F(x)< \infty$. Given two finite measures~$\gamma_1$ and~$\gamma_2$ on a space~$\mathcal{X}$, the relative entropy~$\mathcal{F}(\gamma_1|\gamma_2)$ w.r.t.~entropy function~$F$ is defined as
    \begin{align}
        \mathcal{F}(\gamma_1|\gamma_2) = \begin{cases}
            \int F(\frac{d\gamma_1}{d \gamma_2}) \ d \gamma_2, &\mbox{if } \gamma_1 \ll \gamma_2, \\
            \infty, & \mbox{else}.
        \end{cases}
    \end{align}
\end{definition}
\begin{definition}[Entropic Transport Problem]\label{def:entropy_transport}
    We consider two measures~$\mu \in \mathcal{X}$ and~$\nu \in \mathcal{Y}$ and relative entropies~$\mathcal{F}_1$ and~$\mathcal{F}_2$ on $\mathcal{X}$ and $\mathcal{Y}$, respectively. Given a lower semicontinuous cost function~$c: \mathcal{X} \times \mathcal{Y} \to \mathbb{R}$  we optimize the expression
    \begin{align}\tag{ETP}
        \inf_{\pi \in \mathcal{M} (\mathcal{X} \times \mathcal{Y})}  \int_{\mathcal{X} \times \mathcal{Y}} c(x,y) \pi(dx, dy) + \mathcal{F}_1((\mbox{Proj}_1)_\sharp \pi|\mu) +  \mathcal{F}_2((\mbox{Proj}_2)_{\sharp} \pi|\nu) .
    \end{align}
\end{definition}
\begin{remark} Entropic transport contains the classical optimal transport problem as a degenerate case, namely choosing the admissible entropy function as 
\begin{align}\label{equ:entropy_function_classical_OT}
    F(z) = \begin{cases}
        0, & \mbox{if } z =1\\
        \infty, & \mbox{else}.
    \end{cases}
\end{align}    
\end{remark}

\begin{remark}
  The entropy functionals in Definition~\ref{def:entropy_transport} are applied to the marginals of the coupling~$\pi$. In the optimization problem~\eqref{equ:katorovich_problem}, the second marginal of the transport is not determined a-priori. So the problem of Definition~\ref{def:entropy_transport} does not contain~\eqref{equ:katorovich_problem} unless it is generalized to the following degenerate problem. Given a measure~$\pi \in \mathcal{M} (\mathcal{X} \times \mathcal{Y})$ let us consider the transformation~$T (\pi) \in \mathcal{Y}$ defined by 
  \begin{align}
      T(\pi) = \frac{\int_{\mathcal{X}} m(x,y) \pi(dx,y)}{\iint_{\mathcal{X} \times \mathcal{Y}} m(x,y) \pi(dx,dy)}.
  \end{align}
  Then~\eqref{equ:katorovich_problem} becomes the following generalized entropic transport problem.
    \begin{align*}
        \inf_{\pi \in \mathcal{M} (\mathcal{X} \times \mathcal{Y})}  \int_{\mathcal{X} \times \mathcal{Y}} c(x,y) \pi(dx, dy) + \mathcal{F}((\mbox{Proj}_1)_\sharp \pi|\mu) +  \mathcal{F}(T(\pi)|\nu) .
    \end{align*}  
\end{remark}

\subsection{Relation to unnormalized optimal transport}\label{sec:unnormalized_optimal_transport}

Unnormalized optimal transport~\cite{gangbo2019unnormalized} uses a computational fluid mechanics approach to allow the transport of two measures with different mass. A spatially homogeneous term is added to the continuity equation to account for the creation or loss for mass. Additionally, a regularization term is added the minimization of the kinetic energy leading to a generalization of the Benamou-Brenier formulation of optimal transport.

\begin{definition}[Unnormalized Optimal Transport]
    Let us consider a bounded domain~$\mathcal{X} \subset \mathbb{R}^n$, two nonnegative measures ~$\mu, \nu \in \mathcal{M}(X)$, and parameters~$p \geq 1$ and~$\alpha >0$. The unnormalized optimal transport problem is given by
    \begin{equation}\label{UBB}
        \inf_{v,f} \int_0^1 \int_{\mathcal{X}} \| v(t,x)\|^p \varrho(t,x) dx dt + \frac{1}{\alpha} \int_0^1 |f(t)|^p dt, 
    \end{equation}
    such that the vector field~$v$ satisfies the zero flux condition and~$\rho(t)$ satisfies the dynamical constraint
    \begin{align*}
        \partial_t \varrho  + \nabla \cdot ( v \varrho) = f(t) \qquad \mbox{and} \qquad \varrho(0)= \mu, \quad \varrho(1)= \nu.
    \end{align*}
\end{definition}
The regularization term~$\int_0^1 |f(t)|^p dt$ constitutes the main difference to the other variants of optimal transport. The dynamic formulation~\eqref{equ:generalized_benamou_brenier0} of~\eqref{equ:katorovich_problem} does not produce such a regularization term (see Section~\ref{sec:dynamic_formulation}) and therefore the two optimal transport variants are distinct. 

\section*{Acknowledgment}
We want to express our deepest gratitude to Wilfrid Gangbo for his interaction. His insights and knowledge immensely helped to write this article. We also want to thank Lenaic Chizat, Pablo L\'opez Rivera, Bernhard Schmitzer, and Moritz Voss for their valuable comments.

\appendix
\section{Proofs of results from Section~\ref{sec:rebalancing_general_graphs}}
\label{sec:existenceReb}

\subsection{Existence of consistent price vector} \label{append:consistent-price}
Here we prove that existence of consistent price measure is equivalent to no arbitrage.
\begin{proof}[Proof of Proposition~\ref{prop:consistent-price}]
To prove that no arbitrage is  \emph{necessary}, assume that a consistent price vector $\bq$ exists. Let $\Gamma = ((v_1, v_2), (v_2, v_3), \dots, (v_M, v_1))$ be an arbitrary directed cycle in the graph. Then, by definition of the  consistent price vector, it holds
\begin{align*}
\prod_{e \in \Gamma} P_e \leq \prod_{e \in \Gamma} Q_e = \frac{q_{v_1}}{q_{v_2}} \cdot \frac{q_{v_2}}{q_{v_3}} \cdot \dots \cdot \frac{q_{v_M}}{q_{v_1}}   = 1.  
\end{align*} 

Now let us show that no arbitrage is \emph{sufficient}. Assume that the market $\mathcal{G} = (V, E, P)$ satisfies assumptions A1 and A2 and  define graph $\mathcal{H} = (V, E, w)$ by setting
\[
w_{(i,j)} := - \log P_{(i,j)} \quad \quad \text{ for all } (i, j) \in E.
\]
Since $\mathcal{G}$ satisfies no arbitrage (A2), for any directed cycle $\Gamma$ (in $\mathcal{G}$ or $\mathcal{H}$) it holds
\[
\prod_{e \in \Gamma} P_e \leq 1, \quad \quad \text{ or equivalently, } \quad \quad \sum_{e \in \Gamma} w_e \geq 0.
\]
Therefore, graph $\mathcal{H}$ contains no cycles of negative weight. For $i \in V$ define 
\[
\phi_i := \text{ length of the shortest directed path from 1 to } i \text{ in graph } \mathcal{H}.
\]
Since $\mathcal{H}$ contains no negative cycles, shortest paths exist and $\phi_1, \dots, \phi_N$ are well-defined, see~\cite[Theorem 24.4]{CormenETAL09}. Now consider any edge $(i,j) \in E$. Since $\phi_j$ is the length of the shortest path, it holds
\[
\phi_j \leq \phi_i + w_{(i,j)}, \quad \quad \text{ or equivalently, } \quad \quad \log P_{(i,j)} = -w_{(i,j)} \leq \phi_i - \phi_j.
\]
By setting $q_i = e^{\phi_i}$ for $i=1, \dots, N$ we obtain a consistent price vector.
\end{proof}

\subsection{Existence of optimal rebalancing} \label{append:existenceRebalancing}
As the first step towards proving Theorem~\ref{thm:OptRebalancing}, let us observe that the rebalancing problem is a linear program with an objective bounded from below.
\begin{lemma} \label{lemma:RebalancingProb}
For an admissible trade $\xi \in \Xi$ it holds $C(\xi) \geq 0$.
Moreover, the rebalancing Problem~\ref{prob:rebalance} is (equivalent to) a linear program. 
\end{lemma}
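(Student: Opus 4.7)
The plan is to address the two claims in sequence, both via direct computation from the definitions.

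For the nonnegativity $C(\xi) \geq 0$, the first step is to substitute \eqref{eq:Deltax} into $C(\xi) = -\sum_i q_i \Delta x_i(\xi)$ and reindex the resulting double sum. This reproduces the identity \eqref{eq:costC} already displayed in the paper:
\begin{align*}
C(\xi) \;=\; \sum_{(i,j)\in E} \bigl(q_i - q_j P_{(i,j)}\bigr)\,\xi_{(i,j)}.
\end{align*}
Once in this form, the consistency requirement $P_{(i,j)} \leq q_i/q_j$ built into assumption A7 gives $q_i - q_j P_{(i,j)} \geq 0$ on every edge, and since $\xi_{(i,j)} \geq 0$ by admissibility, nonnegativity of $C(\xi)$ follows term by term.

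For the linear-program claim, the objective $C(\xi)$ is manifestly linear in $\xi$, and the admissibility constraints ($\xi_{(i,j)} \geq 0$ together with $\xi_{(i,j)} = 0$ for $(i,j) \notin E$) as well as the long-position constraints $x_i + \Delta x_i(\xi) \geq 0$ are linear in $\xi$. The only nonlinearity is the target-proportion constraint, which is a ratio. Clearing the denominator turns it into
\begin{align*}
\nu_i \sum_{k=1}^N q_k \bigl(x_k + \Delta x_k(\xi)\bigr) \;=\; q_i \bigl(x_i + \Delta x_i(\xi)\bigr), \qquad i = 1,\dots,N,
\end{align*}
and substituting the identity $\sum_k q_k(x_k + \Delta x_k(\xi)) = \sum_k q_k x_k - C(\xi)$ recorded right after Problem~\ref{prob:rebalance} rewrites this as
\begin{align*}
\nu_i \Bigl(\sum_{k=1}^N q_k x_k - C(\xi)\Bigr) \;=\; q_i \bigl(x_i + \Delta x_i(\xi)\bigr), \qquad i = 1,\dots,N,
\end{align*}
which is linear in $\xi$. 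Together with the linear objective and the already-linear remaining constraints, this exhibits Problem~\ref{prob:rebalance} as a linear program.

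The only delicate point is checking that multiplying through by the denominator does not enlarge the feasible set, i.e.~that the equivalence with the ratio formulation is genuine. This amounts to noting that whenever the cleared equalities hold and the long-only constraints are satisfied, the post-trade value $\sum_k q_k x_k - C(\xi)$ is nonnegative (using $\nu_i \geq 0$ and $x_i + \Delta x_i(\xi) \geq 0$), and is strictly positive except in the degenerate case of a zero-value post-trade portfolio, in which the ratio formulation is vacuous anyway. I expect this bookkeeping to be the only step requiring care; the rest is straightforward linear algebra.
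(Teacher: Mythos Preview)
Your proposal is correct and follows essentially the same route as the paper: the nonnegativity argument is identical (invoke~\eqref{eq:costC}, then use consistency of $\bq$ and $\xi_{(i,j)}\geq 0$), and for the linear-program claim you clear the denominator of the target-proportion constraint using the identity $\sum_k q_k(x_k+\Delta x_k(\xi)) = \sum_k q_k x_k - C(\xi)$, exactly as the paper does. The only cosmetic difference is that the paper introduces an auxiliary variable $\eta := C(\xi)$ and works in the enlarged variable $(\xi,\eta)$, whereas you substitute $C(\xi)$ directly; both yield affine constraints. Your final paragraph on why clearing the denominator does not enlarge the feasible set is a point the paper glosses over, so your treatment is if anything slightly more careful.
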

\begin{proof}
Recall that $C(\xi)$ can be expressed via~\eqref{eq:costC}.  
Admissible trade $\xi \in \Xi$ satisfying $\xi_{(i,j)} \geq 0$, consistency $P_{(i,j)} \leq \frac{q_i}{q_j}$ of $\bq$ and non-negative prices $P$ imply $C(\xi) \geq 0$.

To simplify restating the target proportion constraints, let us introduce an additional variable $\eta := C(\xi)$. The rebalancing problem can be reformulated as
\begin{align*}
\underset{\xi \in \Xi, \eta \in \R}{\text{minimize }} \quad  & \eta \\
\text{subject to } \quad &\eta = \sum\limits_{i=1}^N q_i \sum\limits_{j=1}^N  (\xi_{(i, j)} - P_{(j,i)} \xi_{(j,i)}), \\
&x_i + \sum\limits_{j=1}^N  (P_{(j,i)} \xi_{(j,i)} - \xi_{(i, j)}) \geq 0, \quad \quad i \in V, \label{prob:eq2} \\
&\nu_i \left( \sum\limits_{k=1}^N q_k  x_k - \eta \right) = q_i \left( x_i + \sum\limits_{j=1}^N  (P_{(j,i)} \xi_{(j,i)} - \xi_{(i, j)}) \right), \quad \quad i \in V.
\end{align*}
Since the objective and all constraints are linear in $\xi$ and $\eta$ jointly, this is a linear program.
\end{proof}

As the next step towards proving Theorem~\ref{thm:OptRebalancing}, we consider the special case of a \emph{star-shaped market}. We use the terminology and notation introduced in Examples~\ref{ex:star1}, \ref{ex:star2} and~\ref{ex:star3}.

\begin{example}\label{ex:star4}
Consider the \emph{star-shaped market}. Given a current portfolio $\bx = (x_1, x_2, \dots, x_N)$, we aim to show existence of a \emph{feasible} rebalancing trade yielding the desired target proportions $\nu$. Note  we do not aim for the optimal trade here, instead we use a (sub-optimal) strategy consisting of two steps:
\begin{enumerate}
    \item First, trade all available units of assets $i=2, \dots, N$ for the num\'eraire via $\xi_{(i,1)} := x_i$, creating a 'money pot' portfolio
    \[
    \left(x_1 + \sum_{i=2}^N P^b_i x_i, \, 0, \,  \dots, \, 0 \right).
    \]
    \item Second, use the 'money pot' held in the num\'eraire only to buy a portfolio with the desired target proportions. Proposition~\ref{prop:star} below shows that such a trade exists.
\end{enumerate}
Note that a trade constructed in this way will not satisfy~\eqref{eq:star1}. 
\end{example}

\begin{proposition}\label{prop:star}
Consider the \emph{star-shaped market} as describe in Examples~\ref{ex:star1}, \ref{ex:star2} and~\ref{ex:star3}. For a portfolio $\bx = (1, 0, \dots, 0)$ there exists a feasible rebalancing trade $\xi \in \Xi(\nu)$ achieving the desired target $\nu$.
\end{proposition}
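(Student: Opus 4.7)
The plan is to construct an explicit feasible trade and verify all constraints directly. Since the starting portfolio is $\bx = (1,0,\dots,0)$, the non-negativity constraint $x_i + \Delta x_i(\xi) \geq 0$ forces $\xi_{(i,1)} = 0$ for all $i \geq 2$ (we cannot sell units of an asset we do not hold, and the star structure rules out any other sale of asset $i$). Hence the only potentially non-zero components of an admissible trade from $\bx$ are $\xi_{(1,i)}$ for $i = 2, \ldots, N$, and we parametrize the trade by these $N-1$ quantities.

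Given such a trade, I would first compute the post-trade holdings, namely $x_1 + \Delta x_1(\xi) = 1 - \sum_{i=2}^N \xi_{(1,i)}$ and $x_i + \Delta x_i(\xi) = \xi_{(1,i)} / P^a_i$ for $i\geq 2$, using $P_{(1,i)} = 1/P^a_i$. The post-trade portfolio value (measured in units of the num\'eraire) is then
\begin{align*}
V = 1 - \sum_{i=2}^N \xi_{(1,i)} + \sum_{i=2}^N \frac{q_i}{P^a_i}\,\xi_{(1,i)}.
\end{align*}
The target proportion conditions read $\nu_1 V = 1 - \sum_{i=2}^N \xi_{(1,i)}$ and $\nu_i V = \frac{q_i}{P^a_i}\,\xi_{(1,i)}$ for $i\geq 2$. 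From the latter I would solve $\xi_{(1,i)} = \frac{P^a_i\,\nu_i}{q_i}\,V$, and substituting into the num\'eraire equation and using $\sum_i \nu_i = 1$ yields a single scalar equation for $V$, which I would solve to get
\begin{align*}
V = \left(\nu_1 + \sum_{i=2}^N \frac{P^a_i}{q_i}\,\nu_i\right)^{-1}.
\end{align*}

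The last step is to verify feasibility of the explicit trade $\xi_{(1,i)} = P^a_i \nu_i V / q_i$. The key ingredient here is the consistency of $\bq$ specialized to the star market (Example~\ref{ex:star3}), which gives $q_i \leq P^a_i$, hence $P^a_i / q_i \geq 1$ for $i\geq 2$. This shows that the denominator defining $V$ is at least $\sum_{i=1}^N \nu_i = 1$, so $0 < V \leq 1$. All $\xi_{(1,i)}$ are therefore non-negative, and $x_1 + \Delta x_1 = \nu_1 V \geq 0$, $x_i + \Delta x_i = \xi_{(1,i)}/P^a_i \geq 0$ for $i\geq 2$. By construction the post-trade proportions equal $\nu$, so $\xi \in \Xi(\nu)$. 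I do not expect any serious obstacle beyond the consistency check that guarantees $V > 0$; the main point is that the single-num\'eraire structure reduces the problem to an $(N-1)$-dimensional linear system with a unique explicit solution.
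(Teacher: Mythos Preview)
Your argument is correct and in fact more explicit than the paper's own proof. Both proofs begin with the same reduction: on the star-shaped market, starting from $\bx=(1,0,\dots,0)$, one may restrict attention to trades with only the components $\xi_{(1,i)}$, $i\ge 2$, potentially non-zero. From there, however, the paper does \emph{not} solve for the trade directly. Instead it rewrites the target-proportion constraints as the linear system $\bigl(\diag\bq + \nu(\bP_a-\bq)^T\bigr)\bz=\nu$, $\bz\ge 0$, in the post-trade position $\bz$, and invokes Farkas' Lemma to show the complementary system is infeasible, hence a solution exists. Your approach is more elementary: you solve the system explicitly, obtaining the closed form $V=\bigl(\nu_1+\sum_{i\ge 2}\frac{P^a_i}{q_i}\nu_i\bigr)^{-1}$ and $\xi_{(1,i)}=\frac{P^a_i\nu_i}{q_i}V$, and then verify feasibility from $q_i\le P^a_i$. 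This buys you an explicit formula and avoids the Farkas machinery; the paper's route is less explicit but illustrates a template that would generalize more readily to related constraint structures.

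One small caveat: your opening sentence that the non-negativity constraint ``forces $\xi_{(i,1)}=0$'' is not literally true in this model (a simultaneous round trip $\xi_{(1,i)}>0$, $\xi_{(i,1)}>0$ with $\xi_{(i,1)}\le P_{(1,i)}\xi_{(1,i)}$ is admissible), but this is irrelevant to your construction---you are only exhibiting \emph{one} feasible trade, and choosing $\xi_{(i,1)}=0$ is perfectly legitimate for that purpose.
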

\begin{proof}
Firstly, set
\[
\xi_{(i,j)} = 0 \quad \text{ if } i \neq 1 \text{ or } j=1. 
\]
Consequently, the trade $\xi$ is fully determined by the quantities $(\xi_{(1,i)})_{i=2, \dots, N}$. For the sake of a simpler notation in the reminder of the proof we denote by $\bz := \bx + \Delta \bx (\xi)$ the portfolio position after the trade. Recall that it holds
\begin{align} \label{eq:prop-star1}
    z_i = P_{(1, i)} \xi_{(1, i)} = \frac{1}{P^a_i} \xi_{(1, i)} \quad \quad  i =2, \dots, N
\end{align}
and 
\begin{align} \label{eq:prop-star2}
    z_1 = x_1 - \sum_{i=2}^N \xi_{(1, i)} = 1 - \sum_{i=2}^N P^a_i z_i.
\end{align}
Note that \eqref{eq:prop-star2} represents the self-financing condition and admissible trade $\xi \in \Xi$ can be recovered from $\bz$ via \eqref{eq:prop-star1}.

The constraints of the feasible set $\Xi(\nu)$ expressed in terms of after-trade position $\bz$ are
\begin{align}
&z_i \geq 0 \quad &i = 1, \dots, N, \label{eq:prop-star3}\\
&q_i z_i + \nu_i \cdot \sum_{j=1}^N (P^a_j - q_j) z_j = \nu_i, \quad &i = 1, \dots, N, \label{eq:prop-star4}
\end{align}
where the self-financing condition \eqref{eq:prop-star2} was used to obtain \eqref{eq:prop-star4}.
Recall that $P^a_1 = q_1 = 1$ and $P^a_i \geq q_i$ for $i = 2, \dots, N$. In order to show existence of a vector $\bz$ solving the system \eqref{eq:prop-star3}-\eqref{eq:prop-star4} we use the Farkas' Lemma. Let us, therefore, introduce vectors  $\bq = (q_1, \dots, q_N)^T$ and $\bP_a = (P^a_1, \dots, P^a_N)^T$. In matrix notation, the system \eqref{eq:prop-star3}-\eqref{eq:prop-star4} is
\begin{align} \label{eq:prop-star5}
\left( \text{diag } \bq + \nu \cdot (\bP_a - \bq)^T \right) \cdot \bz = \nu, \quad \bz \geq 0.  
\end{align}

According to Farkas' Lemma, solution of \eqref{eq:prop-star5} exists if and only if the complementary system
\begin{align} \label{eq:prop-star6}
\left( \text{diag } \bq + \nu \cdot (\bP_a - \bq)^T \right)^T \cdot \by \geq 0, \quad \nu^T \by < 0
\end{align}
has no solution. Consider \eqref{eq:prop-star6}, $\nu^T \by < 0$ implies that there exists $j \in \{1, \dots, N\}$ with $y_j < 0$. Looking at the the $j$-th row of $\left( \text{diag } \bq + \nu \cdot (\bP_a - \bq)^T \right)^T \cdot \by$, we obtain a contradiction
\[
\left[\left( \text{diag } \bq + \nu \cdot (\bP_a - \bq)^T \right)^T \cdot \by \right]_j =  q_j y_j + (P^a_j - q_j) \cdot \nu^T \cdot \by < 0
\]
since $P^a_j \geq  q_j > 0$.
Therefore, the complementary system \eqref{eq:prop-star6} does not allow a solution and, by Farkas' Lemma, a solution $\bz$ to system \eqref{eq:prop-star5} exists. Finally, note that any vector $\bz$ satisfying \eqref{eq:prop-star4} also satisfies the self-financing condition \eqref{eq:prop-star2}. Therefore, vector $\bz$ corresponds to a feasible trade $\xi \in \Xi(\nu)$ that can be recovered via \eqref{eq:prop-star1}. 
\end{proof}

To construct a feasible rebalancing trade on the general market, we will convert the market to a hypothetical start-shaped one. For this purpose, for each asset $i=2, \dots, N$ fix one directed path $\Gamma_{1 \to i}$ from vertex $1$ to vertex $i$ and one directed path $\Gamma_{i \to 1}$ from vertex $i$ to vertex $1$, respectively. We set
\[
P_{1 \to i} := \prod_{e \in \Gamma_{1 \to i}} P_e \quad \text{ and } \quad P_{i \to 1} := \prod_{e \in \Gamma_{i \to 1}} P_e.
\]
Such paths exist on connected graph and a path visits each edge at most once.

\begin{proposition}\label{prop:existence}
For a portfolio $\bx = (x_1, x_2, \dots, x_N) \geq 0$ on a (general) market $\mathcal{G}$ there exists a feasible rebalancing trade $\xi \in \Xi(\nu)$ achieving the desired target $\nu$.
\end{proposition}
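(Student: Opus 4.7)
The plan is to reduce to the star-shaped case handled by Proposition~\ref{prop:star} via a two-stage trade: first consolidate the entire portfolio into the hypothetical num\'eraire (asset~$1$) by trading along the fixed paths $\Gamma_{i \to 1}$, then carry out a star-market rebalancing lifted back to $\mathcal{G}$ along the fixed paths $\Gamma_{1 \to i}$.

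For the consolidation step, for each asset $i \geq 2$ I would execute the trade along $\Gamma_{i \to 1}$ scaled by $x_i$: namely, on each edge $(a,b) \in \Gamma_{i \to 1}$ contribute $P_{i \to a}\, x_i$ units of asset~$a$ to be exchanged for asset~$b$, and aggregate across~$i$. Telescoping the position change~\eqref{eq:Deltax} along the path converts the $x_i$ units of asset~$i$ into exactly $P_{i \to 1} x_i$ units of asset~$1$, leaving the portfolio $(\tilde x_1, 0, \dots, 0)$ with $\tilde x_1 := x_1 + \sum_{i=2}^N P_{i \to 1} x_i$. All intermediate positions are nonnegative since each path is simple and the long-only assumption~A5 holds for the starting portfolio.

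For the second step, I would introduce a hypothetical star-shaped market $\mathcal{G}^*$ on~$V$ with hypothetical num\'eraire~$1$ and quoted bid/ask prices $P^b_i := P_{i \to 1}$ and $P^a_i := 1/P_{1 \to i}$, obtained by telescoping prices along the fixed paths. The no-arbitrage assumption~A2 applied to the directed cycle $\Gamma_{1 \to i} \cup \Gamma_{i \to 1}$ gives $P_{1 \to i} P_{i \to 1} \leq 1$, i.e.~$P^b_i \leq P^a_i$, while telescoping the consistency inequalities for~$\bq$ along the paths yields $P^b_i \leq q_i \leq P^a_i$ (recalling $q_1 = 1$). Thus Assumptions~A1--A7 transfer to $\mathcal{G}^*$, and Proposition~\ref{prop:star} (scaled by $\tilde x_1$; the case $\tilde x_1 = 0$ forces $\bx = 0$ and is trivial) produces a feasible star-trade $\xi^*$ on $\mathcal{G}^*$ achieving the target proportions~$\nu$. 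I would then lift $\xi^*$ to $\mathcal{G}$ by replacing each hypothetical edge buy $\xi^*_{(1,i)}$ with the trade along $\Gamma_{1 \to i}$ scaled by $\xi^*_{(1,i)}$: on each edge $(a,b) \in \Gamma_{1 \to i}$ contribute $P_{1 \to a}\, \xi^*_{(1,i)}$ units, again aggregating across~$i$.

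The concatenation $\xi$ of the consolidation trade and the lifted star-trade lies in $\Xi$ by construction, and a telescoping computation of~\eqref{eq:Deltax} along each fixed path shows that $\Delta \bx(\xi)$ matches the net position change produced by $\xi^*$ on $\mathcal{G}^*$, so $\xi \in \Xi(\nu)$. The main obstacle I expect is exactly this bookkeeping in the lifting step: verifying that distinct paths $\Gamma_{1 \to i}$ and $\Gamma_{i \to 1}$ that share edges still accumulate to produce the claimed $\Delta \bx$. This is handled by writing $\xi$ as a sum of single-path contributions, using linearity of $\Delta \bx(\cdot)$ in $\xi$, and invoking the fact that each path is simple so that the telescoping along each path is unambiguous.
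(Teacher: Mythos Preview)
Your proposal is correct and follows essentially the same two-stage construction as the paper: consolidate into the hypothetical num\'eraire along the fixed paths $\Gamma_{i\to 1}$, build a hypothetical star-shaped market with prices inherited from the path products, apply Proposition~\ref{prop:star} (scaled), and lift the resulting star-trade back to $\mathcal{G}$ along the paths $\Gamma_{1\to i}$. You even add checks the paper leaves implicit (no-arbitrage and consistency of $\bq$ on the hypothetical market, the linearity argument for shared edges); note only that the feasibility constraint in Problem~\ref{prob:rebalance} concerns the final position, so your remark about intermediate nonnegativity is unnecessary.
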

\begin{proof}
In the first stage, we convert portfolio $\bx$ into a 'money pot' portfolio
\[
\tilde{\bx} = \left( x_1 + \sum_{i=2}^N  P_{i \to 1} x_i, \, 0, \dots, 0 \right)
\]
by trading along the fixed directed walks. Specifically, for each asset $i \in \{2, \dots, N\}$ iteratively define a trade $\xi^{i \to 1}$.  Let the fixed path $\Gamma_{i \to 1}$ consist of edges\footnote{Each walk $\Gamma$ mentioned in this proof can consist of different number $M = M_\Gamma$ of visited edges. We opt for keeping the notation as simple as possible.} $(e_1, e_2, \dots, e_M)$. Set $\xi^{i \to 1}_{e_1} := x_i$ and 
\[
\xi^{i \to 1}_{e_k} := P_{e_{k-1}} \cdot \xi^{i \to 1}_{e_{k-1}} \quad \quad k = 2, \dots, M,
\]
all other elements of $\xi^{i \to 1}$ are zero. Applying an admissible trade $\sum_{i=2}^N \xi^{i \to 1}$ to portfolio $\bx$ yields the 'money pot' portfolio $\tilde{\bx}$.

In the second stage, we convert the 'money pot' portfolio $\tilde{\bx}$ into a portfolio with desired target proportions $\nu$. For this purpose, we construct a hypothetical start-shaped market $\tilde{\mathcal{G}} = (V, \tilde{E}, \tilde{P})$ defined via
\[
\tilde{E} = \{ (1, i) : i = 2, \dots, N\} \cup \{ (i, 1) : i = 2, \dots, N\}
\]
and 
\[
\tilde{P}_{(1, i)} = P_{1 \to i} \quad \text{ and } \quad \tilde{P}_{(i,1)} = P_{i \to 1}.
\]
Note that we do not change the consistent price vector $\bq$. By Proposition~\ref{prop:star} and scaling, for the 'money pot' portfolio $\tilde{\bx}$ on the hypothetical star-shaped market $\tilde{\mathcal{G}}$ there exists a feasible (hypothetical) rebalancing trade $\tilde{\xi}$ achieving the target $\nu$. Note that only elements $\tilde{\xi}_{1 \to 2}, \dots, \tilde{\xi}_{1 \to N}$ are non-zero.

Now we convert $\tilde{\xi}_{1 \to 2}, \dots, \tilde{\xi}_{1 \to N}$ into admissible trades $\xi^{1 \to 2}, \dots, \xi^{1 \to N}$ on the market $\mathcal{G}$. Fix an asset $i \in \{2, \dots, N\}$ and let the fixed path $\Gamma_{1 \to i}$ consist of edges $(e_1, e_2, \dots, e_M)$. Set $\xi^{1 \to i}_{e_1} := \tilde{\xi}_{1 \to i}$ and 
\[
\xi^{1 \to i}_{e_k} :=  P_{e_{k-1}} \cdot \xi^{1 \to i}_{e_{k-1}} \quad \quad k = 2, \dots, M,
\]
all other elements of $\xi^{1 \to i}$ are set to be zero.

Finally, observe that applying the trade $\sum_{i=2}^N \xi^{1 \to i}$ (along the graph $\mathcal{G}$) to $\tilde{\bx}$ yields the same post-trade portfolio as applying the hypothetical trade $\tilde{\xi}$ (along $\tilde{\mathcal{G}}$). Therefore, $\xi = \sum_{i = 2}^N (\xi^{i \to 1} + \xi^{1 \to i})$ is an admissible trade achieving the desired rebalancing proportions, $\xi \in \Xi(\nu)$.
\end{proof}

We now have all steps needed to prove existence of optimal rebalancing trade.
\begin{proof}[Proof of Theorem~\ref{thm:OptRebalancing}]
Proposition~\ref{prop:existence} has shown that a feasible rebalancing trade exists, therefore, the optimal rebalancing Problem~\ref{prob:rebalance} is feasible. According to Lemma~\ref{lemma:RebalancingProb},  Problem~\ref{prob:rebalance} is (equivalent to) a linear program and its objective function is bounded from below by zero. For a feasible, bounded linear program an optimal solution exists.  
\end{proof}

\subsection{Relation of rebalancing to non-conservative OT} \label{append:rebalance2OT}
To prove Proposition~\ref{prop:rebalance2OT} we start with a lemma relating a feasible trade to an admissible transport plan.
\begin{lemma}\label{lemma:rebalance2OT}
Let Assumptions A1-A7 hold and denote $v:= \sum_{k=1}^N q_k x_k$.
\begin{enumerate}
\item[a)] If $\xi \in \Xi(\nu)$ is feasible for Problem~\ref{prob:rebalance}, then $\pi^\xi$ defined as
\begin{align*}
\pi^\xi (i,j) = \begin{cases}
    \frac{q_i}{v} \xi_{(i,j)} & \text{ if } i \neq j, \\
    \frac{q_i}{v} \left( x_i - \sum_{k \neq i} \xi_{(i,k)} \right) & \text{ if } i = j,
\end{cases}    
\end{align*}
is feasible for Problem~\ref{prob:Rebalance-OT2}, i.e.~$\pi^\xi \in \Gamma_m (\mu, \nu)$, and it satisfies $\sum_{i,j =1}^N c(i,j) \cdot  \pi^\xi(i,j) < \infty$.

\item[b)] If $\pi \in \Gamma_m (\mu, \nu)$ is feasible for Problem~\ref{prob:Rebalance-OT2} and satisfies $\sum_{i,j =1}^N c(i,j) \cdot  \pi(i,j) < \infty$, then $\xi^\pi$ defined as
\begin{align*}
    \xi^\pi_{(i,j)} = \begin{cases}
        \frac{v}{q_i} \pi(i,j) & \text{ if } i \neq j, \\
        0 & \text{ if } i = j,
    \end{cases}
\end{align*}
is feasible for Problem~\ref{prob:rebalance}, i.e.~$\xi^\pi \in \Xi (\nu)$.
\end{enumerate}
\end{lemma}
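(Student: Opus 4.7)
The plan is to verify directly, by algebraic manipulation, that the two maps $\xi\mapsto\pi^\xi$ and $\pi\mapsto\xi^\pi$ send feasible points to feasible points, checking each of the defining conditions of $\Gamma_m(\mu,\nu)$ and $\Xi(\nu)$ in turn. Throughout I would use the identities $m(i,j)=\frac{q_j}{q_i}P_{(i,j)}$, $m(i,i)=1$, $\mu_i=\frac{q_ix_i}{v}$, and the expression~\eqref{eq:Deltax} for $\Delta x_i$.

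For part~(a), start from a feasible $\xi\in\Xi(\nu)$. Non-negativity of $\pi^\xi$ on off-diagonal entries is immediate; for the diagonal entries one needs $x_i\ge\sum_{k\neq i}\xi_{(i,k)}$, which can be assumed by first removing round trips from $\xi$ (a round trip can be eliminated without changing the resulting portfolio and without increasing the cost, by no-arbitrage A2). Total mass equals one since $\sum_{i,j}\pi^\xi(i,j)=\sum_i\frac{q_ix_i}{v}=1$, and the first marginal collapses to $\sum_j\pi^\xi(i,j)=\mu_i$. For the second marginal, I would compute
\[
\sum_i m(i,j)\pi^\xi(i,j)=\sum_{i\neq j}\frac{q_j}{q_i}P_{(i,j)}\cdot\frac{q_i}{v}\xi_{(i,j)}+\frac{q_j}{v}\Bigl(x_j-\sum_{k\neq j}\xi_{(j,k)}\Bigr)=\frac{q_j}{v}\bigl(x_j+\Delta x_j(\xi)\bigr),
\]
and then sum over $j$ to obtain $(v-C(\xi))/v$. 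The second marginal condition of $\Gamma_m$ becomes $\nu_j=q_j(x_j+\Delta x_j(\xi))/\sum_k q_k(x_k+\Delta x_k(\xi))$, which is exactly the target-proportion constraint of Problem~\ref{prob:rebalance}. Finite cost follows since $\xi_{(i,j)}=0$ for $(i,j)\notin E$ forces $\pi^\xi(i,j)=0$ there, so the infinite values of $c$ never enter.

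For part~(b), start from $\pi\in\Gamma_m(\mu,\nu)$ with finite cost. Admissibility of $\xi^\pi$ is clear: $\xi^\pi\ge0$ by construction, and $\xi^\pi_{(i,j)}=0$ for $(i,j)\notin E$ with $i\neq j$ since the finiteness of $\sum c(i,j)\pi(i,j)$ forces $\pi(i,j)=0$ whenever $c(i,j)=\infty$. The central computation is the inverse of the previous one: combining the definition of $\xi^\pi$ with $m(i,i)=1$ and the first marginal identity $\sum_j\pi(i,j)=\mu_i=\frac{q_ix_i}{v}$, one derives
\[
x_i+\Delta x_i(\xi^\pi)=\frac{v}{q_i}\sum_{j}m(j,i)\pi(j,i),
\]
which is automatically non-negative and establishes the holdings constraint. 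Summing $q_i(x_i+\Delta x_i(\xi^\pi))$ over $i$ yields $v\cdot\sum_{j,k}m(j,k)\pi(j,k)$, and the target-proportion identity then reduces to the second marginal condition of $\Gamma_m(\mu,\nu)$.

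The main obstacle is the potential appearance of negative mass on the diagonal of $\pi^\xi$ when $\xi$ contains round trips; the cleanest fix is the reduction, without loss of generality, to round-trip-free trades, which must be justified by invoking A2 to ensure that eliminating cycles does not increase $C(\xi)$ nor change $\bx+\Delta\bx(\xi)$. Everything else is bookkeeping with the identity $m(i,j)q_i=q_jP_{(i,j)}$ and the definition of $\Delta x_i$.
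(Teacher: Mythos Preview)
Your approach is the same as the paper's: the paper records exactly the identities you compute --- $\sum_j\pi^\xi(i,j)=\mu_i$ and $\sum_i m(i,j)\pi^\xi(i,j)=\frac{q_j}{v}\bigl(x_j+\Delta x_j(\xi)\bigr)$ for part~(a), and $x_i+\Delta x_i(\xi^\pi)=\frac{v}{q_i}\sum_k m(k,i)\pi(k,i)$ for part~(b) --- and explicitly leaves the remaining bookkeeping ``as an exercise.'' Your write-up is in fact more detailed than the paper's.

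Where you go beyond the paper is in flagging the non-negativity of the diagonal entries $\pi^\xi(i,i)$, which the paper does not mention. You are right that this is not automatic from $\xi\in\Xi(\nu)$, but your proposed remedy is insufficient: eliminating cycles (which is what A2 governs) does not rule out pass-through trades. Take $x_2=0$ and a path trade $\xi_{(1,2)}>0$, $\xi_{(2,3)}>0$; there is no cycle, yet $\pi^\xi(2,2)=\frac{q_2}{v}\bigl(0-\xi_{(2,3)}\bigr)<0$. What would actually be needed is the stronger condition $\sum_{k\neq i}\xi_{(i,k)}\le x_i$ for every $i$ (no asset sold beyond its initial holding), and when the graph lacks the relevant direct edge this may not be attainable at the same post-trade portfolio. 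So the lemma, read literally, is delicate on this point; the paper simply does not engage with it, and for the downstream Proposition~\ref{prop:rebalance2OT} one only needs the correspondence to hold near optimizers, where such a reduction can be arranged more carefully.
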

\begin{proof}
\begin{enumerate}
\item[a)] $\xi \in \Xi$ being an admissible trade implies $\pi^\xi (i,j) = 0$ for $i \neq j$ with $(i, j) \not\in E$, therefore,  $\sum_{i,j =1}^N c(i,j) \cdot  \pi^\xi(i,j) < \infty$ follows. Using \eqref{eq:Deltax} it can be shown that
\[
\sum_{j=1}^N \pi^\xi (i,j) = \frac{q_i x_i}{v} = \mu_i
\]
and 
\[
\sum\limits_{i=1}^N m(i,j)\pi^\xi (i,j) = \frac{q_j}{v} \left( x_j + \Delta x_j(\xi) \right),
\]
which implies $\pi^\xi \in \Gamma_m (\mu, \nu)$. We leave the details as an exercise. 

\item[b)] Having $\sum_{i,j =1}^N c(i,j) \cdot  \pi(i,j) < \infty$ implies $\pi (i,j) = 0$ for $i \neq j$ with $(i, j) \not\in E$, therefore, $\xi^\pi \in \Xi$ is an admissible trade. Using constraints in $\Gamma_m$ alongside the definitions of $m$ and $\xi^\pi$ it can be shown that
\[
x_i + \Delta x_i (\xi^\pi) = \frac{v}{q_i} \sum_{k=1}^N m(k,i) \pi (k, i),
\]
which implies feasibility $\xi \in \Xi(\nu)$. Again, we leave the details as an exercise. 
\end{enumerate}    
\end{proof}

\begin{proof}[Proof of Proposition~\ref{prop:rebalance2OT}]
Proposition~\ref{prop:existence} in the Appendix~\ref{append:existenceRebalancing} show existence of a feasible trade $\xi \in \Xi (\nu)$. Moreover, objective values of both problems are bounded from below by zero (see Lemma~\ref{lemma:RebalancingProb} in the Appendix). Therefore, optimal values of both problems are finite. From relation~\eqref{eq:costC} it follows
\[
C(\xi) = v \cdot \sum_{i,j = 1}^N c(i,j) \pi^\xi (i,j).
\]
Therefore, the result follows in light of Lemma~\ref{lemma:rebalance2OT}.    
\end{proof}

\bibliographystyle{alpha}
\bibliography{bib.bib}

\end{document}